\newtheorem{definition}{Definition}
\newtheorem{theorem}{Theorem}
\newtheorem{lemma}{Lemma}
\newtheorem{corollary}{Corollary}
\newcommand{\HM}{\texttt{HM}}
\newcommand{\Var}{\text{\it Var}}
\def\header{\vspace{2.5mm} \noindent}
\newcommand{\Lap}{\text{\it Lap}}
\newcommand{\E}{\mathbb{E}}
\newcommand{\MaxVar}{\text{\it MaxVar}}
\newcommand{\pdf}{\text{\it pdf}}
\newcommand{\p}{\prime}
\newcommand{\eat}[1]{}
\begin{document}

\title{Collecting and Analyzing Multidimensional Data with Local Differential Privacy}

\author{Ning Wang$^{1}$, Xiaokui Xiao$^2$, Yin Yang$^3$, Jun Zhao$^4$, Siu Cheung Hui$^4$, Hyejin Shin$^5$, Junbum Shin$^5$, Ge Yu$^{6}$
\vspace{1.8mm}
\\
\fontsize{10}{10}\selectfont\itshape
$^1$School of Information Science and Engineering, Ocean University of China\\
$^2$School of Computing, National University of Singapore, Singapore\\
$^3$Division of Information and Computing Techologies, College of Science and Engineering, Hamad Bin Khalifa University, Qatar\\
$^4$School of Computer Science and Engineering, Nanyang Technological University, Singapore\\
$^5$Samsung Research, Samsung Electronics, Korea\\
$^6$School of Computer Science and Engineering, Northeastern University, China\vspace{1.8mm}
\\
\fontsize{9}{9}\selectfont\ttfamily\upshape
$^1$wangning8687@ouc.edu.cn, $^2$xkxiao@nus.edu.sg, $^3$yyang@hbku.edu.qa, $^4$\{junzhao,\,asschui\}@ntu.edu.sg, \\ $^5$\{hyejin1.shin,\,junbum.shin\}@samsung.com, $^6$yuge@mail.neu.edu.cn
}

\maketitle

\thispagestyle{fancy} \pagestyle{fancy}

\begin{abstract}

Local differential privacy (LDP) is a recently proposed privacy standard for collecting and analyzing data, which has been used, e.g., in the Chrome browser, iOS and macOS. In LDP, each user perturbs her information locally, and only sends the randomized version to an aggregator who performs analyses, which protects both the users and the aggregator against private information leaks. Although LDP has attracted much research attention in recent years, the majority of existing work focuses on applying LDP to complex data and/or analysis tasks. In this paper, we point out that the fundamental problem of collecting multidimensional data under LDP has not been addressed sufficiently, and there remains much room for improvement even for basic tasks such as computing the mean value over a single numeric attribute under LDP. Motivated by this, we first propose novel LDP mechanisms for collecting a numeric attribute, whose accuracy is at least no worse (and usually better) than existing solutions in terms of worst-case noise variance. Then, we extend these mechanisms to multidimensional data that can contain both numeric and categorical attributes, where our mechanisms always outperform existing solutions regarding worst-case noise variance. As a case study, we apply our solutions to build an \mbox{LDP-compliant} stochastic gradient descent algorithm (SGD), which powers many important machine learning tasks. Experiments using real datasets confirm the effectiveness of our methods, and their advantages over existing solutions.

\end{abstract}

\begin{IEEEkeywords}
Local differential privacy, multidimensional data, stochastic gradient descent.
\end{IEEEkeywords}


\section{Introduction} \label{sec:intro}

Local differential privacy (LDP), which has been used in well-known systems such as Google Chrome~\cite{RAPPOR2014}, Apple iOS and macOS~\cite{tang2017privacy}, and Microsoft Windows Insiders~\cite{ding2017collecting}, is a rigorous privacy protection scheme for collecting and analyzing sensitive data from individual users. Specifically, in LDP, each user perturbs her data record locally to satisfy differential privacy \cite{DworkLaplace}, and sends only the randomized, differentially private version of the record to an {\em aggregator}. The latter then performs computations on the collected noisy data to estimate statistical analysis results on the original data. For instance, in \cite{RAPPOR2014}, Google as an aggregator collects perturbed usage information from users of the Chrome browser, and estimates, e.g., the proportion of users running a particular operating system. Compared with traditional privacy standards such as differential privacy in the centralized setting \cite{DworkLaplace}, which typically assume a trusted data curator who possesses a set of sensitive records, LDP provides a stronger privacy assurance to users, as the true values of private records never leave their local devices. Meanwhile, LDP also protects the aggregator against potential leakage of users' private information (which happened to AOL\footnote{\url{https://en.wikipedia.org/wiki/AOL_search_data_leak}} and Netflix\footnote{\url{https://www.wired.com/2009/12/netflix-privacy-lawsuit/}} with serious consequences), since the aggregator never collects exact private information in the first place. In addition, LDP satisfies the strong and rigorous privacy guarantees of differential privacy; i.e., the adversary (which includes the aggregator in LDP) cannot infer sensitive information of an individual with high confidence, regardless of the adversary's background knowledge.

Although LDP has attracted much attention in recent years, the majority of existing solutions focus on applying LDP to complex data types and/or data analysis tasks, as reviewed in Section \ref{sec:related}. Notably, the fundamental problem of collecting numeric data has not been addressed sufficiently. As we explain in Section \ref{sec:numeric_existing}, in order to release a numeric value in the range $[-1, 1]$ under LDP, currently the user has only two options: (i) the classic Laplace mechanism \cite{DworkLaplace}, which injects \emph{unbounded noise} to the exact data value, and (ii) a recent proposal by Duchi~et~al.~\cite{DuchiJW18}, which releases a perturbed value that \emph{always falls outside the original data domain}, i.e., $[-1, 1]$. Further, it is non-trivial to extend these methods to handle multidimensional data. As elaborated in Section \ref{sec:multi}, a straightforward extension of a single-attribute mechanism, using the composition property of differential privacy, leads to suboptimal result accuracy. Meanwhile, the multidimensional version of~\cite{DuchiJW18}, though asymptotically optimal in terms of worst-case error, is complicated and involves a large constant. Finally, to our knowledge, there is no existing solution that can perturb multidimensional data containing both numeric and categorical data with optimal worst-case error.

This paper addresses the above challenges and makes several major contributions. First, we propose two novel mechanisms, namely Piecewise Mechanism (PM) and Hybrid Mechanism (HM), for collecting a single numeric attribute under LDP, which obtain higher result accuracy compared to existing methods. In particular, HM is built upon PM, and has a worse-case noise variance that is at least no worse  (and usually better) than existing solutions. Then, we extend both PM and HM to multidimensional data with both numeric and categorical attributes with an elegant technique that achieves asymptotic optimal error, while remaining conceptually simple and easy to implement. Further, our fine-grained analysis reveals that although both \cite{DuchiJW18} and the proposed methods obtain asymptotically optimal error bound on multidimensional numeric data, the former involves a larger constant than our solutions. Table~\ref{table-comparison} summarizes the main theoretical results in this paper, which are confirmed in our experiments.

\begin{table}[!t]
\centering
\caption{Main theoretical results comparing the proposed mechanisms PM and HM, as well as Duchi~et~al.'s solution~\cite{DuchiJW18}. The terms $\MaxVar_{\text{PM}}$, $\MaxVar_{\text{HM}}$, and $\MaxVar_{\text{Du}}$ denote the worst-case noise variance of these three methods, respectively, for perturbing a $d$-dimensional numeric tuple under $\epsilon$-local differential privacy (elaborated in Section~\ref{sec:prelim}). In addition, $\epsilon^{\#} = \textstyle  \ln \left( \frac{7+4\sqrt{7}+2\sqrt{20+14\sqrt{7}}}{9} \right) \approx 1.29$ and $\epsilon^* = \textstyle  \ln \left(  \frac{-5 + 2\sqrt[3]{6353 - 405 \sqrt{241}} \, + \, 2\sqrt[3]{6353 + 405 \sqrt{241}}}{27} \right)     \approx 0.61$.\vspace{-4pt}}
\label{table-comparison}
\begin{tabular}{|l|l|l|}
\hline
\multicolumn{2}{|l|}{Setting}                                      & Result                          \\ \hline
$d > 1$                  & $\epsilon > 0$                          & $\MaxVar_{\text{HM}}<\MaxVar_{\text{PM}}<\MaxVar_{\text{Du}}$ \\ \hline
\multirow{4}{*}{$d = 1$} & $\epsilon > \epsilon^{\#}$              & $\MaxVar_{\text{HM}}<\MaxVar_{\text{PM}}<\MaxVar_{\text{Du}}$ \\ \cline{2-3} 
                         & $\epsilon = \epsilon^{\#}$              & $\MaxVar_{\text{HM}}<\MaxVar_{\text{PM}}=\MaxVar_{\text{Du}}$ \\ \cline{2-3} 
                         & $\epsilon^{*}<\epsilon < \epsilon^{\#}$ & $\MaxVar_{\text{HM}}<\MaxVar_{\text{Du}}<\MaxVar_{\text{PM}}$ \\ \cline{2-3} 
                         & $0<\epsilon \leq \epsilon^{*}$          & $\MaxVar_{\text{HM}}=\MaxVar_{\text{Du}}<\MaxVar_{\text{PM}}$ \\ \hline 
\end{tabular}
\vspace{-15pt}
\end{table}

As a case study, using the proposed mechanisms as building blocks, we present an \mbox{LDP-compliant} algorithm for stochastic gradient descent (SGD), which can be applied to train a broad class of machine learning models based on empirical risk minimization, e.g., linear regression, logistic regression and SVM classification. Specifically, SGD iteratively updates the model based on gradients of the objective function, which are collected from individuals under LDP. Experiments using several real datasets confirm the high utility of the proposed methods for various types of data analysis tasks.

In the following, Section~\ref{sec:prelim} provides the necessary background on LDP. Sections~\ref{sec:basic} presents the proposed fundamental mechanisms for collecting a single numeric attribute under LDP, while Section~\ref{sec:multi} describes our solution for collecting and analyzing multidimensional data with both numeric and categorical attributes. Section \ref{sec:riskmini} applies our solution to common data analytics tasks based on SGD, including linear regression, logistic regression, and support vector machines (SVM) classification. Section \ref{sec:exp} contains an extensive set of experiments. Section \ref{sec:related} reviews related work. Finally, Section \ref{sec:conclusion} concludes the paper.

\section{Preliminaries} \label{sec:prelim}

In the problem setting, an \emph{aggregator} collects data from a set of \emph{users}, and computes statistical models based on the collected data. The goal is to maximize the accuracy of these statistical models, while preserving the privacy of the users. Following the local differential privacy model \cite{RAPPOR2014, BS15, DuchiJW18}, we assume that the aggregator already knows the identities of the users, but not their private data. Formally, let $n$ be the total number of users, and $u_i$ ($1 \leq i \leq n$) denote the $i$-th user. Each user $u_i$'s private data is represented by a tuple $t_i$, which contains $d$ attributes $A_1, A_2, \ldots, A_d$. These attributes can be either numeric or categorical. Without loss of generality, we assume that each numeric attribute has a domain $[-1, 1]$, and each categorical attribute with $k$ distinct values has a discrete domain $\{1, 2, \ldots, k\}$.

To protect privacy, each user $u_i$ first perturbs her tuple $t_i$ using a randomized \emph{perturbation function} $f$. Then, she sends the perturbed data $f(t_i)$ to the aggregator instead of her true data record $t_i$. 
Given a privacy parameter $\epsilon > 0$ that controls the privacy-utility tradeoff, we require that $f$ satisfies $\epsilon$-\emph{local differential privacy} ({\em $\epsilon$-LDP}) \cite{RAPPOR2014}, defined as follows:
\begin{definition}[$\epsilon$-local differential privacy]\label{def:ldp}
A randomized function $f$ satisfies $\epsilon$-local differential privacy if and only if for any two input tuples $t$ and $t^\prime$ in the domain of $f$, and for any output $t^*$ of $f$, we have:
 \begin{align}
\Pr\big[f(t) = t^*\big] \leq \exp(\epsilon) \cdot \Pr\big[f(t^\prime) = t^*\big]. \label{eqn:def:LDP}
 \end{align}
\end{definition}

The notation $\Pr[\cdot]$ means probability. If $f$'s output is continuous, $\Pr[\cdot]$ in (\ref{eqn:def:LDP}) is replaced by the probability density function. Basically, local differential privacy is a special case of differential privacy \cite{DworkR14} where the random perturbation is performed by the users, not by the aggregator. 
According to the above definition, the aggregator, who receives the perturbed tuple $t^*$, cannot distinguish whether the true tuple is $t$ or another tuple $t'$ with high confidence (controlled by parameter $\epsilon$), regardless of the background information of the aggregator. This provides \textit{plausible deniability} to the user~\cite{cormode2018privacy}. 

We aim to support two types of analytics tasks under \mbox{$\epsilon$-LDP}: (i) mean value and frequency estimation and (ii) machine learning models based on empirical risk minimization. In the former, for each numerical attribute $A_j$, we aim to estimate the mean value of $A_j$ over all $n$ users, $\frac{1}{n} \sum_{i=1}^n {t_i[A_j]}$. For each categorical attribute $A_j'$, we aim to estimate the frequency of each possible value of $A_j'$. Note that value frequencies in a categorical attribute $A_j'$ can be transformed to mean values once we expand $A_j'$ into $k$ binary attributes using one-hot encoding.
Regarding empirical risk minimization, we focus on three common analysis tasks: linear regression, logistic regression, and support vector machines (SVM) \cite{CortesV95}.

Unless otherwise specified, all expectations in this paper are taken over the random choices made by the algorithms considered. We use $\E[\cdot]$ and $\Var[\cdot]$ to denote a random variable's expected value and variance, respectively.


\section{Collecting A Single Numeric Attribute} \label{sec:basic}

This section focuses on the problem of estimating the mean value of a numeric attribute by collecting data from individuals under $\epsilon$-LDP. Section \ref{sec:numeric_existing} reviews two existing methods, Laplace Mechanism \cite{DworkLaplace} and Duchi~et~al.'s solution \cite{DuchiJW18}, and discusses their deficiencies. Then, Section \ref{sec:PM} describes a novel solution, called Piecewise Mechanism (PM), that addresses these deficiencies and usually leads to higher (or at least comparable) accuracy than existing solutions. Section \ref{sec:HM} presents our main proposal, called Hybrid Mechanism (HM), whose worst-case result accuracy is no worse than PM and existing methods, and is often better than all of them.

\subsection{Existing Solutions}\label{sec:numeric_existing}

\header
\textbf{Laplace mechanism and its variants.} A classic mechanism for enforcing differential privacy is the \emph{Laplace Mechanism} \cite{DworkLaplace}, which can be applied to the LDP setting as follows. For simplicity, assume that each user $u_i$'s data record $t_i$ contains a single numeric attribute whose value lies in range $[-1, 1]$. In the following, we abuse the notation by using $t_i$ to denote this attribute value. Then,
we define a randomized function that outputs a perturbed record $ t^*_i = t_i + \Lap\left(\frac{2}{\epsilon}\right),$
 where $\Lap(\lambda)$ denotes a random variable that follows a Laplace distribution of scale $\lambda$, with the following probability density function: $\pdf(x) = \frac{1}{2\lambda} \exp\left(-\frac{|x|}{\lambda}\right).$


Clearly, this estimate $t^{*}_i$ is unbiased, since the injected Laplace noise $\Lap\left(\frac{2}{\epsilon}\right)$ in each $t^*_i$ has zero mean. In addition, the variance in $t^{*}_i$ is $\frac{8}{\epsilon^2}$.
Once the aggregator receives all perturbed tuples, it simply \vspace{1pt} computes their average $\frac{1}{n} \sum_{i = 1}^n t^*_i$ as an estimate of the mean with error scale $O\big(\frac{1}{\epsilon\sqrt{n}}\big)$.

Soria-Comas and Domingo-Ferrer~\cite{Soria-ComasD13} propose a more sophisticated variant of Laplace mechanism, hereafter referred to as \emph{SCDF}, that obtains improved result accuracy for multi-dimensional data. Later, Geng et al.~\cite{GengKOV15} propose \emph{Staircase mechanism}, which achieves optimal performance for unbounded input values (e.g., from a domain of $(-\infty, +\infty)$). Specifically, for a single attribute value $t_i$, both methods inject random noise $n_i$ drawn from the following piece-wise constant probability distribution:
\begin{equation} \label{eqn:sc-scdf}
\small
\pdf(n_i=x) =
\begin{cases}
\frac{a(m)}{e^{j\epsilon}}, & \textrm{if $x \in \left[-m -2(j+1) ,-m-2j\right], j\in \mathbb{N}$,}  \\[5pt]
a(m), & \textrm{if $x \in \left[-m, m\right]$,} \\[5pt]
\frac{a(m)}{e^{j\epsilon}}, & \textrm{if $x \in \left[m +2j ,m+2(j+1)\right], j \in \mathbb{N}$.}
\end{cases}
\end{equation}
In SCDF, $m =2 \cdot \frac{1 -\exp(-\epsilon) - \epsilon\exp(-\epsilon)}{\epsilon -\epsilon\exp(-\epsilon)}$ and $a(m)=\frac{\epsilon}{4}$; in Staircase mechanism, $m =\frac{2}{1 + e^{\epsilon/2}}$ and $a(m) = \frac{1-e^{-\epsilon}}{2m + 4e^{-\epsilon} - 2me^{-\epsilon}}$. Note that the optimality result in~\cite{GengKOV15} does not apply to the case with bounded inputs. We experimentally compare the proposed solutions with both SCDF and Staircase in Section \ref{sec:exp}.

\begin{algorithm}[t]
\caption{Duchi~et~al.'s Solution~\cite{DuchiJW18} for \mbox{One-Dimensional} Numeric Data.}\label{alg:duchi-onedimension}
\SetKwInOut{Input}{input}
\SetKwInOut{Output}{output}
\Input{tuple $t_i \in [-1,1]$ and privacy parameter $\epsilon.$}
\Output{tuple $t^*_i \in \left\{-\frac{e^\epsilon+1}{e^\epsilon-1}, \:\: \frac{e^\epsilon+1}{e^\epsilon-1}\right\}.$}
    Sample a Bernoulli variable $u$ such that
    \parbox{45mm}{$$\Pr[u = 1] = \frac{e^\epsilon-1}{2e^\epsilon + 2} \cdot t_i + \frac{1}{2}$$}\;
    \If{$u = 1$}
    {
        $t^*_i = \frac{e^\epsilon+1}{e^\epsilon-1} $\;
    }
    \Else
    {
        $t^*_i = -\frac{e^\epsilon+1}{e^\epsilon-1} $\;
    }
    \Return $t^*_i$
\end{algorithm}

\header
\textbf{Duchi~et~al.'s solution.} Duchi~et~al.~\cite{DuchiJW18} propose a method to perturb multidimensional numeric tuples under LDP. Algorithm~\ref{alg:duchi-onedimension} illustrates Duchi~et~al.'s solution~\cite{DuchiJW18} for the one-dimensional case. (We discuss the multidimensional case in Section~\ref{sec:multi}.) In particular, given a tuple $t_i \in [-1, 1]$, the algorithm returns a perturbed tuple $t^*_i$ that equals either $\frac{e^\epsilon+1}{e^\epsilon-1}$ or $-\frac{e^\epsilon+1}{e^\epsilon-1}$, with the following probabilities:
\begin{equation} \label{eqn:basic-improved}
\Pr\big[t^*_i = x \mid t_i\big] =
\begin{cases}
\frac{e^\epsilon-1}{2e^\epsilon + 2} \cdot t_i + \frac{1}{2}, & \textrm{if $x = \frac{e^\epsilon+1}{e^\epsilon-1}$,}  \\[5pt]
-\frac{e^\epsilon-1}{2e^\epsilon + 2} \cdot t_i + \frac{1}{2}, & \textrm{if $x = -\frac{e^\epsilon+1}{e^\epsilon-1} $.}
\end{cases}
\end{equation}
Duchi~et~al.\ prove that $t^*_i$ is an unbiased estimator of the input value $t_i$. In addition, the variance of $t^*_i$ is:
\begin{align} \label{eqn:basic-duchi-variance}
&\Var[t^*_i] = \E\left[(t^*_i)^2\right] - (\E[t^*_i])^2 \nonumber \\
&= \textstyle (\frac{e^\epsilon+1}{e^\epsilon-1})^2\cdot\frac{t_i\cdot (e^\epsilon-1) + e^\epsilon + 1}{2e^\epsilon + 2} \hspace{-1pt}+\hspace{-1pt} (-\frac{e^\epsilon+1}{e^\epsilon-1})^2\cdot\frac{-t_i\cdot (e^\epsilon-1) + e^\epsilon + 1}{2e^\epsilon + 2} \hspace{-1pt}-\hspace{-1pt} {t_i}^2 \nonumber\\
&=\textstyle \left(\frac{e^\epsilon+1}{e^\epsilon-1}\right)^2 - {t_i}^2.
\end{align}
Therefore, the worst-case variance of $t^*_i$ equals $\left(\frac{e^\epsilon+1}{e^\epsilon-1}\right)^2$, and it occurs when $t_i = 0$. Upon receiving the perturbed tuples output by Algorithm~\ref{alg:duchi-onedimension}, the aggregator simply computes the average value of the attribute over all users to obtain an estimated mean.

\header
\textbf{Deficiencies of existing solutions.} Fig.~\ref{fig:FMCom} illustrates the worst-case variance of the noisy values returned by the Laplace mechanism and Duchi~et~al.'s solution, when $\epsilon$ varies. Duchi~et~al.'s solution offers considerably smaller variance than the Laplace mechanism when $\epsilon \le 2$, but is significantly outperformed by the latter when $\epsilon$ is large. To explain, recall that Duchi~et~al.'s solution returns either $t^*_i = \frac{e^\epsilon+1}{e^\epsilon-1}$ or $t^*_i = -\frac{e^\epsilon+1}{e^\epsilon-1}$, even when the input tuple $t_i = 0$. As such, the noisy value $t^*_i$ output by Duchi~et~al.'s solution always has an absolute value $\frac{e^\epsilon+1}{e^\epsilon-1} > 1$, due to which $t^*_i$'s variance is always larger than $1$ when $t_i = 0$, regardless of how large the privacy budget $\epsilon$ is. In contrast, the Laplace mechanism incurs a noise variance of $8/\epsilon^2$, which decreases quadratically with
the increase of $\epsilon$, due to which it is preferable when $\epsilon$ is large. However, when $\epsilon$ is small, the relatively ``fat'' tail of the Laplace distribution leads to a large noise variance, whereas Duchi~et~al.'s solution does not suffer from this issue since it confines $t^*_i$ within a relatively small range $\big[-\frac{e^\epsilon+1}{e^\epsilon-1}, \frac{e^\epsilon+1}{e^\epsilon-1}\big]$.
SCDF and Staircase mechanism suffer from the same issue as the Laplace mechanism, as demonstrated in our experiments in Section \ref{sec:exp}.

\begin{figure}[t]
\centering
\includegraphics[width=0.9\columnwidth]{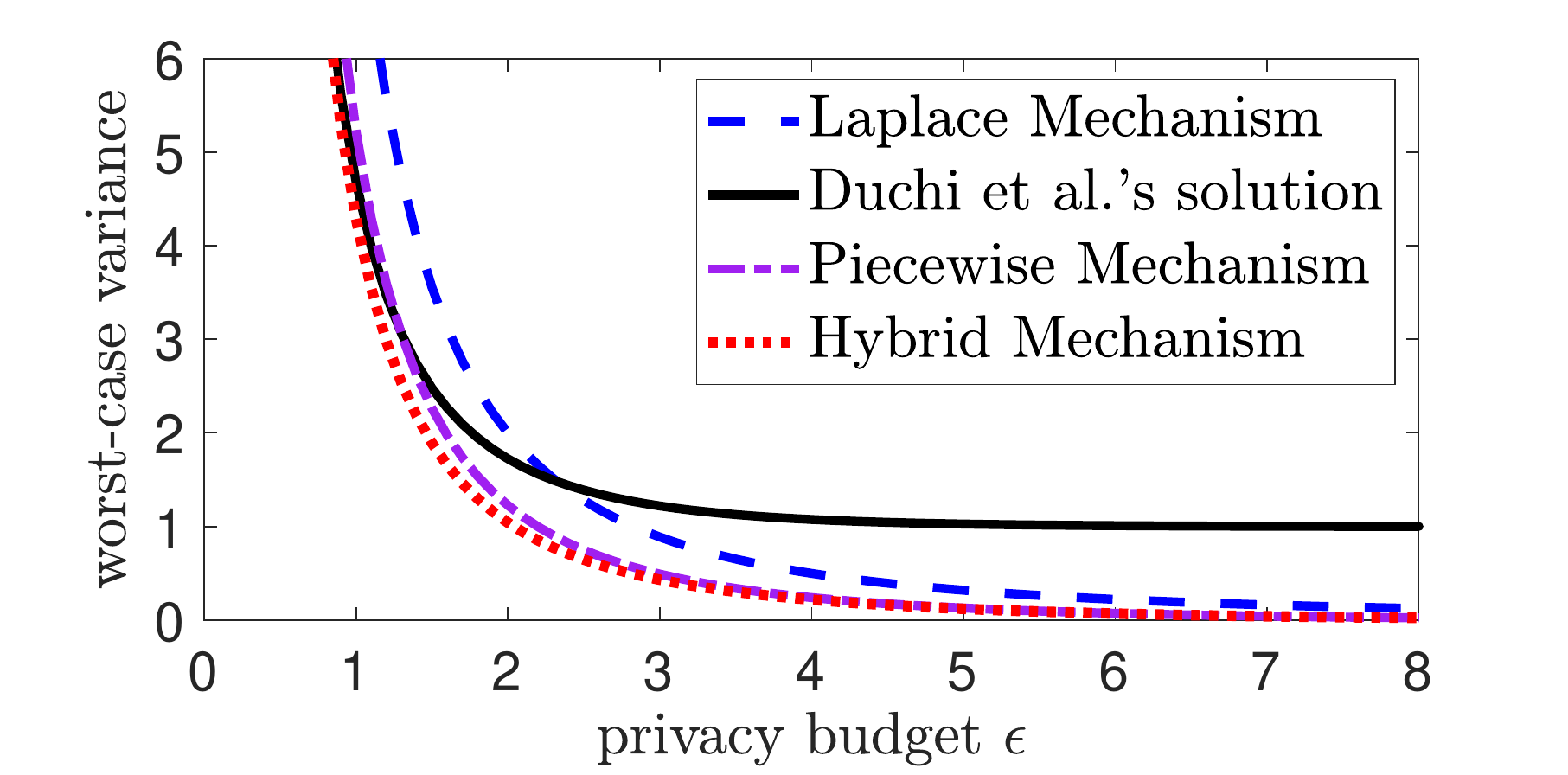}
\vspace{-1mm}
\caption{Different mechanisms' worst-case noise variances for one-dimensional numeric data versus the privacy budget $\epsilon$. Our Piecewise Mechanism and Hybrid Mechanism will be discussed in Sections \ref{sec:PM} and \ref{sec:HM}, respectively.\vspace{-10pt}}
\label{fig:FMCom}
\end{figure}

\eat{
\begin{figure*}[!htbp]

\begin{subfigure}{0.265\textwidth}
\includegraphics[width=1\linewidth]{comparetruevalue0.pdf}
\caption{$t_i=0$}
\end{subfigure}
\hspace{-15pt}
\begin{subfigure}{0.265\textwidth}
\includegraphics[width=1\linewidth]{comparetruevalue1over3.pdf}
\caption{$t_i=1/3$}
\end{subfigure}
\hspace{-15pt}
\begin{subfigure}{0.265\textwidth}
\includegraphics[width=1\linewidth]{comparetruevalue2over3.pdf}
\caption{$t_i=2/3$}
\end{subfigure}
\hspace{-15pt}
\begin{subfigure}{0.265\textwidth}
\includegraphics[width=1\linewidth]{comparetruevalue1.pdf}
\caption{$t_i=1$}
\end{subfigure}

\caption{Different mechanisms' noise variances versus the privacy budget $\epsilon$, when the true input $t_i$ is $0$, $1/3$, $2/3$ or $1$.}
\label{fig:CompareUnderDifferentInputs}
\end{figure*}
}

A natural question is: can we design a perturbation method that \textit{combines} the advantages of the Laplace mechanism and Duchi~et~al.'s solution to minimize the variance of $t^*_i$ across a \textit{wide} spectrum of $\epsilon$? Intuitively, such a method should confine $t^*_i$ to a relatively small domain (as Duchi~et~al.'s solution does), and should allow $t^*_i$ to be close to $t_i$ with reasonably large probability (as the Laplace mechanism does). In what follows, we will present a new perturbation method based on this intuition.


\subsection{Piecewise Mechanism} \label{sec:PM}

\header
Our first proposal, referred to as the {\it Piecewise Mechanism} (PM), takes as input a value $t_i \in [-1, 1]$, and outputs a perturbed value $t^*_i$ in $[-C, C]$, where
$$C = \frac{\exp(\epsilon/2)+1}{\exp(\epsilon/2)-1}.$$
The probability density function (pdf) of $t^*_i$ is a piecewise constant function as follows:  \label{page:Piecewise:Mechanism}
\begin{equation} \label{eqn:FM}
\pdf(t^*_i=x \mid t_i) =
\begin{cases}
p, & \textrm{if $x \in \left[\ell(t_i), \,r(t_i)\right]$,}  \\[5pt]
\frac{p}{\exp(\epsilon)}, & \textrm{if $x \in \left[-C, \,\ell(t_i)\right) \,\mathlarger{\cup} \,\left(r(t_i),\, C\right]$.}
\end{cases}
\end{equation}
where
\begin{align*}
& p  = \frac{\exp(\epsilon) - \exp(\epsilon/2)}{2\exp(\epsilon/2) + 2},   \\[2mm]
& \ell(t_i) = \frac{C+1}{2}\cdot t_i - \frac{C-1}{2}, \textrm{ and} \\[2mm]
& r(t_i)  = \ell(t_i) + C-1.
\end{align*}
Let $\pdf(t^*_i)$ be short for $\pdf(t^*_i=x \mid t_i)$.
Fig.~\ref{fig:exampledistribution} illustrates $\pdf(t^*_i)$ for the cases of $t_i = 0$, $t_i = 0.5$, and $t_i = 1$. Observe that when $t_i = 0$, $\pdf(t^*_i)$ is symmetric and consists of three ``pieces'', among which the center piece (i.e., $t^*_i \in [\ell(t_i), r(t_i)]$) has a higher probability than the other two. When $t_i$ increases from $0$ to $1$, the length of the center piece remains unchanged (since $r(t_i) - \ell(t_i) = C-1$), but the length of the rightmost piece (i.e., $t^*_i \in (r(t_i), C]$) decreases, and is reduced to $0$ when $t_i = 1$. The case when $t_i < 0$ can be illustrated in a similar manner.

\begin{figure*}
  \centering
  \begin{tabular}{ccc}
  \multicolumn{3}{c}{}\\
    \hspace{0mm}\includegraphics[width=0.25\textwidth]{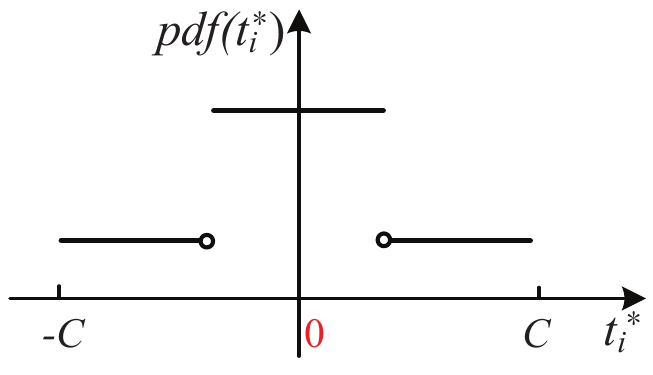} &
    \hspace{8mm}\includegraphics[width=0.25\textwidth]{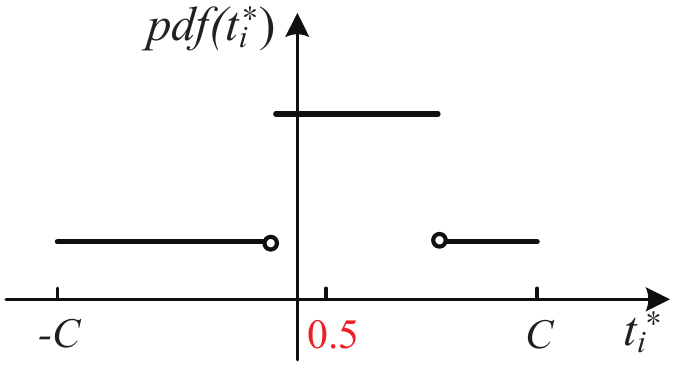} &
    \hspace{8mm}\includegraphics[width=0.25\textwidth]{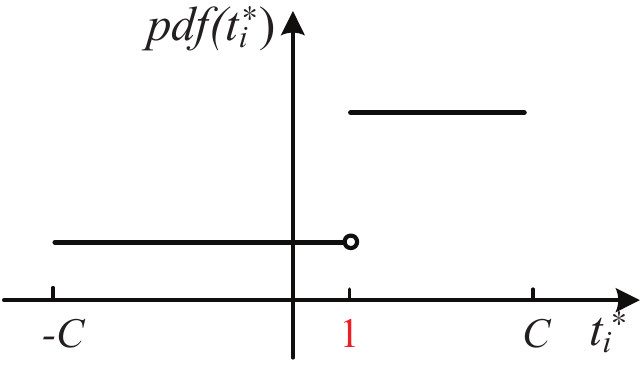}
    \\
    (a) $t_i = 0$ & (b) $t_i = 0.5$ & (c) $t_i = 1$
   \end{tabular}
  \vspace{0pt}
  \caption{The noisy output $t^*_i$'s probability density function $\pdf(t^*_i)$ in the Piecewise Mechanism.\vspace{-10pt}}
  \label{fig:exampledistribution} 
\end{figure*}


\begin{algorithm}[t]
\caption{Piecewise Mechanism for One-Dimensional Numeric Data.}\label{alg:FM}
\SetKwInOut{Input}{input}
\SetKwInOut{Output}{output}
\Input{tuple $t_i \in [-1,\,1]$ and privacy parameter $\epsilon.$}
\Output{tuple $t^*_i \in \left[-C,\, C\right].$}
Sample $x$ uniformly at random from $[0, 1]$\;
\If{$x < \frac{e^{\epsilon/2}}{e^{\epsilon/2}+1}$}
{
    Sample $t^*_i$ uniformly at random from $\left[\ell(t_i), \,r(t_i)\right]$\;
}
\Else
{
    Sample $t^*_i$ uniformly at random from $\left[-C, \,\ell(t_i)\right) \,\mathlarger{\cup} \,\left(r(t_i),\, C\right]$\;
}
\Return $t^*_i$
\end{algorithm}

Algorithm~\ref{alg:FM} shows the pseudo-code of PM, assuming the input domain is $[-1, 1]$. 
In general, when the input domain is $t_i \in [-r, r], r>0$, the user (i) computes $t_i' = t_i/r$, (ii) perturbs $t_i'$ using PM, since $t_i' \in [-1, 1]$, and (iii) submits $r \cdot t_i^*$ to the server, where $t_i^*$ denotes the noisy value output by Algorithm \ref{alg:FM}. It can be verified that $r \cdot t_i^*$ is an unbiased estimator of $t_i$. The above method requires that the user knows $r$, which is a common assumption in the literature, e.g., in Duchi et al.'s work \cite{DuchiJW18}.

The following lemmas establish the theoretical guarantees of Algorithm~\ref{alg:FM}.

\begin{lemma} \label{lmm:basic-privacy}
Algorithm~\ref{alg:FM} satisfies $\epsilon$-local differential privacy. In addition, given an input value $t_i$, it returns a noisy value $t^*_i$ with $\E[t^*_i] = t_i$ and
$$\Var[t^*_i] = \frac{{t_i}^2}{e^{\epsilon/2}-1}+\frac{e^{\epsilon/2}+3}{3(e^{\epsilon/2} - 1)^2}.$$
\end{lemma}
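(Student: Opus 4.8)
The statement has three parts: (i) $\epsilon$-LDP, (ii) unbiasedness $\E[t^*_i]=t_i$, and (iii) the variance formula. The plan is to verify each in turn using the explicit piecewise-constant pdf in~(\ref{eqn:FM}).

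For the privacy claim, I would first check that~(\ref{eqn:FM}) is a valid pdf: the center piece $[\ell(t_i),r(t_i)]$ has length $C-1$, contributing mass $p(C-1)$, and the two outer pieces together have length $2C-(C-1)=C+1$, contributing mass $\frac{p}{e^\epsilon}(C+1)$; summing and substituting $C=\frac{e^{\epsilon/2}+1}{e^{\epsilon/2}-1}$ and $p=\frac{e^\epsilon-e^{\epsilon/2}}{2e^{\epsilon/2}+2}$ should give $1$ (this reconciles with Algorithm~\ref{alg:FM}, where the center piece is chosen with probability $\frac{e^{\epsilon/2}}{e^{\epsilon/2}+1}=p(C-1)$). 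For $\epsilon$-LDP, note that for any fixed output $x\in[-C,C]$ the density $\pdf(t^*_i=x\mid t_i)$ takes only the two values $p$ and $p/e^\epsilon$, regardless of $t_i$; hence for any two inputs $t_i,t_i'$ the ratio of densities at $x$ is at most $p/(p/e^\epsilon)=e^\epsilon$. The one thing to confirm is that every $x\in[-C,C]$ indeed lies in one of the three intervals for every $t_i\in[-1,1]$, i.e. that $-C\le \ell(t_i)$ and $r(t_i)\le C$ for all such $t_i$; this follows from the linear formulas for $\ell,r$ evaluated at $t_i=\pm1$, giving $\ell(-1)=-C$, $r(1)=C$.

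For the mean and variance, I would compute $\E[t^*_i]$ and $\E[(t^*_i)^2]$ by integrating $x$ and $x^2$ against the density. Splitting $[-C,C]$ into the center piece and the complement, and writing the complement integral as (integral over $[-C,C]$) minus (integral over the center piece), reduces everything to elementary integrals of $1$, $x$, $x^2$ over intervals with endpoints $\pm C$, $\ell(t_i)$, $r(t_i)$. Since $\ell(t_i)+r(t_i)=(C+1)t_i-(C-1)+(C-1)=(C+1)t_i - ... $, more precisely $\frac{\ell(t_i)+r(t_i)}{2}=\ell(t_i)+\frac{C-1}{2}=\frac{C+1}{2}t_i$, the center of the middle piece is $\frac{C+1}{2}t_i$; the linear term should collapse to give $\E[t^*_i]=\big(p(C-1)\big)\cdot\frac{C+1}{2}t_i + \frac{p}{e^\epsilon}\cdot(\text{something})$, and after substituting the constants this must simplify to exactly $t_i$ — the algebraic identity $p(C-1)=\frac{e^{\epsilon/2}}{e^{\epsilon/2}+1}$ and $\frac{C+1}{2}=\frac{e^{\epsilon/2}}{e^{\epsilon/2}-1}$ will be the workhorses. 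For the variance, after computing $\E[(t^*_i)^2]$ and subtracting $t_i^2$, one collects a ${t_i}^2$ coefficient and a constant; these should match $\frac{1}{e^{\epsilon/2}-1}$ and $\frac{e^{\epsilon/2}+3}{3(e^{\epsilon/2}-1)^2}$ respectively.

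The main obstacle is purely the bookkeeping in part (iii): carrying the integrals of $x^2$ over the outer pieces, where the endpoints involve both $C$ and the $t_i$-dependent quantities $\ell(t_i),r(t_i)$, and then substituting $C=\frac{e^{\epsilon/2}+1}{e^{\epsilon/2}-1}$ and simplifying without error. A useful shortcut is to write $t^*_i = \mu(t_i) + Z$ where $\mu(t_i)=\frac{C+1}{2}t_i$ is the midpoint of the center piece and $Z$ has a $t_i$-\emph{independent}, symmetric three-piece distribution on a shifted domain — wait, the domain $[-C,C]$ is fixed but the pieces move, so $Z$ is not quite $t_i$-independent; nonetheless exploiting the symmetry $\pdf(t^*_i=x\mid t_i)=\pdf(t^*_i=-x\mid -t_i)$ lets one assume $t_i\ge0$ WLOG and halves the casework. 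I would organize the computation around the identity $\Var[t^*_i]=\E[(t^*_i)^2]-t_i^2$ and present only the final simplified integrals, deferring the elementary antiderivative steps.
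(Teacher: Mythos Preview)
Your proposal is correct and follows essentially the same route as the paper's proof: bound the density ratio by $p/(p/e^\epsilon)=e^\epsilon$ for the privacy claim, then compute $\E[t^*_i]$ and $\E[(t^*_i)^2]$ by direct integration of $x$ and $x^2$ over the three pieces, simplifying via the identities for $p$ and $C$. The paper's version is slightly more streamlined in that it jumps straight to $\ell^2-r^2$ and $\ell^3-r^3+2C^3$ without the ``complement $=$ full minus center'' rewrite, and it omits the sanity checks you include (that the pdf integrates to $1$ and that $-C\le\ell(t_i)\le r(t_i)\le C$); those checks are worth keeping, but your abandoned shift-and-symmetry shortcut can be dropped since the direct three-piece integration is already short.
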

The proof appears in the full version \cite{tr}.
\eat{
\begin{proof}
By Equation~\ref{eqn:FM}, for any $t^* \in [-C, C]$ and any two input values $t_i, t'_i \in [-1, 1]$, we have
$\frac{\pdf (t^* \mid t)}{\pdf(t^* \mid t^\p)}   \leq \frac{p}{p/\exp(\epsilon)} =  \exp(\epsilon).$
Thus, Algorithm~\ref{alg:FM} satisfies $\epsilon$-LDP. In addition,
\begin{align*}
\E[t^*_i] & \textstyle =  \int_{-C}^{\ell(t_i)} \frac{px}{\exp(\epsilon)} dx + \int_{r(t_i)}^{C} \frac{px}{\exp(\epsilon)}  dx+  \int_{\ell(t_i)}^{r(t_i)} p x dx \\
&\textstyle =  \frac{p}{2e^{\epsilon}}\big[\ell^2(t_i)-r^2(t_i)\big] + \frac{p}{2}\big[r^2(t_i)-\ell^2(t_i)\big]   \\
&\textstyle  =  \frac{e^{\epsilon}-e^{\epsilon/2}}{2e^{\epsilon/2}+2} \cdot \left(-\frac{1}{2e^{\epsilon}} + \frac{1}{2}\right) \cdot (C+1)t_i \cdot (C-1)
= t_i.
\end{align*}
Furthermore,
\begin{align*}
& \textstyle \Var[t^*_i] = E\big[(t^*_i)^2\big] - \big(E[t^*_i]\big)^2\\
& \textstyle   = \int_{-C}^{\ell(t_i)} \frac{px^2}{\exp(\epsilon)} dx +  \int_{r(t_i)}^{C} \frac{px^2}{\exp(\epsilon)}dx +\int_{\ell(t_i)}^{r(t_i)} p x^2 dx  -{t_i}^2 \\
&\textstyle =  \frac{p}{3e^{\epsilon}}\big[\ell^3(t_i)-r^3(t_i)+2C^3\big] + \frac{p}{3}\big[r^3(t_i)-\ell^3(t_i)\big]  -{t_i}^2 \\
&\textstyle = \frac{e^{\epsilon}-e^{\epsilon/2}}{2e^{\epsilon/2}+2}  \left[ \left(\frac{1}{3}-\frac{1}{3e^{\epsilon}}\right)  \frac{6e^{\epsilon}{t_i}^2+2}{(e^{\epsilon/2}-1)^3} + \frac{2}{3e^{\epsilon}}  \left(\frac{e^{\epsilon/2}+1}{e^{\epsilon/2}-1}\right)^3\right]-{t_i}^2 \\
&\textstyle =\frac{{t_i}^2}{e^{\epsilon/2}-1}+\frac{e^{\epsilon/2}+3}{3(e^{\epsilon/2} - 1)^2}.
\end{align*}
This completes the proof.
\end{proof}
}

By Lemma~\ref{lmm:basic-privacy}, PM returns a noisy value $t^*_i$ whose variance is at most
$$\frac{1}{e^{\epsilon/2}-1}+\frac{e^{\epsilon/2}+3}{3(e^{\epsilon/2} - 1)^2} = \frac{4e^{\epsilon/2}}{3(e^{\epsilon/2}-1)^2}.$$
The purple line in Fig.~\ref{fig:FMCom} illustrates this worst-case variance of PM as a function of $\epsilon$. Observe that PM's worst-case variance is considerably smaller than that of  Duchi~et~al.'s solution when $\epsilon \ge 1.29$, and is only slightly larger than the latter when $\epsilon  < 1.29$, where 1.29 is $x$-coordinate of the point that the Duchi et al.' solution curve intersects that of PM in Fig. \ref{fig:FMCom}. Furthermore, it can be proved that PM's worst-case variance is strictly smaller than Laplace mechanism's, regardless of the value of $\epsilon$. This makes PM be a more preferable choice than both the Laplace mechanism and Duchi~et~al.'s solution.

Furthermore, Lemma~\ref{lmm:basic-privacy} also shows that the variance of $t^*_i$ in PM monotonically decreases with the decrease of $|t_i|$, which makes PM particularly effective when the distribution of the input data is skewed towards small-magnitude values. (In Section~\ref{sec:exp}, we show that $|t_i|$ tends to be small in a large class of applications.) In contrast, Duchi~et~al.'s solution incurs a noise variance that {\it increases} with the decrease of $|t_i|$, as shown in Equation~\ref{eqn:basic-duchi-variance}. 


\eat{
Since $\E[t^*_i] = t_i$, the published value $\frac{1}{n}\sum_i t^*_i$ is unbiased. Lemma \ref{lmm:FM-variance} shows the variance of output by Algorithm \ref{alg:FM}.

\begin{lemma} \label{lmm:FM-variance}
Let $t^*_i$ be the output of Algorithm~\ref{alg:FM} given an input tuple $t_i$. Then,  the variance of $t^*_i$ is $\frac{{t_i}^2}{e^{\epsilon/2}-1}+\frac{e^{\epsilon/2}+3}{3(e^{\epsilon/2} - 1)^2}$.
\end{lemma}

\begin{proof}
\color{red}{add proof}
\end{proof}

Obviously, $Var[t^{*}_i]$ is proportional to the true value $t_i \in [-1, 1]$. So in the worst case (when $t_i = 1$ or $t_i=-1$), Piecewise Mechanism imports error with size $\frac{1}{e^{\epsilon/2} - 1} + \frac{e^{\epsilon/2} + 3}{3(e^{\epsilon/2} - 1)^2}$, i.e., $\frac{4e^{\epsilon/2}}{3(e^{\epsilon/2}-1)^2}$. Fig. \ref{fig:FMCom} plots the error of three solutions in the worst case, including Laplace mechanism, Duchi~et~al.'s solution and Piecewise Mechanism, by varying privacy budget. It is observed that Piecewise Mechanism outperforms Laplace mechanism with any privacy budget setting, and beats Duchi~et~al.'s solution at most cases. When privacy budget is small, Piecewise Mechanism is slightly inferior to Duchi~et~al.'s solution.
Lemma \ref{lmm:FM_upperbound} discusses the upperbound of the ratio of these two solutions' errors.

\begin{lemma} \label{lmm:FM_upperbound}
The ratio between the variance from Piecewise Mechanism and the one from Duchi~et~al.'s solution is smaller than 1.4.
\end{lemma}
\begin{proof}
\color{red}{add proof}
\end{proof}
}

Now consider the estimator $\frac{1}{n}\sum_i^n t^*_{i}$ used by the aggregator to infer the mean value of all $t_i$. The variance of this estimator is $1/n$ of the average variance of $t^*_i$. Based on this, the following lemma establishes the accuracy guarantee of $\frac{1}{n}\sum_i^n t^*_{i}$.
\begin{lemma} \label{lmm:basic-accuracy}
 Let $Z = \frac{1}{n} \sum_{i=1}^n t^*_i$ and $X = \frac{1}{n} \sum_{i=1}^n t_i$. With \mbox{at least}  $1 - \beta$ probability,
$$ \big|Z - X\big| = O\left(\frac{\sqrt{\log(1/\beta)}}{\epsilon \sqrt{n}}\right).$$
\end{lemma}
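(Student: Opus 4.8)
The plan is to express $Z-X=\frac1n\sum_{i=1}^n\bigl(t^*_i-t_i\bigr)$ as a normalized sum of $n$ \emph{independent} random variables and apply a Chernoff-type tail bound. Write $Y_i:=t^*_i-t_i$. By Lemma~\ref{lmm:basic-privacy} each $t^*_i$ is unbiased, so $\E[Y_i]=0$; the $Y_i$ are independent because the users run Algorithm~\ref{alg:FM} independently; and since the algorithm always outputs $t^*_i\in[-C,C]$ with $C=\frac{e^{\epsilon/2}+1}{e^{\epsilon/2}-1}$ while $t_i\in[-1,1]$, we have the deterministic bound $Y_i\in[-(C+1),\,C+1]$. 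Finally, Lemma~\ref{lmm:basic-privacy} also supplies the per-user variance bound $\Var[Y_i]=\Var[t^*_i]\le V$ with $V:=\tfrac{4e^{\epsilon/2}}{3(e^{\epsilon/2}-1)^2}$.

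First I would apply Hoeffding's inequality to the bounded variables $t^*_1,\dots,t^*_n$: for any $\delta>0$,
\begin{equation*}
\Pr\bigl[\,|Z-X|\ge\delta\,\bigr]=\Pr\Bigl[\,\bigl|{\textstyle\sum_i}(t^*_i-\E t^*_i)\bigr|\ge n\delta\,\Bigr]\le 2\exp\!\left(-\frac{n\,\delta^2}{2C^2}\right),
\end{equation*}
and setting the right-hand side equal to $\beta$ gives $\delta=C\sqrt{2\log(2/\beta)/n}$. Using $e^{\epsilon/2}-1\ge\epsilon/2$ (from $e^x\ge1+x$) we get $C=O(1/\epsilon)$, hence $|Z-X|=O\!\bigl(\sqrt{\log(1/\beta)}/(\epsilon\sqrt n)\bigr)$ with probability at least $1-\beta$, which is the claimed bound. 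A sharper, variance-aware alternative is Bernstein's inequality, which bounds $\Pr[|Z-X|\ge\delta]$ by $2\exp\!\bigl(-\tfrac{n\delta^2/2}{V+(C+1)\delta/3}\bigr)$; inverting and using $V=O(1/\epsilon^2)$ (indeed $V\le\tfrac{16}{3\epsilon^2}$, since $\sinh(\epsilon/4)\ge\epsilon/4$) gives the same $O\!\bigl(\sqrt{\log(1/\beta)}/(\epsilon\sqrt n)\bigr)$ rate, now valid uniformly over $\epsilon$ rather than only for $\epsilon$ bounded.

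The one point I would be careful about is the bookkeeping of the $\epsilon$-dependence carried by $C$ and $V$, and — if one takes the Bernstein route for large $\epsilon$ — checking that the linear-in-$1/n$ term from Bernstein is genuinely lower order, which it is in the standard asymptotic regime where $n$ grows while $\epsilon$ and $\beta$ are treated as fixed (equivalently, once $n=\Omega\bigl((1+\epsilon)^2\log(1/\beta)\bigr)$). Everything else is a direct instantiation of a textbook concentration inequality, so no further obstacle is expected.
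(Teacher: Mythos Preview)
Your proposal is correct and essentially matches the paper's approach. The paper derives Lemma~\ref{lmm:basic-accuracy} as the $d=1$ special case of Lemma~\ref{lmm:multi-accuracy}, whose proof applies Bernstein's inequality to the independent, bounded, zero-mean variables $t^*_i-t_i$ (using the range bound $|t^*_i-t_i|\le C+1$ and the variance bound from Lemma~\ref{lmm:basic-privacy}), exactly as in your ``sharper alternative.'' Your primary route via Hoeffding is a slight simplification that works here because the squared range $C^2$ and the worst-case variance $V$ are of the same order $O(1/\epsilon^2)$, so nothing is lost asymptotically; either inequality yields the stated bound.
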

We omit the proof of Lemma~\ref{lmm:basic-accuracy} as it is a special case of Lemma~\ref{lmm:multi-accuracy} to be presented in Section~\ref{sec:multi-numeric}.

\header
\textbf{Remark.}
PM bears some similarities to SCDF~\cite{Soria-ComasD13} and Staircase mechanism~\cite{GengKOV15} described in Section \ref{sec:numeric_existing}, in the sense that the added noise in PM also follows a piece-wise constant distribution, as in SCDF and Staircase. On the other hand, there are two crucial differences between PM and SCDF/Staircase. First, SCDF and Staircase mechanism assume an unbounded input, and produce an unbounded output (i.e., with range $(-\infty, +\infty)$) accordingly. In contrast, PM has both bounded input (with domain $[-1, 1]$) and output (with range $[-C, C]$). Second, the noise distribution of SCDF/Staircase consists of an infinite number of ``pieces'' that are data {\em independent}, whereas the output distribution of the piecewise mechanism consists of up to three ``pieces'' whose lengths and positions depend on the input data.

\subsection{Hybrid Mechanism} \label{sec:HM}

As discussed in Section \ref{sec:PM}, the worst-case result accuracy of PM dominates that of the Laplace mechanism, and yet it can still be (slightly) worse than Duchi et al's solution, since the noise variance incurred by the former (resp.\ latter) decreases (resp.\ increases) with the decrease of $|t_i|$. Can we construct a method that that preserves the advantages of PM and is at the same time always no worse than Duchi et al's solution? The answer turns out to be positive: that we can combine PM and Duchi et al's solution into a new \emph{Hybrid Mechanism} (\emph{HM}). Further, the combination used in HM is non-trivial; as a result, the noise variance of HM is often smaller than both PM and Duchi et al's solution, as shown in Fig.~\ref{fig:FMCom} on Page~\pageref{fig:FMCom}.

In particular, given an input value $t_i$, HM flips a coin whose head probability equals a constant $\alpha$; if the coin shows a head (resp.\ tail), then we invoke PM (resp.\ Duchi~et~al.'s solution) to perturb $t_i$. \label{page:Hybrid:Mechanism} Given $t_i$ and $\epsilon$, the noise variance incurred by HM is
\begin{align}
\sigma^2_{H}(t_i, \epsilon) = \alpha \cdot \sigma^2_{P}(t_i, \epsilon) + (1 - \alpha) \cdot \sigma^2_{D}(t_i, \epsilon), \nonumber
\end{align}
where $\sigma^2_{P}(t_i, \epsilon)$ and $\sigma^2_{D}(t_i, \epsilon)$ denote the noise variance incurred by PM and Duchi et al.'s solution, respectively, when given $t_i$ and $\epsilon$ as input. We have the following lemma.
\begin{lemma} \label{lmm:numeric-hybrid-min}
Let $\epsilon^*$ be defined as:
\begin{align} \label{eqn:numeric-eps-star}
\epsilon^* & = \textstyle  \ln \left(  \frac{-5 + 2\sqrt[3]{6353 - 405 \sqrt{241}} \, + \, 2\sqrt[3]{6353 + 405 \sqrt{241}}}{27} \right)
\approx 0.61.
\end{align}
The term $\max_{t_i \in [-1, 1]}\sigma^2_H(t_i, \epsilon)$ is minimized when
\begin{align}
\alpha = \begin{cases}
1-e^{-\epsilon/2}, & \text{for }\epsilon > \epsilon^*, \\ 0, & \text{for } \epsilon \leq \epsilon^*.
\end{cases} \label{eq-HM-alpha}
\end{align}
\end{lemma}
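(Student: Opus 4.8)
The plan is to turn the min--max into a one--variable problem in $\alpha$. First I would substitute the two known variance formulas. By Lemma~\ref{lmm:basic-privacy}, $\sigma^2_P(t_i,\epsilon)=\frac{t_i^2}{e^{\epsilon/2}-1}+\frac{e^{\epsilon/2}+3}{3(e^{\epsilon/2}-1)^2}$, and by \eqref{eqn:basic-duchi-variance}, $\sigma^2_D(t_i,\epsilon)=\left(\frac{e^\epsilon+1}{e^\epsilon-1}\right)^2-t_i^2$. Hence $\sigma^2_H(t_i,\epsilon)$ is \emph{affine} in $s:=t_i^2\in[0,1]$, with $s$--coefficient $\frac{\alpha}{e^{\epsilon/2}-1}-(1-\alpha)$. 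Therefore $\max_{t_i\in[-1,1]}\sigma^2_H(t_i,\epsilon)=\max\{A(\alpha),B(\alpha)\}$, where $A(\alpha):=\sigma^2_H(0,\epsilon)$ and $B(\alpha):=\sigma^2_H(1,\epsilon)$ are both affine in $\alpha\in[0,1]$; so $g(\alpha):=\max\{A(\alpha),B(\alpha)\}$ is convex and piecewise linear, and its minimizer is pinned down by the slopes of $A,B$ and their crossing point.

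Next I would evaluate the endpoints and the crossing point. A direct check gives $A(0)=\sigma^2_D(0,\epsilon)$, $B(0)=\sigma^2_D(0,\epsilon)-1$, $A(1)=\sigma^2_P(0,\epsilon)$, $B(1)=\sigma^2_P(0,\epsilon)+\frac{1}{e^{\epsilon/2}-1}$; thus $A(0)>B(0)$ while $A(1)<B(1)$, so $A$ and $B$ (being affine) cross exactly once, precisely at the $\alpha$ where the $s$--coefficient above vanishes, namely $\alpha_c=1-e^{-\epsilon/2}\in(0,1)$. One also checks that $B$ is strictly increasing on $[0,1]$: its slope is $\sigma^2_P(1,\epsilon)-\sigma^2_D(1,\epsilon)$, which with $y:=e^{\epsilon/2}$ reduces after clearing positive denominators to $y^2-y+1>0$, true for all $\epsilon>0$. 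Consequently on $[\alpha_c,1]$ we have $g=B$ strictly increasing, so $g$ is minimized on $[0,\alpha_c]$ where $g=A$, and the whole answer hinges on the sign of $A$'s slope, i.e. of $\sigma^2_P(0,\epsilon)-\sigma^2_D(0,\epsilon)$: if it is $\geq 0$ then $A$ is nondecreasing, $g(0)\leq g(\alpha_c)$, and the minimizer is $\alpha=0$; if it is $<0$ then $A$ is decreasing, both monotone pieces are minimized at $\alpha_c$, and the minimizer is $\alpha=\alpha_c=1-e^{-\epsilon/2}$.

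It then remains to locate the sign change of $\sigma^2_P(0,\epsilon)-\sigma^2_D(0,\epsilon)$. With $y=e^{\epsilon/2}$ and after clearing the positive denominators, this difference has the sign of $-y\,h(y)$ where $h(y)=3y^3-y^2+y-7$. Since $h'(y)=9y^2-2y+1$ has negative discriminant, $h$ is strictly increasing, and $h(1)=-4<0<15=h(2)$, so $h$ has a unique root $y^*\in(1,2)$; thus $\sigma^2_P(0,\epsilon)>\sigma^2_D(0,\epsilon)$ for $\epsilon<2\ln y^*$ and $<$ for $\epsilon>2\ln y^*$, which is exactly \eqref{eq-HM-alpha} with $\epsilon^*=2\ln y^*$ (at $\epsilon=\epsilon^*$, $A$ is constant, so $\alpha=0$ is among the minimizers, matching the stated case).

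Finally, for the closed form \eqref{eqn:numeric-eps-star} I would pass to $z:=y^2=e^{\epsilon}$. From $h(y)=0$ one gets $y=\frac{z+7}{3z+1}$, and squaring gives $z(3z+1)^2=(z+7)^2$, i.e. the cubic $9z^3+5z^2-13z-49=0$. This cubic has a single real root (the values at both its critical points are negative), so Cardano's formula with real radicals applies; substituting $z=\frac{w-5}{27}$ yields a depressed cubic in $w$ whose solution gives $e^{\epsilon^*}=\frac{-5+2\sqrt[3]{6353-405\sqrt{241}}+2\sqrt[3]{6353+405\sqrt{241}}}{27}$, that is, the stated $\epsilon^*$. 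The main obstacle is this last step --- cleanly deriving the cubic in $z$ and extracting the exact nested--radical expression via Cardano; everything before it is elementary analysis of affine and piecewise--linear functions.
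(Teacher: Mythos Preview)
Your proposal is correct and follows essentially the same route as the paper: write $\sigma^2_H$ as affine in $t_i^2$, reduce the max over $t_i$ to a two-piece linear function of $\alpha$ with breakpoint $\alpha_c=1-e^{-\epsilon/2}$, and analyze the signs of the two slopes (your $B$-slope and $A$-slope are exactly the paper's claims (i) and (ii)). You actually go further than the paper by explicitly deriving the closed form of $\epsilon^*$ via the cubic $9z^3+5z^2-13z-49=0$ and Cardano's formula, which the paper merely states; the one cosmetic point is that the substitution $z=(w-5)/27$ does not directly depress the cubic---it is the composition of the standard shift $z=u-5/27$ with the rescaling $w=27u$---but the computation is sound.
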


The proof appears in the full version \cite{tr}.
\eat{
\begin{proof}
By Lemma~\ref{lmm:basic-privacy} and Equation~\ref{eqn:basic-duchi-variance},
\begin{align}
& \sigma^2_H(t_i, \epsilon) \nonumber \\
& \textstyle = \alpha \left[\frac{{t_i}^2}{e^{\epsilon/2}-1}+\frac{e^{\epsilon/2}+3}{3(e^{\epsilon/2} - 1)^2}\right] + (1-\alpha)   \left[\left(\frac{e^\epsilon+1}{e^\epsilon-1}\right)^2 - {t_i}^2\right] \nonumber \\ & \textstyle  = {t_i}^2 \left(\frac{\alpha}{e^{\epsilon/2}-1} + \alpha - 1 \right) + \frac{\alpha(e^{\epsilon/2}+3)}{3(e^{\epsilon/2} - 1)^2} + (1-\alpha) \left(\frac{e^\epsilon+1}{e^\epsilon-1}\right)^2. \label{eq:hybrid:var}
\end{align}
Given that $t_i \in [-1, 1]$, the maximum value of the r.h.s.\ of Equation~\ref{eq:hybrid:var} is:
\begin{align}
& \textstyle  \max\left\{\frac{\alpha}{e^{\epsilon/2}-1} + \alpha - 1,\,0\right\}  + \frac{\alpha(e^{\epsilon/2}+3)}{3(e^{\epsilon/2} - 1)^2} + (1-\alpha) \left(\frac{e^\epsilon+1}{e^\epsilon-1}\right)^2
\nonumber \\ & \textstyle \hspace{-3pt} =\hspace{-3pt} \begin{cases}
\hspace{-2pt}\alpha \hspace{-2pt}\left[\hspace{-1pt} \frac{(e^{\epsilon/2}+3)}{3(e^{\epsilon/2} - 1)^2}  \hspace{-1.5pt}- \hspace{-1.5pt} \left(\hspace{-1pt}\frac{e^\epsilon+1}{e^\epsilon-1}\hspace{-1pt}\right)^2  \hspace{-2pt}+ \hspace{-2pt} \frac{e^{\epsilon/2}}{e^{\epsilon/2}-1}\hspace{-1pt}\right]  \hspace{-2pt}+ \hspace{-2pt} \frac{4e^\epsilon}{(e^\epsilon-1)^2} , & \hspace{-9pt} \text{if } \alpha  \hspace{-1.5pt}> \hspace{-1.5pt} 1 \hspace{-1.5pt}- \hspace{-1.5pt} e^{-\epsilon/2} , \\[10pt] \hspace{-2pt} \alpha  \hspace{-2pt}\left[\hspace{-1pt} \frac{(e^{\epsilon/2}+3)}{3(e^{\epsilon/2} - 1)^2}  \hspace{-2pt}- \hspace{-2pt} \left(\hspace{-1pt}\frac{e^\epsilon+1}{e^\epsilon-1}\hspace{-1pt}\right)^2\hspace{-0.5pt}\right]  \hspace{-2pt}+ \hspace{-2pt}  \left(\hspace{-1pt}\frac{e^\epsilon+1}{e^\epsilon-1}\hspace{-1pt}\right)^2, & \hspace{-9pt}\text{if } \alpha  \hspace{-1.5pt}\leq \hspace{-1.5pt} 1 \hspace{-1.5pt}- \hspace{-1.5pt} e^{-\epsilon/2} .
\end{cases}  \label{eq:hybrid:var2}
\end{align}
It can be proved that
\begin{itemize}
\item[(i)] \mbox{$ \frac{(e^{\epsilon/2}+3)}{3(e^{\epsilon/2} - 1)^2}  \hspace{-1.5pt}- \hspace{-1.5pt} \left(\frac{e^\epsilon+1}{e^\epsilon-1}\right)^2  \hspace{-1.5pt}+ \hspace{-1.5pt} \frac{e^{\epsilon/2}}{e^{\epsilon/2}-1}$}$ >0$ for   $ \epsilon > 0$;
\item[(ii)] the term \mbox{$ \frac{(e^{\epsilon/2}+3)}{3(e^{\epsilon/2} - 1)^2}  -   \left(\frac{e^\epsilon+1}{e^\epsilon-1}\right)^2 $} in Equation~\ref{eq:hybrid:var2} is positive for $ 0<\epsilon < \epsilon^*$,   negative for $ \epsilon > \epsilon^*$, and $0$ for $ \epsilon = \epsilon^*$, where $\epsilon^*$ is defined by Equation~\ref{eqn:numeric-eps-star}.
\end{itemize}
Combining (i) and (ii) and Equation~\ref{eq:hybrid:var2}, we can derive that $\max_{t_i \in [-1, 1]}\sigma^2_H(t_i, \epsilon)$ is minimized when $\alpha$ satisfies Equation~\ref{eq-HM-alpha}.
\end{proof}
}

\eat{
which we use Lemma~\ref{lmm:basic-privacy} and~(\ref{eqn:basic-duchi-variance}) to derive as
\begin{align}
& \textstyle  \alpha \left[\frac{{t_i}^2}{e^{\epsilon/2}-1}+\frac{e^{\epsilon/2}+3}{3(e^{\epsilon/2} - 1)^2}\right] + (1-\alpha)   \left[\left(\frac{e^\epsilon+1}{e^\epsilon-1}\right)^2 - {t_i}^2\right] \nonumber \\ & \textstyle  = {t_i}^2 \left(\frac{\alpha}{e^{\epsilon/2}-1} + \alpha - 1 \right) + \frac{\alpha(e^{\epsilon/2}+3)}{3(e^{\epsilon/2} - 1)^2} + (1-\alpha) \left(\frac{e^\epsilon+1}{e^\epsilon-1}\right)^2. \label{eq:hybrid:var}
\end{align}
We will find $\alpha$ to minimize $\HM$'s \mbox{worst-case} noise variance, which based on (\ref{eq:hybrid:var}) and $t_i \in [-1, 1]$ is given by
\begin{align}
& \textstyle  \max\left\{\frac{\alpha}{e^{\epsilon/2}-1} + \alpha - 1,\,0\right\}  + \frac{\alpha(e^{\epsilon/2}+3)}{3(e^{\epsilon/2} - 1)^2} + (1-\alpha) \left(\frac{e^\epsilon+1}{e^\epsilon-1}\right)^2
\nonumber \\ & \textstyle \hspace{-3pt} =\hspace{-3pt} \begin{cases}
\hspace{-2pt}\alpha \hspace{-2pt}\left[\hspace{-1pt} \frac{(e^{\epsilon/2}+3)}{3(e^{\epsilon/2} - 1)^2}  \hspace{-1.5pt}- \hspace{-1.5pt} \left(\hspace{-1pt}\frac{e^\epsilon+1}{e^\epsilon-1}\hspace{-1pt}\right)^2  \hspace{-2pt}+ \hspace{-2pt} \frac{e^{\epsilon/2}}{e^{\epsilon/2}-1}\hspace{-1pt}\right]  \hspace{-2pt}+ \hspace{-2pt} \frac{4e^\epsilon}{(e^\epsilon-1)^2} , & \hspace{-9pt} \text{if } \alpha  \hspace{-1.5pt}> \hspace{-1.5pt} 1 \hspace{-1.5pt}- \hspace{-1.5pt} e^{-\epsilon/2} , \\[10pt] \hspace{-2pt} \alpha  \hspace{-2pt}\left[\hspace{-1pt} \frac{(e^{\epsilon/2}+3)}{3(e^{\epsilon/2} - 1)^2}  \hspace{-2pt}- \hspace{-2pt} \left(\hspace{-1pt}\frac{e^\epsilon+1}{e^\epsilon-1}\hspace{-1pt}\right)^2\hspace{-0.5pt}\right]  \hspace{-2pt}+ \hspace{-2pt}  \left(\hspace{-1pt}\frac{e^\epsilon+1}{e^\epsilon-1}\hspace{-1pt}\right)^2, & \hspace{-9pt}\text{if } \alpha  \hspace{-1.5pt}\leq \hspace{-1.5pt} 1 \hspace{-1.5pt}- \hspace{-1.5pt} e^{-\epsilon/2} .
\end{cases}  \label{eq:hybrid:var2}
\end{align}
We analyze the coefficients of $\alpha$ in (\ref{eq:hybrid:var2}). We can prove that
\begin{itemize}
\item[(i)] \mbox{$ \frac{(e^{\epsilon/2}+3)}{3(e^{\epsilon/2} - 1)^2}  \hspace{-1.5pt}- \hspace{-1.5pt} \left(\frac{e^\epsilon+1}{e^\epsilon-1}\right)^2  \hspace{-1.5pt}+ \hspace{-1.5pt} \frac{e^{\epsilon/2}}{e^{\epsilon/2}-1}$}$ >0$ for   $ \epsilon > 0$;
\item[(ii)] with \vspace{2pt} $ \epsilon_*:=  \ln \big[  \frac{1}{27} \big(-5 + 2\sqrt[3]{6353 - 405 \sqrt{241}}  + 2\sqrt[3]{6353 + 405 \sqrt{241}} \, \big) \big] \approx 0.61$, the term \mbox{$ \frac{(e^{\epsilon/2}+3)}{3(e^{\epsilon/2} - 1)^2}  -   \left(\frac{e^\epsilon+1}{e^\epsilon-1}\right)^2 $} in (\ref{eq:hybrid:var2}) is positive for $ 0<\epsilon < \epsilon_*$,   negative for $ \epsilon > \epsilon_*$, and $0$ for $ \epsilon = \epsilon_*$.
\end{itemize}
 We omit the proofs for the above (i) and (ii) due to space limitation.
Using (i) and (ii) in (\ref{eq:hybrid:var2}), we set $\alpha$ as follows to minimize $\HM$'s \mbox{worst-case} noise variance in (\ref{eq:hybrid:var2}) for   $t^*_i$:
\begin{align}
\alpha = \begin{cases}
1-e^{-\epsilon/2}, & \text{for }\epsilon > 0.61, \\ 0, & \text{for } \epsilon \leq 0.61.
\end{cases} \label{eq-HM-alpha}
\end{align}
}

By Lemma~\ref{lmm:numeric-hybrid-min}, when $\alpha$ satisfies Equation~\ref{eq-HM-alpha}, the worst-case noise variance of HM is:
\begin{align} \label{eqn:HM-duchi-variance}
& \max_{t_i \in [-1, 1]}\sigma^2_H(t_i, \epsilon) = \begin{cases}
\frac{e^{\epsilon/2}+3}{3e^{\epsilon/2}(e^{\epsilon/2}-1)} + \frac{(e^{\epsilon}+1)^2}{e^{\epsilon/2}(e^{\epsilon}-1)^2} , & \hspace{-8pt}\text{for }\epsilon > \epsilon^*, \\[2mm] \left(\frac{e^\epsilon+1}{e^\epsilon-1}\right)^2, & \hspace{-8pt}\text{for } \epsilon \leq \epsilon^*.
\end{cases}
\end{align}
Based on Equation~\ref{eqn:HM-duchi-variance},  Lemma~\ref{lmm:basic-privacy}, and Equation~\ref{eqn:basic-duchi-variance}, which present $\sigma^2_H(t_i, \epsilon)$, $\sigma^2_P(t_i, \epsilon)$, and $\sigma^2_D(t_i, \epsilon)$, respectively, we can show that HM often dominates both PM and Duchi~et~al.'s solution in minimizing the worst-case noise variance. The detailed results are summarized under $d=1$ (meaning one dimension) in Table~\ref{table-comparison} of Section~\ref{sec:intro}, where $\epsilon^*$ follows from Equation~\ref{eqn:numeric-eps-star} and $\epsilon^{\#}$ is derived by solving $\epsilon$ which makes $\max_{t_i \in [-1, 1]}\sigma^2_P(t_i, \epsilon)$ and $\max_{t_i \in [-1, 1]}\sigma^2_D(t_i, \epsilon)$ equal. We highlight  some results as follows.
\begin{corollary} \label{Cor1}
Suppose that $\alpha$ satisfies Equation~\ref{eq-HM-alpha}. If $\epsilon > \epsilon^*$,
$$\max_{t_i \in [-1, 1]}\hspace{-1pt}\sigma^2_H(t_i, \epsilon) \hspace{-1pt}<\hspace{-1pt} \min\hspace{-2pt}\left\{\hspace{-1pt}\max_{t_i \in [-1, 1]}\hspace{-1pt}\sigma^2_P(t_i, \epsilon), \max_{t_i \in [-1, 1]}\hspace{-1pt}\sigma^2_D(t_i, \epsilon)\hspace{-1pt}\right\}\hspace{-2pt};$$
otherwise,
$$\max_{t_i \in [-1, 1]}\sigma^2_H(t_i, \epsilon) = \max_{t_i \in [-1, 1]}\sigma^2_D(t_i, \epsilon) < \max_{t_i \in [-1, 1]}\sigma^2_P(t_i, \epsilon).$$
\end{corollary}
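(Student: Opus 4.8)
The plan is to case-split according to Equation~\ref{eq-HM-alpha} and reduce every comparison of worst-case variances to an elementary inequality in $s := e^{\epsilon/2} > 1$, using Lemma~\ref{lmm:basic-privacy} for $\sigma^2_P$, Equation~\ref{eqn:basic-duchi-variance} for $\sigma^2_D$, and Equation~\ref{eqn:HM-duchi-variance} for $\sigma^2_H$. Recall that $\sigma^2_D(t_i,\epsilon)$ decreases in $|t_i|$ (maximized at $t_i = 0$) while $\sigma^2_P(t_i,\epsilon)$ increases in $|t_i|$ (maximized at $t_i = \pm 1$).

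Consider first $\epsilon \le \epsilon^*$, where $\alpha = 0$. Then HM always invokes Duchi~et~al.'s solution, so $\sigma^2_H(t_i,\epsilon) = \sigma^2_D(t_i,\epsilon)$ for every $t_i$, hence $\max_{t_i}\sigma^2_H = \max_{t_i}\sigma^2_D$, which is the claimed equality. It remains to show $\max_{t_i}\sigma^2_D < \max_{t_i}\sigma^2_P$, i.e. $\bigl(\frac{s^2+1}{s^2-1}\bigr)^2 < \frac{4s}{3(s-1)^2}$, equivalently $3(s^2+1)^2 < 4s(s+1)^2$. I would verify this over $1 < s \le e^{\epsilon^*/2}$: it holds in the limit $s \to 1^+$ (where it reads $12 < 16$), and, by the definition of $\epsilon^{\#}$ as the positive value at which $\max_{t_i}\sigma^2_P = \max_{t_i}\sigma^2_D$ (together with $\epsilon^* < \epsilon^{\#}$ and the fact that the quartic $4s(s+1)^2 - 3(s^2+1)^2$ has a single sign change on $(1,\infty)$), it holds for every $\epsilon < \epsilon^{\#}$, in particular for $\epsilon \le \epsilon^*$.

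Now consider $\epsilon > \epsilon^*$, where $\alpha = 1 - e^{-\epsilon/2} \in (0,1)$. Substituting this $\alpha$ into $\sigma^2_H(t_i,\epsilon) = \alpha\,\sigma^2_P(t_i,\epsilon) + (1-\alpha)\,\sigma^2_D(t_i,\epsilon)$ and using Lemma~\ref{lmm:basic-privacy} and Equation~\ref{eqn:basic-duchi-variance}, the coefficient of $t_i^2$ is $\frac{\alpha}{e^{\epsilon/2}-1} + \alpha - 1 = e^{-\epsilon/2} - e^{-\epsilon/2} = 0$; hence $\sigma^2_H(t_i,\epsilon)$ is independent of $t_i$, so $\max_{t_i}\sigma^2_H(t_i,\epsilon) = \sigma^2_H(t_i,\epsilon)$ for every fixed $t_i$ (this common value is the first branch of Equation~\ref{eqn:HM-duchi-variance}). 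Evaluating at $t_i = 1$ gives $\max_{t_i}\sigma^2_H = \alpha\,\sigma^2_P(1,\epsilon) + (1-\alpha)\,\sigma^2_D(1,\epsilon)$; since $\sigma^2_D(1,\epsilon) = \frac{4s^2}{(s^2-1)^2}$ and $\sigma^2_P(1,\epsilon) = \frac{4s}{3(s-1)^2}$, the strict inequality $\sigma^2_D(1,\epsilon) < \sigma^2_P(1,\epsilon)$ reduces to $3s < (s+1)^2$, i.e. $s^2 - s + 1 > 0$, true for all $\epsilon > 0$; therefore the convex combination is strictly below $\sigma^2_P(1,\epsilon) = \max_{t_i}\sigma^2_P$. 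Evaluating the same common value at $t_i = 0$ gives $\max_{t_i}\sigma^2_H = \alpha\,\sigma^2_P(0,\epsilon) + (1-\alpha)\,\sigma^2_D(0,\epsilon)$; by the defining property of $\epsilon^*$ in Equation~\ref{eqn:numeric-eps-star}, the difference $\sigma^2_P(0,\epsilon) - \sigma^2_D(0,\epsilon) = \frac{e^{\epsilon/2}+3}{3(e^{\epsilon/2}-1)^2} - \bigl(\frac{e^\epsilon+1}{e^\epsilon-1}\bigr)^2$ is strictly negative for $\epsilon > \epsilon^*$, so the convex combination is strictly below $\sigma^2_D(0,\epsilon) = \max_{t_i}\sigma^2_D$. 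Combining the two evaluations yields $\max_{t_i}\sigma^2_H < \min\{\max_{t_i}\sigma^2_P,\,\max_{t_i}\sigma^2_D\}$, as required.

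\textbf{Main obstacle.} The conceptual crux is the identity that at $\alpha = 1 - e^{-\epsilon/2}$ the $t_i^2$-coefficient of $\sigma^2_H$ vanishes, making $\sigma^2_H$ flat in $t_i$; this is exactly what lets the worst case of the mixture be read off as the mixture evaluated at the \emph{different} worst cases $t_i = \pm 1$ of PM and $t_i = 0$ of Duchi~et~al., so that each of the two target inequalities follows from a one-point comparison. Without this observation one would have to optimize $\sigma^2_H(t_i,\epsilon)$ over $t_i$ directly. The only genuinely computational steps left are the polynomial inequality $3(s^2+1)^2 < 4s(s+1)^2$ on $1 < s \le e^{\epsilon^*/2}$ in the first case and the sign of $\sigma^2_P(0,\epsilon) - \sigma^2_D(0,\epsilon)$ for $\epsilon > \epsilon^*$ (already established alongside Lemma~\ref{lmm:numeric-hybrid-min}); the remaining comparison collapses to $s^2 - s + 1 > 0$.
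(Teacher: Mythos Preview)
Your proposal is correct and follows the same route the paper sketches: the corollary is stated without a detailed proof and is simply attributed to comparing the closed forms in Lemma~\ref{lmm:basic-privacy}, Equation~\ref{eqn:basic-duchi-variance}, and Equation~\ref{eqn:HM-duchi-variance}. Your observation that the $t_i^2$-coefficient of $\sigma^2_H$ vanishes at $\alpha = 1 - e^{-\epsilon/2}$ (so the hybrid variance is flat in $t_i$ and can be read off at $t_i=1$ for the PM comparison and at $t_i=0$ for the Duchi comparison) is exactly the mechanism behind the paper's derivation of Equation~\ref{eqn:HM-duchi-variance} via Lemma~\ref{lmm:numeric-hybrid-min}, though the paper does not spell it out as cleanly as you do; the sign facts you invoke for $\sigma^2_P(0,\epsilon)-\sigma^2_D(0,\epsilon)$ are precisely items (i)--(ii) in the proof of Lemma~\ref{lmm:numeric-hybrid-min}.
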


The red line in Fig.~\ref{fig:FMCom} on Page~\pageref{fig:FMCom} shows the worst-case noise variance incurred by HM, which is consistently no higher than those of all other three methods (HM reduces to Duchi et al.'s solution for $\epsilon \leq \epsilon^*$). In addition, observe that PM's accuracy is close to HM's, which demonstrates the effectiveness of PM.



\section{Collecting Multiple Attributes} \label{sec:multi}

We now consider the case where each user's data record contains $d>1$ attributes. In this case, a straightforward solution is to collect each attribute separately using a single-attribute perturbation algorithm, such that every attribute is given a privacy budget $\epsilon/d$. Then, by the composition theorem~\cite{DworkR14}, the collection of all attributes satisfies $\epsilon$-LDP. This solution, however, offers inferior data utility. For example, suppose that all $d$ attributes are numeric, and we process each attribute using PM, setting the privacy budget to $\epsilon/d$. Then, by Lemma~\ref{lmm:basic-accuracy}, the amount of noise in the estimated mean of each attribute is $O\left(\frac{d\sqrt{\log d}}{\epsilon \sqrt{n}}\right)$, which is super-linear to $d$, and hence, can be excessive when $d$ is large. To address the problem, the first and only existing solution that we are aware of is by Duchi~et~al.~\cite{DuchiJW18} for the case of multiple numeric attributes, presented in Section~\ref{sec:multi-duchi}.

\eat{
A straightforward approach is to apply the Piecewise mechanism once for each attribute separately. In that case, however, the privacy budget $\epsilon$ needs to be divided among all attributes, so as to ensure $\epsilon$-LDP as a whole. Without loss of generality, assume that we have $d$ numerical attributes, and we assign $\epsilon/d$ budget to each of them. Then, the amount of noise incurred by the Piecewise mechanism on each attribute is increased $d$ times to $O\left(\frac{d\sqrt{\log(d/\beta)}}{\epsilon \sqrt{n}}\right)$, which is unsatisfactory when $d$ is large.
To address the problem, the first and only solution that we are aware of is proposed by Duchi et al.\ \cite{DuchiJW18} under the local differential privacy setting, presented below. 
}

\subsection{Existing Solution for Multiple Numeric Attributes} \label{sec:multi-duchi}

Algorithm~\ref{alg:duchimulti} shows the pseudo-code of Duchi et al.'s solution for multidimensional numeric data.
It takes as input a tuple $t_i \in [-1, 1]^d$ of user $u_i$ and a privacy parameter $\epsilon$, and outputs a perturbed vector $t^*_i \in \{-B, B\}^d$, where $B$ is a constant decided by $d$ and $\epsilon$. Upon receiving the perturbed tuples, the aggregator simply computes the average value for each attribute over all users, and outputs these averages as the estimates of the mean values for their corresponding attributes. Next, we focus on the calculation of $B$, which is rather complicated.

\begin{algorithm}[t]
\caption{Duchi et al.'s Solution~\cite{DuchiJW18} for \mbox{Multidimensional} Numeric Data.}\label{alg:duchimulti}
\SetKwInOut{Input}{input}
\SetKwInOut{Output}{output}
\Input{tuple $t_i \in [-1,1]^d$ and privacy parameter $\epsilon.$}
\Output{tuple $t^*_i \in \{-B, B\}^d.$}
    Generate a random tuple $v \in \{-1, 1\}^d$ by sampling each $v[A_j]$ independently from the following distribution:
    \parbox{68mm}{\begin{equation*}
    \Pr[v[A_j] = x] = \begin{cases}
        \frac{1}{2} + \frac{1}{2} t_i[A_j], & \text{if $x = 1$}  \\[2mm]
        \frac{1}{2} - \frac{1}{2} t_i[A_j], & \text{if $x = -1$}\\
    \end{cases}
    \end{equation*}}\;
    Let $T^+$ (resp.\ $T^-$) be the set of all tuples $t^* \in \{-B, B\}^d$ such that $t^* \cdot v \geq 0$ (resp.\ $t^* \cdot v \le 0$)\;
    Sample a Bernoulli variable $u$ that equals $1$ with ${e^\epsilon / (e^\epsilon + 1)}$ probability\;
    \If{$u = 1$}
    {
        \Return a tuple uniformly at random from $T^+$\;
    }
    \Else
    {
        \Return a tuple uniformly at random from $T^-$\;
    }
\end{algorithm}

Essentially, $B$ is a scaling factor to ensure that the expected value of a perturbed attribute is the same as that of the exact attribute value. First, we compute:

\begin{equation} \label{eqn:basic-Cd}
C_d = 
\begin{cases}
\frac{2^{d-1}}{{d-1 \choose {(d-1)/2} }}, &\text{if  $d$ is odd,}\\[4mm]
\frac{2^{d-1} + \frac{1}{2} {d \choose {d/2} } }{{d-1 \choose {d/2} }}
, &\text{otherwise.}
\end{cases}
\end{equation}
Then, $B$ is calculated by:
\begin{equation} \label{eqn:basic-B}
B = \frac{\exp(\epsilon) + 1}{\exp(\epsilon) - 1} \cdot C_d.
\end{equation}
Duchi et al.\ show that $\frac{1}{n} \sum_{i = 1}^n t^*_i[A_j]$ is an unbiased estimator of the mean of $A_j$, and
\begin{equation} \label{eqn:basic-error}
\E\left[\max_{j \in [1, d]} \left|\frac{1}{n} \sum_{i = 1}^n t^*_i[A_j] - \frac{1}{n} \sum_{i = 1}^n t_i[A_j]\right|\,\right] = O\left(\frac{\sqrt{d \log d}}{\epsilon \sqrt{n}}\right),
\end{equation}
which is asymptotically optimal \cite{DuchiJW18}.

Although Duchi et al's method can provide strong privacy assurance and asymptotic error bound, it is rather sophisticated, and it cannot handle the case that a tuple contains numeric attributes as well as categorical attributes. To address this issue, we present extensions of PM and HM that (i) are much simpler than Duchi et al.'s solution but achieve the same privacy assurance and asymptotic error bound, and (ii) can handle any combination of numeric and categorical attributes. For ease of exposition, we first extend PM and HM for the case when each $t_i$ contains only numeric attributes in Section~\ref{sec:multi-numeric}, and then discuss the case of arbitrary attributes in Section~\ref{sec:multi-categorical}. 

\eat{
\header
{\bf Problems in Duchi et al.'s method and a possible fix.} We implemented and evaluated Duchi et al.'s method, but found that whenever the number $d$ of attribute is even, the method yields a biased estimation of the mean of each attribute and incurs significant error. Then, we also found that it violates differential privacy when $d$ is even. To illustrate, consider that $d=2$ and we have an input tuple $t_i = \langle 1, 1 \rangle$, i.e., $t_i[A_1] = t_i[A_2] = 1$. Then, Line 1 in Algorithm~\ref{alg:duchimulti} would generates a tuple $v = \langle 1, 1\rangle$. Let $B$ be as defined in Equation~\eqref{eqn:basic-B}, and $T^+$ and $T^-$ be as defined in Line 2 in Algorithm~\ref{alg:duchimulti}. It can be verified that $T^+$ and $T^-$ contain $1$ and $3$ tuples, respectively, with
\begin{align*}
T^+ &= \big\{\ \langle B, B \rangle \big\}, \textrm{ and} \\
T^- &= \big\{\ \langle -B, -B \rangle, \langle -B, B \rangle, \langle B, -B \rangle \big\}.
\end{align*}
As such, by Lines 3-8 in Algorithm~\ref{alg:duchimulti}, the method outputs $\langle B, B \rangle$ with $\frac{e^\epsilon}{e^\epsilon + 1}$ probability. In contrast, each tuple in $T^-$ has only $\frac{1}{3e^\epsilon + 3}$ probability to be output.

By taking $t_i = \langle 1, 1 \rangle$ as input, Algorithm~\ref{alg:duchimulti} outputs $t^{*}_i$. Let $t^{*}_i[A_1]$ denote the noisy value of Attribute $A_1$ in $t^{*}_i$. Then the expectation of $t^{*}_i[A_1]$ can be shown as follows,
\begin{align*}
\E[t^*_i[A_1]] = & \textstyle \Pr\left[t_i^*[A_j] = B \right] \cdot B
   \textstyle + \Pr\left[t_i^*[A_j] = -B \right] \cdot \left(-B\right)  \\
= & \left(\frac{e^\epsilon}{e^\epsilon + 1} + \frac{1}{3(e^\epsilon + 1)}\right)\cdot B + \frac{2}{3(e^\epsilon + 1)} \cdot (-B)\\
= & \frac{3e^\epsilon -1}{3(e^\epsilon + 1)} \cdot B\\
= & \frac{(3e^\epsilon - 1)(3e^\epsilon + 3)}{3(e^\epsilon + 1)(e^\epsilon - 1)} \; \neq t_i[A_1].
\end{align*}
Therefore, Algorithm \ref{alg:duchimulti} outputs a biased estimation of each attribute.

Now consider another input tuple $t^\prime_i = \langle -1, -1 \rangle$. It follows that, for $t^\prime_i$, the algorithm outputs $\langle B, B \rangle$ with only $\frac{1}{3e^\epsilon + 3}$ probability. As a consequence,
\begin{align*}
\Pr[f(t_i) = \big\langle B, B \rangle \big] & = 3 e^\epsilon \cdot \Pr\big[f(t^\prime_i) = \langle B, B \rangle\big] \\
& > e^\epsilon \cdot \Pr\big[f(t^\prime_i) = \langle B, B \rangle\big],
\end{align*}
which indicates that the algorithm does not satisfy $\epsilon$-differential privacy.

We find that the above problems are caused by Line 3 in Algorithm~\ref{alg:duchimulti}, in that the Bernoulli variable $u$ and the constant $B$ are incorrectly defined for the case for $d$ is even. To address the problem, one possible fix we found is to re-define $u$ as a Bernoulli variable such that
\begin{equation*}
\Pr[u = 1] = \frac{e^\epsilon \cdot C_d}{(e^\epsilon - 1)C_d + 2^d}.
\end{equation*}
And when $d$ is even, $B$ is re-defined as
$$\frac{2^d + C_d \cdot (e^\epsilon-1)}{ { {d-1} \choose {d/2}} \cdot (e^\epsilon - 1)}.$$

It can be shown that, with this revised choice of $u$ and $B$, Algorithm~\ref{alg:duchimulti} achieves $\epsilon$-differential privacy, outputs an unbiased value and ensures the error bound in Equation~\ref{eqn:basic-error}. We omit the proofs for brevity.
}

\subsection{Extending PM and HM for Multiple Numeric Attributes} \label{sec:multi-numeric}


\begin{algorithm}[t]
\caption{Our Method for Multiple Numeric \mbox{Attributes}.}\label{alg:multinumeric}
\SetKwInOut{Input}{input}
\SetKwInOut{Output}{output}
\Input{tuple $t_i \in [-1,1]^d$ and privacy parameter $\epsilon$.}
\Output{tuple $t^*_i \in [-C\cdot d, \:\: C\cdot d]^d.$}
    Let $t^*_i = \langle 0, 0, \ldots, 0\rangle$\;
    Let $k = \max\left\{1, \min\left\{d, \, \left\lfloor \frac{\epsilon}{2.5}\right\rfloor\right\}\right\}$\;
    Sample $k$ values uniformly without replacement from $\{1, 2, \ldots, d\}$\; \label{alg:multinumeric2:line:sample}
    \For{each sampled value j}
    {Feed $t_i[A_j]$ and $\frac{\epsilon}{k}$ as input to PM or HM, and obtain a noisy value $x_{i,j}$\; \label{alg:multinumeric2:line:PM:HM}
    $t^{*}_i[A_j] = \frac{d}{k}   x_{i,j}$\;
    }
    \Return $t^*_i$
\end{algorithm}

Algorithm~\ref{alg:multinumeric} shows the pseudo-code of our extension of PM and HM for multidimensional numeric data. Given a tuple $t_i \in [-1, 1]^d$, the algorithm returns a perturbed tuple $t^*_i$ that has non-zero value on $k$ attributes, where
\begin{equation} \label{eqn:multi-k}
k = \max\left\{1, \min\left\{d, \, \left\lfloor \frac{\epsilon}{2.5}\right\rfloor\right\}\right\}.
\end{equation}
In particular, each $A_j$ of those $k$ attributes is selected uniformly at random (without replacement) from all $d$ attributes of $t_i$, and $t^*_i[A_j]$ is set to $\frac{d}{k} \cdot x$, where $x$ is generated by PM or HM given $t_i[A_j]$ and $\frac{\epsilon}{k}$ as input. 

The intuition of Algorithm~\ref{alg:multinumeric} is as follows. By requiring each user to submit $k$ (instead of $d$) attributes, it increases the privacy budget for each attribute from $\epsilon/d$ to $\epsilon/k$, which in turn reduces the noise variance incurred. As a trade-off, sampling $k$ out of $d$ attributes entails additional estimation error, but this trade-off can be balanced by setting $k$ to an appropriate value, which is shown in Equation~\ref{eqn:multi-k}. We derive the setting of $k$ by minimizing the worst-case noise variance of Algorithm~\ref{alg:multinumeric} when it utilizes PM (resp.\ HM)\footnote{We discuss how to obtain the value of $k$ in the full version \cite{tr}.}.



\begin{lemma} \label{lmm:multi-privacy}
Algorithm~\ref{alg:multinumeric} satisfies $\epsilon$-local differential privacy. In addition, given an input tuple $t_i$, it outputs a noisy tuple $t^*_i$, such that for any $j \in [1, d]$, $\E\left[t^{*}_i[A_j]\right] = t_i[A_j]$.
\end{lemma}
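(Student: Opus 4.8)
The plan is to prove the two assertions separately, in both cases by conditioning on the random set of attributes that Algorithm~\ref{alg:multinumeric} draws in Line~\ref{alg:multinumeric2:line:sample}. Write $S$ for this random index set, so $|S| = k$ and $S$ is uniform over the $\binom{d}{k}$ size-$k$ subsets of $\{1,\dots,d\}$; the crucial structural fact, used throughout, is that the distribution of $S$ does \emph{not} depend on the input tuple.

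\textbf{Privacy.} Fix two input tuples $t, t' \in [-1,1]^d$ and an arbitrary output $t^*$. Conditioned on a fixed value of $S$, the output of Algorithm~\ref{alg:multinumeric} is deterministically $0$ on the coordinates outside $S$, while on each $A_j \in S$ it equals $\frac{d}{k}$ times an independent draw of PM (or HM) applied to $t[A_j]$ with budget $\frac{\epsilon}{k}$ (Line~\ref{alg:multinumeric2:line:PM:HM}). By Lemma~\ref{lmm:basic-privacy}, PM with budget $\frac{\epsilon}{k}$ is $\frac{\epsilon}{k}$-LDP; HM is a mixture of PM and Duchi~et~al.'s solution, each $\frac{\epsilon}{k}$-LDP at that budget, and a mixture of $\frac{\epsilon}{k}$-LDP mechanisms is again $\frac{\epsilon}{k}$-LDP. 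Scaling by the fixed constant $\frac{d}{k}$ is post-processing and preserves the guarantee. Hence, conditioned on $S$, the map $t \mapsto t^*$ is the sequential composition of $k$ mechanisms each $\frac{\epsilon}{k}$-LDP, so the density ratio factorizes into $k$ factors each at most $e^{\epsilon/k}$, giving $\Pr[f(t) = t^* \mid S] \le e^{\epsilon}\,\Pr[f(t') = t^* \mid S]$. Averaging this inequality over $S$ with the data-independent weights $1/\binom{d}{k}$ yields $\Pr[f(t)=t^*] \le e^{\epsilon}\,\Pr[f(t')=t^*]$, which is Definition~\ref{def:ldp}.

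\textbf{Unbiasedness.} Fix $j \in [1,d]$. Since $k$ indices are sampled uniformly without replacement, $\Pr[j \in S] = k/d$. If $j \notin S$ then $t^*_i[A_j] = 0$; if $j \in S$ then $t^*_i[A_j] = \frac{d}{k}\,x_{i,j}$, where $x_{i,j}$ is the PM/HM output on input $t_i[A_j]$ with budget $\frac{\epsilon}{k}$, and $\E[x_{i,j}] = t_i[A_j]$ --- for PM this is Lemma~\ref{lmm:basic-privacy}, and for HM it follows since HM is a convex combination of PM and Duchi~et~al.'s solution, both unbiased (Equation~\ref{eqn:basic-duchi-variance}). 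The law of total expectation then gives
\begin{align*}
\E\!\left[t^*_i[A_j]\right] = \Pr[j \notin S]\cdot 0 + \Pr[j \in S]\cdot \frac{d}{k}\,\E[x_{i,j}] = \frac{k}{d}\cdot\frac{d}{k}\, t_i[A_j] = t_i[A_j].
\end{align*}

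The per-coordinate ingredients ($\frac{\epsilon}{k}$-LDP and unbiasedness of PM, HM, and Duchi~et~al.'s solution) are already in hand, so the only step that deserves care is the privacy argument: one must use that $S$ is sampled obliviously of the data, which is precisely what allows the per-$S$ $\epsilon$-LDP bound to be lifted, by averaging with common weights, to the unconditional mechanism. I expect no serious obstacle beyond stating this conditioning cleanly.
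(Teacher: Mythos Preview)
Your proof is correct and follows essentially the same approach as the paper's: invoke composition of $k$ budget-$\frac{\epsilon}{k}$ mechanisms for privacy, and use $\Pr[j\in S]=k/d$ together with the unbiasedness of PM/HM for the expectation claim. Your version is more carefully argued than the paper's---in particular, you make explicit that the sampling of $S$ is data-independent (which is what licenses averaging the per-$S$ $\epsilon$-LDP bound into the unconditional one), and you spell out why HM inherits $\frac{\epsilon}{k}$-LDP and unbiasedness as a mixture---but the underlying structure is the same.
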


The proof appears in the full version \cite{tr}.

\eat{
\begin{proof}
Since Algorithm~\ref{alg:multinumeric}  composes $k$ numbers of \mbox{$\frac{\epsilon}{k}$-LDP} operations, then by the composition theorem~\cite{DworkR14},  Algorithm~\ref{alg:multinumeric} satisfies $\epsilon$-LDP. In Algorithm~\ref{alg:multinumeric}, $t^{*}_i[A_j] $ equals $\frac{d}{k} x_{i,j}$ with probability $\frac{k}{d}$ and $0$ \vspace{1pt} with probability $1-\frac{k}{d}$. Hence, 
 $\E\left[t^{*}_i[A_j]\right] = \frac{k}{d} \cdot  \E\left[\frac{d}{k} x_{i,j}\right] =  \E[ x_{i,j}] = t_i[A_j]$, where the last step uses Lemma~\ref{lmm:basic-privacy}.
\eat{
Let $t^*$ be an output of Algorithm~\ref{alg:multinumeric}, and $A_j$ be the only attribute such that $t^*[A_j] \ne 0$. Let $t$ and $t^\prime$ be any two tuples, and $u$ (resp.\ $u^\prime$) be the Bernoulli variable generated in Line 3 of Algorithm~\ref{mech:pubreal} given $t$ (resp.\ $t^\p)$ as the input. In the following, we focus on the case when $t^*[A_j] = \frac{e^\epsilon + 1}{e^\epsilon - 1}d$; the case when
$t^*[A_j] = 0$ and
$t^*[A_j] = -\frac{e^\epsilon + 1}{e^\epsilon - 1}d$ can be analyzed in a similar manner.

By Algorithm~\ref{mech:pubreal}, we have
\begin{align*}
\frac{\Pr [t^* \mid t]}{\Pr[t^* \mid t^\p]} & = \frac {1/d \cdot \Pr[u = 1 \mid t]}{1/d \cdot \Pr[u^\p = 1 \mid t^\p] } \; \leq \frac {\max_t \Pr[u = 1 \mid t]} {\min_{t^\p} \Pr[u^\p = 1 \mid t^\p]}\\
& = \frac{\max_{t[A_j]\in [-1,1]} \left(t[A_j] \cdot (e^\epsilon-1) + e^\epsilon + 1\right)}{\min_{t^\p[A_j]\in [-1,1]} \left(t^\p[A_j]\cdot (e^\epsilon-1) + e^\epsilon + 1\right)}  = e^\epsilon.
\end{align*}
This completes the proof.
}
\end{proof}
}

\eat{
\begin{lemma} \label{lmm:basic-unbias}
Let $t^*_i$ be the output of Algorithm~\ref{alg:multinumeric} given an input tuple $t_i$. Then, for any $j \in [1, d]$, $\E[t^*[A_j]] = t[A_j]$.
\end{lemma}
\begin{proof}
\color{red}{add proof}
By Equation~\eqref{eqn:basic-improved},
\begin{align*}
\E[t^*[A_j]] = & \textstyle \Pr\left[t^*[A_j] = \frac{e^\epsilon+1}{e^\epsilon-1}\cdot d\right] \cdot \frac{e^\epsilon+1}{e^\epsilon-1}  \cdot d\\
  & {} \textstyle + \Pr\left[t^*[A_j] = -\frac{e^\epsilon+1}{e^\epsilon-1} \cdot d\right] \cdot \left(- \frac{e^\epsilon+1}{e^\epsilon-1}  \cdot d \right)  \\
  & {} \textstyle + \Pr\left[t^*[A_j] = 0\right] \cdot 0 \\
= & \frac{2 t[A_j] \cdot (e^\epsilon - 1)}{2e^\epsilon - 2} \; = t[A_j].
\end{align*}
\end{proof}
}

By Lemma~\ref{lmm:multi-privacy}, the aggregator can use $\frac{1}{n} \sum_{i = 1}^n t^*[A_j]$ as an unbiased estimator of the mean of $A_j$. The following lemma shows that the accuracy guarantee of this estimator matches that of Duchi et al.'s solution for multidimensional numeric data (see Equation~\ref{eqn:basic-error}), which has been proved to be asymptotically optimal \cite{DuchiJW18}. This indicates that Algorithm~\ref{alg:multinumeric}'s accuracy guarantee is also optimal in the asymptotic sense.


\begin{figure*}
  \centering
  \footnotesize
  \begin{tabular}{cccc}
  \multicolumn{4}{c}{\includegraphics[width=0.65\textwidth]{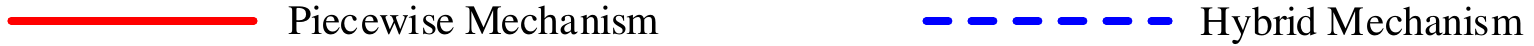}} \vspace{-2pt} \\
    \hspace{-3mm}\includegraphics[width=0.23\textwidth]{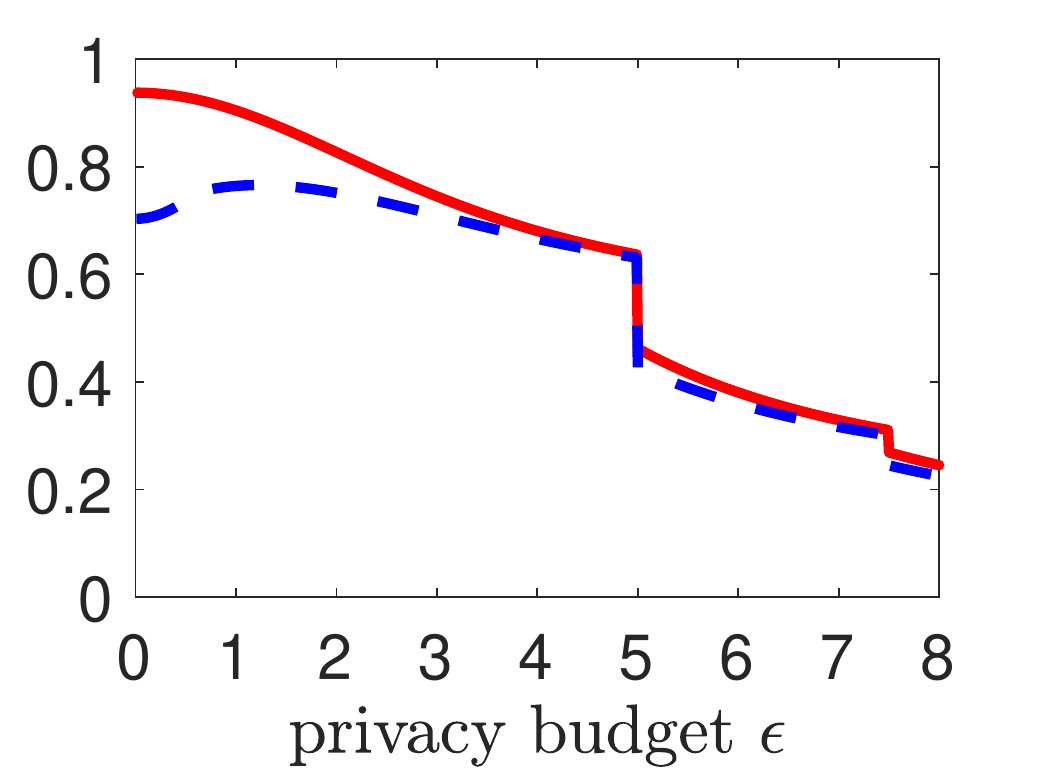} &
    \hspace{-4mm}\includegraphics[width=0.23\textwidth]{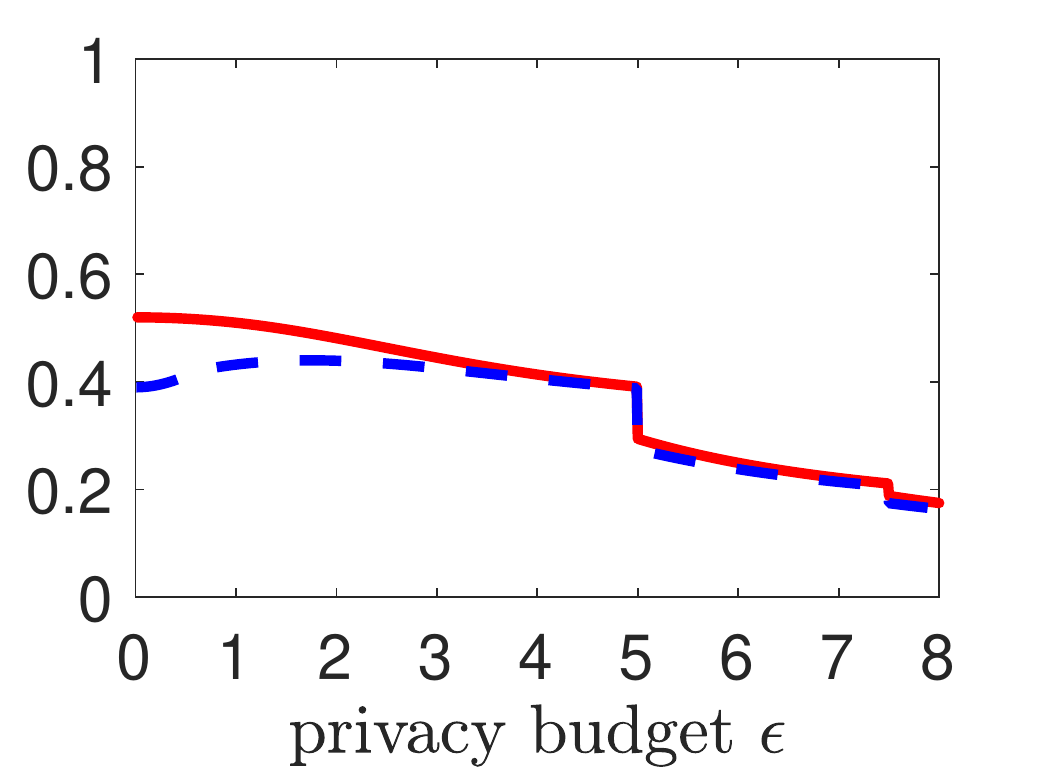} &
    \hspace{-4mm}\includegraphics[width=0.23\textwidth]{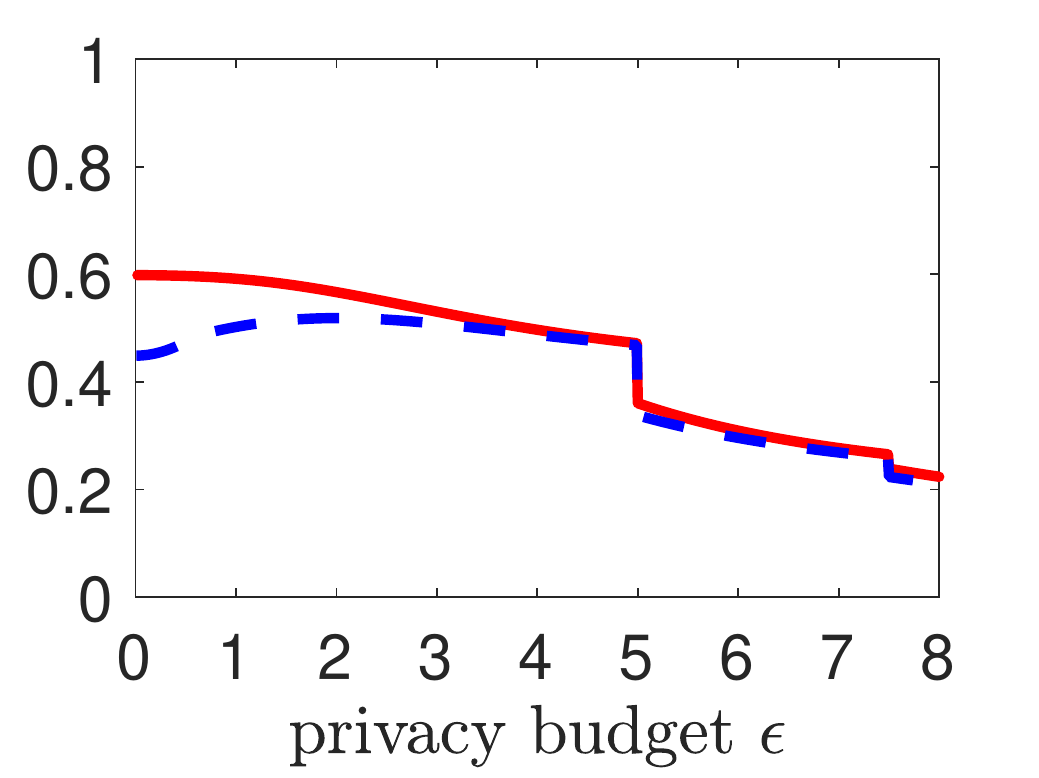} &
    \hspace{-4mm}\includegraphics[width=0.23\textwidth]{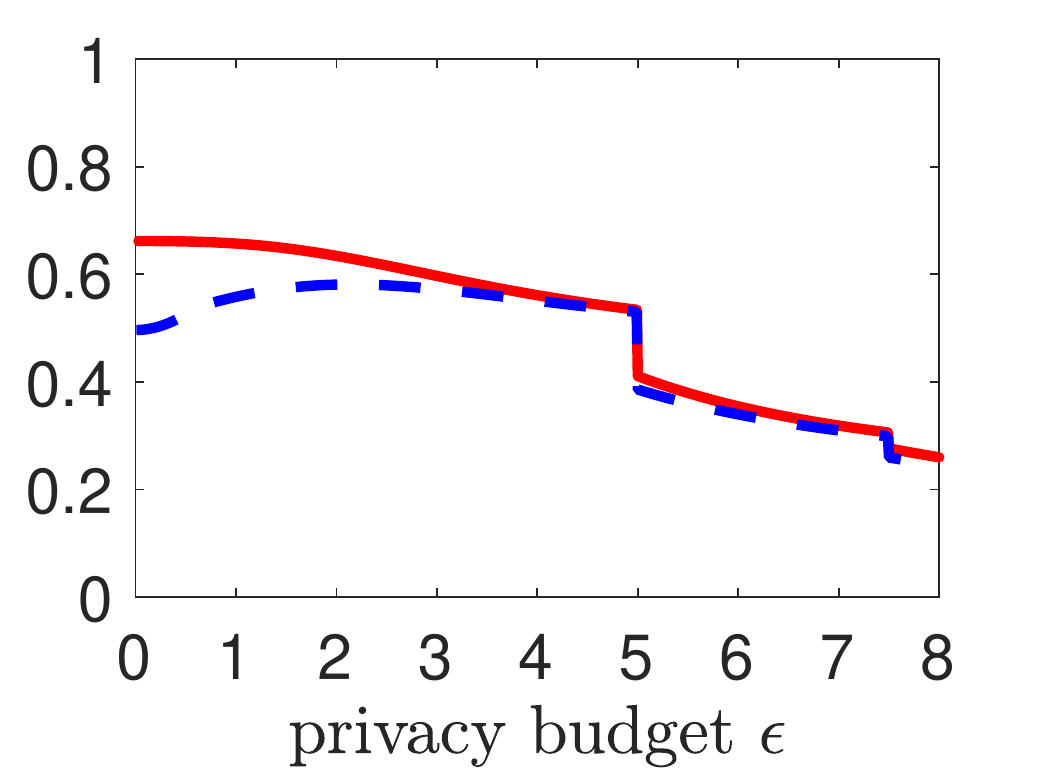} \\
    (a) $d=5$ & (b) $d=10$ & (c) $d=20$ & (d) $d=40$
   \end{tabular}
 \vspace{-6pt} \caption{The worst-case variance of PM (resp.\ HM) as a fraction of the worst-case variance of Duchi et al.'s solution, for various privacy budget $\epsilon$ and data dimensionality $d$.\vspace{-4mm}  
}
  \label{fig:CompareUnderMultipleDimensions}
  \end{figure*}

\begin{lemma} \label{lmm:multi-accuracy}
For any $j \in [1, d]$, let $Z[A_j] = \frac{1}{n} \sum_{i=1}^n t^*_i[A_j]$ and $X[A_j] = \frac{1}{n} \sum_{i=1}^n t_i[A_j]$. With at least $1 - \beta$ probability,
$$\max_{j \in [1, d]} \big|Z[A_j] - X[A_j]\big| = O\left(\frac{\sqrt{d \log(d/\beta)}}{\epsilon \sqrt{n}}\right).$$
\end{lemma}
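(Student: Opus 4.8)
The plan is to reduce the statement to a concentration bound for a sum of $n$ independent, bounded, mean-zero random variables (one per user), followed by a union bound over the $d$ attributes, which replaces $\beta$ by $\beta/d$. Concretely, I would fix an attribute $A_j$ and write $Z[A_j]-X[A_j]=\frac1n\sum_{i=1}^n Y_i$ with $Y_i=t^*_i[A_j]-t_i[A_j]$; the $Y_i$ are independent across $i$, and by Lemma~\ref{lmm:multi-privacy} each satisfies $\E[Y_i]=0$. Hence it suffices to control the per-user variance $\Var[Y_i]=\Var[t^*_i[A_j]]$ together with an almost-sure bound $|Y_i|\le R$, and then invoke a Bernstein-type tail inequality.

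Next I would bound these two quantities. By Algorithm~\ref{alg:multinumeric}, $A_j$ lies in the sampled set with probability $k/d$, and conditioned on this $t^*_i[A_j]=\frac dk x_{i,j}$, where $x_{i,j}$ is the output of PM (or HM) on input $t_i[A_j]$ with budget $\epsilon/k$; thus $\E[(t^*_i[A_j])^2]=\frac dk\,\E[x_{i,j}^2]$. By Lemma~\ref{lmm:basic-privacy} together with Corollary~\ref{Cor1} (HM's worst-case variance never exceeds PM's), $\Var[x_{i,j}]\le\frac{4e^{\epsilon/(2k)}}{3(e^{\epsilon/(2k)}-1)^2}$ and $|x_{i,j}|\le C:=\frac{e^{\epsilon/(2k)}+1}{e^{\epsilon/(2k)}-1}$, which gives $\Var[Y_i]\le\frac dk\bigl(1+\frac{4e^{\epsilon/(2k)}}{3(e^{\epsilon/(2k)}-1)^2}\bigr)$ and $|Y_i|\le\frac dk C+1$. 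I would then substitute $k=\max\{1,\min\{d,\lfloor\epsilon/2.5\rfloor\}\}$ from Equation~\ref{eqn:multi-k} and run a short case analysis: (i) $k=1$, which happens exactly when $\epsilon<5$, so $\epsilon/(2k)=\epsilon/2<2.5$, $e^{\epsilon/2}-1\ge\epsilon/2$, the bracket is $O(1/\epsilon^2)$ and $d/k=d$; (ii) $k=\lfloor\epsilon/2.5\rfloor\le d$, which forces $\epsilon/k\in[2.5,5]$, so the bracket is $\Theta(1)$ and $d/k=\Theta(d/\epsilon)$; (iii) $k=d<\lfloor\epsilon/2.5\rfloor$, which forces $\epsilon/k\ge2.5$, so the bracket is $O(1)$ (the map $x\mapsto\frac{4e^x}{3(e^x-1)^2}$ decreases on $x>0$) and $d/k=1$. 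In every case $\Var[Y_i]=O(d/\epsilon^2)$ and $|Y_i|=O(d/\epsilon)$, where (as in Duchi et al.'s Equation~\ref{eqn:basic-error}) the privacy budget $\epsilon$ is treated as a constant in the asymptotics.

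Finally I would apply Bernstein's inequality to $\frac1n\sum_i Y_i$ with variance proxy $V=O(d/\epsilon^2)$ and range $R=O(d/\epsilon)$, set the tail probability equal to $\beta/d$, and solve to obtain $|Z[A_j]-X[A_j]|=O\bigl(\sqrt{V\log(d/\beta)/n}+R\log(d/\beta)/n\bigr)=O\bigl(\sqrt{d\log(d/\beta)}/(\epsilon\sqrt n)\bigr)$; the variance term dominates once $n=\Omega(d\log(d/\beta))$, and for smaller $n$ the claimed bound holds trivially because $|Z[A_j]-X[A_j]|\le 2R=O(d/\epsilon)$. A union bound over $j\in[1,d]$ (the reason for solving with $\beta/d$ above) then yields $\max_{j\in[1,d]}|Z[A_j]-X[A_j]|=O\bigl(\sqrt{d\log(d/\beta)}/(\epsilon\sqrt n)\bigr)$ with probability at least $1-\beta$.

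I expect the main obstacle to be the case analysis of the penultimate paragraph: showing that the piecewise choice of $k$ (with its constant $2.5$) makes the factor $d/k$ and the per-coordinate second moment of PM/HM multiply out to $O(d/\epsilon^2)$ rather than something larger — this is precisely what turns the naive budget-splitting bound $O(d\sqrt{\log d}/(\epsilon\sqrt n))$ into the optimal $O(\sqrt{d\log d}/(\epsilon\sqrt n))$. A secondary, routine point is checking that the range contribution in Bernstein's bound is of lower order, which is dispatched by the trivial estimate when $n$ is small.
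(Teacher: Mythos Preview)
Your proposal is correct and follows essentially the same route as the paper: fix $A_j$, apply Bernstein's inequality to the centered sum $\sum_i(t^*_i[A_j]-t_i[A_j])$ using the variance bound $\Var[t^*_i[A_j]]=\frac{d}{k}\E[x_{i,j}^2]-(t_i[A_j])^2=O(d/\epsilon^2)$ and the range bound $\frac{d}{k}\cdot\frac{e^{\epsilon/(2k)}+1}{e^{\epsilon/(2k)}-1}=O(d/\epsilon)$, then take a union bound over the $d$ attributes. The one cosmetic difference is that the paper simply declares the asymptotics to be in the regime $\epsilon\to 0$, which forces $k=1$ by Equation~\ref{eqn:multi-k} and makes your case analysis on $k$ unnecessary; your three-case split is more work than the paper does, but it is consistent with treating $\epsilon$ as a fixed constant and does no harm.
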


The proof appears in the full version \cite{tr}.

\eat{
\begin{proof}
For any $i \in [1, n]$, by Lemma~\ref{lmm:multi-privacy},
the random variable $t^*_i[A_j] - t_i[A_j]$ has zero mean. In both PM and HM, \mbox{$\big|t^*_i[A_j] - t_i[A_j]\big| \leq \frac{d}{k} \cdot \frac{e^{\epsilon/(2k)} + 1}{e^{\epsilon/(2k)} -1}$}. Then
by Bernstein's inequality,
\begin{align}
& \textstyle \Pr\big[|Z[A_j] - X[A_j]|  \ge \lambda\big] \nonumber \\ & = \textstyle  \Pr\Big[ \Big|  \sum_{i=1}^n \big\{t^*_i[A_j] - t_i[A_j]\big\} \Big| \geq n \lambda\Big]  \nonumber  \\
&  \textstyle \le 2 \cdot \exp\left(-\frac{(n\lambda)^2}{2\sum_{i=1}^n   \Var[t^{*}_i\left[A_j]\right]  + \frac{2}{3} \cdot n \lambda \cdot  \frac{d}{k} \cdot \frac{e^{\epsilon/(2k)} + 1}{e^{\epsilon/(2k)} -1}}\right) . \label{eq:multi-accuracy:eq1}
\end{align}

We now evaluate $\Var[t^{*}_i\left[A_j]\right]$ appearing in (\ref{eq:multi-accuracy:eq1}) above. \vspace{1pt}
In Algorithm~\ref{alg:multinumeric}, $t^{*}_i[A_j] $ equals $\frac{d}{k} x_{i,j}$ with probability $\frac{k}{d}$ and $0$ with probability $1-\frac{k}{d}$. Also, $\E\left[t^{*}_i[A_j]\right] = t_i[A_j]$ by Lemma~\ref{lmm:multi-privacy}. Therefore, 
\begin{align} 
 & \textstyle     \Var[t^{*}_i\left[A_j]\right] 
  =    \E\big[(t^{*}_i[A_j])^2\big] - \left( \E\big[t^{*}_i[A_j]\big] \right)^2 
 \nonumber \\ &   =  \textstyle  \frac{k}{d} \E\left[\left(\frac{d}{k} x_{i,j}\right)^2\right]   - \left(t_i[A_j]\right)^2   =  \textstyle \frac{d}{k} \E\left[\left(x_{i,j}\right)^2\right] -  \left(t_i[A_j]\right)^2. \label{eq:multiple:variance:eq1:utility}
\end{align}

Note that asymptotic expressions involving $\epsilon$ are in the sense of $\epsilon \to 0$.
 In Algorithm~\ref{alg:multinumeric}, if Line~\ref{alg:multinumeric2:line:PM:HM} uses 
PM, Lemma~\ref{lmm:basic-privacy} implies
\begin{align} 
 \textstyle   &  \E\left[\left(x_{i,j}\right)^2\right]
 = \textstyle  \Var\left[x_{i,j}\right] + \left(\E\left[x_{i,j}\right]\right)^2 \nonumber \\ & =  \textstyle \frac{{(t_i[A_j])}^2}{e^{\epsilon/(2k)}-1}+\frac{e^{\epsilon/(2k)}+3}{3(e^{\epsilon/(2k)} - 1)^2} + \left(t_i[A_j]\right)^2 = O\left(  \frac{k^2}{ \epsilon^2 } \right).\label{eq:multiple:variance:eq1}
\end{align}
In Algorithm~\ref{alg:multinumeric}, if Line~\ref{alg:multinumeric2:line:PM:HM} uses 
HM, Equation~\ref{eqn:HM-duchi-variance} implies
\begin{align} 
 \textstyle    &\textstyle  \E\left[\left(x_{i,j}\right)^2\right]
= \Var\left[x_{i,j}\right]  + \left(\E\left[x_{i,j}\right]\right)^2   \nonumber \\ & =  \textstyle   \begin{cases}
\frac{e^{\epsilon/(2k)}+3}{3e^{\epsilon/(2k)}(e^{\epsilon/(2k)}-1)} + \frac{(e^{\epsilon/k}+1)^2}{e^{\epsilon/(2k)}(e^{\epsilon/k}-1)^2} + \left(t_i[A_j]\right)^2  , & ~ \\ ~ & \hspace{-50pt} \text{for }\epsilon/k > \epsilon^*,  \\ \left(\frac{e^{\epsilon/k}+1}{e^{\epsilon/k}-1}\right)^2 + \left(t_i[A_j]\right)^2, &  \hspace{-50pt}  \text{for } \epsilon/k \leq \epsilon^*,
\end{cases}  \nonumber \\ &   = \textstyle O\left(  \frac{k^2}{ \epsilon^2 } \right), \label{eq:multiple:variance:eq1:HM}
\end{align}
where $\epsilon^*$ is defined by Equation~\ref{eqn:numeric-eps-star}.

Substituting Equation~\ref{eq:multiple:variance:eq1} or~\ref{eq:multiple:variance:eq1:HM} into Equation~\ref{eq:multiple:variance:eq1:utility}, no matter whether Line~\ref{alg:multinumeric2:line:PM:HM} of Algorithm~\ref{alg:multinumeric} uses PM or
HM, we always have 
\begin{align} 
 & \textstyle  \Var\big[t^{*}_i[A_j]\big]   = \textstyle  \frac{d}{k}  \cdot O\left(  \frac{k^2}{ \epsilon^2 } \right)-  \left(t_i[A_j]\right)^2 = O\left(\frac{dk}{\epsilon^2}\right).  \label{eq:multiple:variance2:utility}
\end{align}

%

Applying Equation~\ref{eq:multiple:variance2:utility} and $\frac{e^{\epsilon/(2k)} + 1}{e^{\epsilon/(2k)} -1} \cdot \frac{d}{k} = O\big(\frac{k}{\epsilon}\big) \cdot \frac{d}{k}  = O\big(\frac{d}{\epsilon}\big) $ to Inequality~\ref{eq:multi-accuracy:eq1}, we obtain
\begin{align*}
& \textstyle  \Pr\big[|Z[A_j] - X[A_j]| \ge \lambda\big]  \le   2\cdot \exp \left(-\frac {n \lambda^2}{  O(dk/\epsilon^2) + \lambda \cdot O(d/\epsilon)  } \right).
\end{align*}
By the union bound, there exists $\lambda = O\left(\frac{\sqrt{d k \log (d/\beta)}}{\epsilon \sqrt{n}}\right) $ such that $\max_{j \in [1, d]} |Z[A_j] - X[A_j]| < \lambda$
holds with at least $1-\beta$ probability. 
By Equation~\ref{eqn:multi-k}, $k = 1$ for $\epsilon < 5 $. Since asymptotic expressions involving $\epsilon$ consider $\epsilon \to 0$, $\lambda $ can also be written as $O\left(\frac{\sqrt{d  \log (d/\beta)}}{\epsilon \sqrt{n}}\right)$.%
\end{proof}
}

\eat{
Applying Equations~\ref{eq:multiple:variance:eq1} and~\ref{eq:multiple:variance:eq1:HM} to Equation~\ref{eq:multiple:variance:eq1:utility}, we present the noise variances of our PM and HM in Equations~\ref{PM-variance-multi-dimension} and~\ref{HM-variance-multi-dimension} of Lemma~\ref{lmm:multi-dimension-variance} below. Also, in Duchi et al.'s solution given by Algorithm~\ref{alg:duchimulti}, since $(t^{*}_i[A_j])^2$ always equals $B^2$ (i.e., $\left( \frac{e^{\epsilon} + 1}{e^{\epsilon} - 1} \right)^2 { C_d}^2$), it follows that $\E\big[(t^{*}_i[A_j])^2\big]  = \left( \frac{e^{\epsilon} + 1}{e^{\epsilon} - 1} \right)^2 { C_d}^2$, which along with Duchi~et~al.'s result $\E\big[t^{*}_i[A_j]\big] = t_i[A_j]$ implies Equation~\ref{Duchi-variance-multi-dimension} below.
}

Lemma \ref{lmm:multi-dimension-variance} discusses the noise variances induced by Duchi~et~al.'s solution, PM and HM, respectively.

\begin{lemma} \label{lmm:multi-dimension-variance}

For a $d$-dimensional numeric tuple $t_i$ which is perturbed as $t^{*}_i$ under $\epsilon$-LDP, and for each $A_j$ of the $d$ attributes, the variance of $t^{*}_i[A_j]$ induced by Duchi~et~al.'s solution is
\begin{align} 
 & \textstyle   
\Var_D\big[t^{*}_i[A_j]\big] = \left( \frac{e^{\epsilon} + 1}{e^{\epsilon} - 1} \right)^2 {C_d}^2  -  \left(t_i[A_j]\right)^2,  \label{Duchi-variance-multi-dimension}
\end{align}
where $C_d $ 
 is defined by Equation~\ref{eqn:basic-Cd}. Meanwhile,
the variance of $t^{*}_i[A_j]$ induced by PM is
\begin{align} 
 & \textstyle  \Var_P\big[t^{*}_i[A_j]\big] =
 \frac{d(e^{\epsilon/(2k)}+3)}{3k(e^{\epsilon/(2k)} - 1)^2}   +  \left[ \frac{d\cdot e^{\epsilon/(2k)}}{k(e^{\epsilon/(2k)} - 1) } -1\right] \left(t_i[A_j]\right)^2;
  \label{PM-variance-multi-dimension}
\end{align}
and the variance of $t^{*}_i[A_j]$ induced by HM is
\begin{align} 
 & \textstyle  \Var_H\big[t^{*}_i[A_j]\big]
= \nonumber \\ &
 \begin{cases}
\hspace{-1pt}\frac{d}{k} \hspace{-1pt} \left[ \frac{e^{\epsilon/(2k)}+3}{3e^{\epsilon/(2k)}(e^{\epsilon/(2k)}-1)} \hspace{-1pt}+ \hspace{-1pt}\frac{(e^{\epsilon/k}+1)^2}{e^{\epsilon/(2k)}(e^{\epsilon/k}-1)^2} \right] \hspace{-1pt}+\hspace{-1pt}  \left(\frac{d}{k}  \hspace{-1pt}-\hspace{-1pt} 1 \right ) \hspace{-1pt}\left(t_i[A_j]\right)^2 , & ~ \\ ~ & \hspace{-100pt} \text{for }\epsilon/k > \epsilon^*,  \\ \hspace{-1pt}\frac{d}{k} \left(\frac{e^{\epsilon/k}+1}{e^{\epsilon/k}-1}\right)^2 +  \left(\frac{d}{k}  - 1 \right ) \left(t_i[A_j]\right)^2 , &  \hspace{-100pt}  \text{for } \epsilon/k \leq \epsilon^*,
\end{cases} \nonumber \\[-20pt] & ~
 \label{HM-variance-multi-dimension}
\end{align}
where $\epsilon^*$ is defined by Equation~\ref{eqn:numeric-eps-star}.
\end{lemma}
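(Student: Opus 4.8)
The plan is to reduce every variance in the statement to a one‑dimensional quantity already computed, exploiting that all three estimators are unbiased. In each case $\E\big[t^*_i[A_j]\big] = t_i[A_j]$ --- by Duchi~et~al.'s analysis for Algorithm~\ref{alg:duchimulti}, and by Lemma~\ref{lmm:multi-privacy} for Algorithm~\ref{alg:multinumeric} --- so $\Var\big[t^*_i[A_j]\big] = \E\big[(t^*_i[A_j])^2\big] - (t_i[A_j])^2$, and it suffices to evaluate the second moment. For Algorithm~\ref{alg:duchimulti} this is immediate: its output always satisfies $t^*_i[A_j] \in \{-B, B\}$ with $B = \frac{e^\epsilon+1}{e^\epsilon-1}\,C_d$ as in Equation~\ref{eqn:basic-B}, so $(t^*_i[A_j])^2 \equiv B^2$ deterministically and $\Var\big[t^*_i[A_j]\big] = B^2 - (t_i[A_j])^2 = \big(\frac{e^\epsilon+1}{e^\epsilon-1}\big)^2 C_d^2 - (t_i[A_j])^2$, which is Equation~\ref{Duchi-variance-multi-dimension}.

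For Algorithm~\ref{alg:multinumeric}, I would condition on the event $E_j$ that $A_j$ is one of the $k$ indices sampled in Line~\ref{alg:multinumeric2:line:sample}. Since those indices are drawn uniformly without replacement from $\{1,\dots,d\}$, each fixed $j$ lies in the sample with probability $k/d$. On $E_j$ we have $t^*_i[A_j] = \tfrac{d}{k}\,x_{i,j}$, where $x_{i,j}$ is the one‑dimensional noisy value produced in Line~\ref{alg:multinumeric2:line:PM:HM} by feeding $t_i[A_j]$ and budget $\epsilon/k$ to PM (resp.\ HM); on the complementary event $t^*_i[A_j] = 0$. As $x_{i,j}$ depends only on $t_i[A_j]$ and the internal coins of PM/HM, this yields
\[
\E\big[(t^*_i[A_j])^2\big] = \frac{k}{d}\cdot\frac{d^2}{k^2}\,\E\big[x_{i,j}^2\big] = \frac{d}{k}\Big(\Var\big[x_{i,j}\big] + (t_i[A_j])^2\Big),
\]
where the last step uses $\E[x_{i,j}] = t_i[A_j]$ (Lemma~\ref{lmm:basic-privacy} for PM, and hence for HM, a mixture of PM and Duchi~et~al.'s unbiased one‑dimensional mechanism). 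Subtracting $(t_i[A_j])^2$ gives the master identity $\Var\big[t^*_i[A_j]\big] = \tfrac{d}{k}\Var\big[x_{i,j}\big] + \big(\tfrac{d}{k}-1\big)(t_i[A_j])^2$, so everything reduces to $\Var[x_{i,j}]$ at budget $\epsilon/k$.

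What remains is substitution and algebra. For PM, I would set $\epsilon\leftarrow\epsilon/k$ in Lemma~\ref{lmm:basic-privacy}, getting $\Var[x_{i,j}] = \frac{(t_i[A_j])^2}{e^{\epsilon/(2k)}-1}+\frac{e^{\epsilon/(2k)}+3}{3(e^{\epsilon/(2k)}-1)^2}$, plug it into the master identity, and collect the coefficient of $(t_i[A_j])^2$ through $\tfrac{d}{k}\big(1+\tfrac{1}{e^{\epsilon/(2k)}-1}\big)-1 = \tfrac{d\,e^{\epsilon/(2k)}}{k(e^{\epsilon/(2k)}-1)}-1$, which reproduces Equation~\ref{PM-variance-multi-dimension}. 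For HM, recall $\Var[x_{i,j}] = \alpha\,\sigma^2_P(t_i[A_j],\epsilon/k) + (1-\alpha)\,\sigma^2_D(t_i[A_j],\epsilon/k)$ with $\alpha$ set by Equation~\ref{eq-HM-alpha} at budget $\epsilon/k$, where $\sigma^2_P$ and $\sigma^2_D$ are supplied by Lemma~\ref{lmm:basic-privacy} and Equation~\ref{eqn:basic-duchi-variance}. When $\epsilon/k > \epsilon^*$, the value $\alpha = 1-e^{-\epsilon/(2k)}$ is precisely the one for which the two terms quadratic in $t_i[A_j]$ cancel (the computation behind Lemma~\ref{lmm:numeric-hybrid-min}), so $\Var[x_{i,j}]$ equals the one‑dimensional worst‑case HM variance of Equation~\ref{eqn:HM-duchi-variance} evaluated at $\epsilon/k$; when $\epsilon/k\le\epsilon^*$, $\alpha=0$ and $x_{i,j}$ is simply Duchi~et~al.'s one‑dimensional output at budget $\epsilon/k$, whose variance is read off from Equation~\ref{eqn:basic-duchi-variance}. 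Feeding each case into the master identity gives the two branches of Equation~\ref{HM-variance-multi-dimension}.

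The conceptual content --- the conditioning on $E_j$ --- is short and self‑contained, and nothing beyond it is needed. The only delicate point is the HM algebra: for the specific $\alpha = 1-e^{-\epsilon/(2k)}$ one must check that every term quadratic in $t_i[A_j]$ inside $\Var[x_{i,j}]$ vanishes and that the surviving constant matches Equation~\ref{eqn:HM-duchi-variance} term by term (with $\epsilon$ rescaled to $\epsilon/k$). As these are the same manipulations already carried out in the proofs of Lemmas~\ref{lmm:basic-privacy} and~\ref{lmm:numeric-hybrid-min}, I expect this to be routine bookkeeping rather than a real obstacle.
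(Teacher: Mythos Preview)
Your approach is essentially the paper's own: it too reduces everything to $\Var[t^*_i[A_j]] = \E[(t^*_i[A_j])^2] - (t_i[A_j])^2$, notes $(t^*_i[A_j])^2 \equiv B^2$ for Algorithm~\ref{alg:duchimulti}, and for Algorithm~\ref{alg:multinumeric} conditions on whether $A_j$ is among the $k$ sampled indices to obtain the identity $\Var[t^*_i[A_j]] = \tfrac{d}{k}\,\E[x_{i,j}^2] - (t_i[A_j])^2$, into which the one-dimensional second moments from Lemma~\ref{lmm:basic-privacy} and Equation~\ref{eqn:HM-duchi-variance} are substituted.

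One small caveat on the second HM branch. When $\epsilon/k \le \epsilon^*$ you have $\alpha = 0$, so $x_{i,j}$ is Duchi's one-dimensional output and $x_{i,j}^2 = \big(\tfrac{e^{\epsilon/k}+1}{e^{\epsilon/k}-1}\big)^2$ deterministically; feeding this into your master identity yields $\tfrac{d}{k}\big(\tfrac{e^{\epsilon/k}+1}{e^{\epsilon/k}-1}\big)^2 - (t_i[A_j])^2$, not the $+\,(\tfrac{d}{k}-1)(t_i[A_j])^2$ printed in Equation~\ref{HM-variance-multi-dimension}. The paper obtains the latter by substituting the \emph{worst-case} one-dimensional HM variance from Equation~\ref{eqn:HM-duchi-variance} for $\Var[x_{i,j}]$ rather than the input-dependent Duchi variance of Equation~\ref{eqn:basic-duchi-variance}; so the mismatch lives in the stated formula, not in your argument. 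Be prepared for your (correct) computation not to match that line verbatim.
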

\eat{
\begin{proof}
Applying Equations~\ref{eq:multiple:variance:eq1} and~\ref{eq:multiple:variance:eq1:HM} to Equation~\ref{eq:multiple:variance:eq1:utility}, we can get the noise variances of our PM and HM in Equations~\ref{PM-variance-multi-dimension} and~\ref{HM-variance-multi-dimension} in this lemma. Also, in Duchi et al.'s solution given by Algorithm~\ref{alg:duchimulti}, since $(t^{*}_i[A_j])^2$ always equals $B^2$ (i.e., $\left( \frac{e^{\epsilon} + 1}{e^{\epsilon} - 1} \right)^2 { C_d}^2$), it follows that $\E\big[(t^{*}_i[A_j])^2\big]  = \left( \frac{e^{\epsilon} + 1}{e^{\epsilon} - 1} \right)^2 { C_d}^2$, which along with Duchi~et~al.'s result $\E\big[t^{*}_i[A_j]\big] = t_i[A_j]$ implies Equation~\ref{Duchi-variance-multi-dimension}.
\end{proof}
}

From Equations~\ref{Duchi-variance-multi-dimension},~\ref{PM-variance-multi-dimension}, and~\ref{HM-variance-multi-dimension}, we can prove the following:
\begin{corollary} \label{cor-PMHMdgeq1}
For any $d > 1$ and $\epsilon > 0$, both PM and HM outperform Duchi~et~al.'s solution in minimizing the worst-case noise variance; more specifically, for any $d > 1$ and $\epsilon > 0$,
\begin{align} 
  \textstyle{\max\limits_{t_i[A_j] \in [-1, 1]}   \Var_H\big[t^{*}_i[A_j]\big]} & < \textstyle{\max\limits_{t_i[A_j] \in [-1, 1]}  \Var_P\big[t^{*}_i[A_j]\big]} \nonumber \\ &  < \textstyle{\max\limits_{t_i[A_j] \in [-1, 1]}  \Var_D\big[t^{*}_i[A_j]\big]}.
  \label{relation-variance-multi-dimension}
\end{align}
\end{corollary}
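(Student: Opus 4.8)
The plan is to turn each of the three worst-case variances of Lemma~\ref{lmm:multi-dimension-variance} into a closed form, then obtain $\MaxVar_H<\MaxVar_P$ by reducing it to the one-dimensional comparisons already proved, and finally handle $\MaxVar_P<\MaxVar_D$, which is the only substantial part. Write $\MaxVar_D,\MaxVar_P,\MaxVar_H$ for $\max_{t_i[A_j]\in[-1,1]}$ of the three variances in Equations~\ref{Duchi-variance-multi-dimension}--\ref{HM-variance-multi-dimension}. First I would note that each of these variances is affine in $(t_i[A_j])^2\in[0,1]$. The coefficient of $(t_i[A_j])^2$ in $\Var_D$ is $-1$, so $\MaxVar_D=C_d^2\left(\frac{e^\epsilon+1}{e^\epsilon-1}\right)^2$, attained at $t_i[A_j]=0$. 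The coefficients in $\Var_P$ and $\Var_H$ are $\frac{d}{k}\cdot\frac{e^{\epsilon/(2k)}}{e^{\epsilon/(2k)}-1}-1$ and $\frac{d}{k}-1$, both nonnegative since $d\ge k$ and $\frac{e^{\epsilon/(2k)}}{e^{\epsilon/(2k)}-1}>1$, so both maxima sit at $(t_i[A_j])^2=1$. Using the identity $\frac{1}{y-1}+\frac{y+3}{3(y-1)^2}=\frac{4y}{3(y-1)^2}$ of Lemma~\ref{lmm:basic-privacy} with $y=e^{\epsilon/(2k)}$, and observing that the terms multiplying $d/k$ in Equation~\ref{HM-variance-multi-dimension} are precisely the one-dimensional worst-case HM variance $V_H(\epsilon/k)$ of Equation~\ref{eqn:HM-duchi-variance}, one gets $\MaxVar_P=\frac{d}{k}\bigl(V_P(\epsilon/k)+1\bigr)-1$ and $\MaxVar_H=\frac{d}{k}\bigl(V_H(\epsilon/k)+1\bigr)-1$, where $V_P(\epsilon')=\frac{4e^{\epsilon'/2}}{3(e^{\epsilon'/2}-1)^2}$ is the one-dimensional worst-case PM variance.

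For $\MaxVar_H<\MaxVar_P$: the same $k$ of Equation~\ref{eqn:multi-k} appears in both expressions, which share the prefactor $d/k$ and the additive $\pm1$, so the inequality is equivalent to the one-dimensional statement $V_H(\epsilon/k)<V_P(\epsilon/k)$. When $\epsilon/k>\epsilon^*$ this is exactly Corollary~\ref{Cor1}. When $\epsilon/k\le\epsilon^*$, HM reduces to Duchi~et~al.'s solution, so $V_H(\epsilon/k)=\left(\frac{e^{\epsilon/k}+1}{e^{\epsilon/k}-1}\right)^2$, and $V_H(\epsilon/k)<V_P(\epsilon/k)$ holds because $\epsilon/k\le\epsilon^*<\epsilon^{\#}$, the two sides coincide at $\epsilon^{\#}$, and the PM side is strictly larger below $\epsilon^{\#}$ (the defining property of $\epsilon^{\#}$). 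Hence $\MaxVar_H<\MaxVar_P$ for all $d>1$, $\epsilon>0$.

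For $\MaxVar_P<\MaxVar_D$ I would first record that $C_d^2\ge\frac43 d$ for every $d\ge2$, with equality only at $d=3$: the bound $\binom{2m}{m}\le 4^m/\sqrt{3m+1}$ gives, for odd $d=2m+1$, $C_d=4^m/\binom{2m}{m}\ge\sqrt{3m+1}$, hence $C_d^2\ge\frac{3d-1}{2}\ge\frac43 d$ for $d\ge3$; for even $d=2m$, $C_d=4^m/\binom{2m}{m}+1\ge\sqrt{3m+1}+1$, hence $C_d^2>\frac83 m=\frac43 d$. In particular $C_d^2\ge d$. Then split on $k$. If $k=1$, substituting $s=e^{\epsilon/2}>1$ and clearing the common factor $3(s-1)^2$ reduces the claim to $[4ds+3(d-1)(s-1)^2](s+1)^2<4d(s^2+1)^2$ for $s>1$, because the left side is then below $\frac43 d\cdot\frac{3(s^2+1)^2}{(s+1)^2}\le C_d^2\cdot\frac{3(s^2+1)^2}{(s+1)^2}$. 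The difference $g(s)=4d(s^2+1)^2-[4ds+3(d-1)(s-1)^2](s+1)^2$ is the palindromic quartic $(d+3)s^4-4ds^3+(6d-6)s^2-4ds+(d+3)$, and with $w=s+s^{-1}\ge2$ one has $g(s)/s^2=(d+3)w^2-4dw+4(d-3)=(d+3)(w-2)\bigl(w-\tfrac{2(d-3)}{d+3}\bigr)$, which is positive for $w>2$ since the second root is $<2$ for $d\ge2$; thus $g(s)>0$ for $s\ne1$. If $k\ge2$, Equation~\ref{eqn:multi-k} forces $\epsilon/k\ge2.5>2\ln3$, so $e^{\epsilon/(2k)}>3$ and $V_P(\epsilon/k)<1$ (equivalently $3e^{\epsilon/k}-10e^{\epsilon/(2k)}+3>0$), whence $\MaxVar_P<\frac{2d}{k}-1\le d-1<d\le C_d^2\le\MaxVar_D$.

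I expect the main obstacle to be the $k=1$ branch of $\MaxVar_P<\MaxVar_D$ for small $\epsilon$: there both variances blow up like $1/\epsilon^2$ and, when $d=3$, their leading constants coincide exactly ($C_3^2=4=\frac43\cdot3$), so no crude estimate separates them and one must exploit the palindromic factorization of $g$ to make strictness transparent. Establishing $C_d^2\ge\frac43 d$ uniformly — a standard central-binomial estimate that is tight precisely at $d=3$ — is the other place that needs care. Everything else, including the reduction to closed forms and the entire $\MaxVar_H<\MaxVar_P$ argument, is mechanical once the one-dimensional results are in hand.
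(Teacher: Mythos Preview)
Your argument is correct and complete. The paper itself does not give a proof of this corollary in the body: it simply writes ``From Equations~\ref{Duchi-variance-multi-dimension},~\ref{PM-variance-multi-dimension}, and~\ref{HM-variance-multi-dimension}, we can prove the following'' and defers the details to the technical report~\cite{tr}. So there is no in-paper proof to compare against; your write-up actually supplies what the paper omits.

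A few remarks on your approach. The reduction $\MaxVar_P=\tfrac{d}{k}(V_P(\epsilon/k)+1)-1$ and $\MaxVar_H=\tfrac{d}{k}(V_H(\epsilon/k)+1)-1$ is exactly the right structural observation: it cleanly recycles Corollary~\ref{Cor1} for the $\MaxVar_H<\MaxVar_P$ part. Your handling of $\MaxVar_P<\MaxVar_D$ is the substantive contribution. The palindromic factorization $g(s)/s^2=(d+3)(w-2)\bigl(w-\tfrac{2(d-3)}{d+3}\bigr)$ with $w=s+s^{-1}$ is an elegant way to make the $k=1$ case transparent, and you correctly identify $d=3$ as the tight case where $C_d^2=\tfrac{4}{3}d$ exactly, so that only the strict positivity of $g$ for $s>1$ separates the two quantities. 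The $k\ge2$ branch is handled crisply via $\epsilon/k\ge 2.5>2\ln 3$ forcing $V_P(\epsilon/k)<1$.

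Two minor presentational points. First, in the $\epsilon/k\le\epsilon^*$ subcase of $\MaxVar_H<\MaxVar_P$, you can cite Corollary~\ref{Cor1} directly (it already states $\MaxVar_H=\MaxVar_D<\MaxVar_P$ in one dimension for that range), so invoking $\epsilon^{\#}$ is unnecessary. Second, your chain at the end should read $C_d^2<\MaxVar_D$ rather than $C_d^2\le\MaxVar_D$, since $\bigl(\tfrac{e^\epsilon+1}{e^\epsilon-1}\bigr)^2>1$ strictly; this is cosmetic but worth tightening.
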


\eat{
To intuitively understand Corollary~\ref{cor-PMHMdgeq1}, we can consider small $\epsilon$ as an example. For small $\epsilon$, $\max\limits_{t_i[A_j] \in [-1, 1]}  \Var_D\big[t^{*}_i[A_j]\big]$ is about $\frac{{C_d}^2}{\epsilon^2}$ from Equation~\ref{Duchi-variance-multi-dimension}. For small $\epsilon$ which induces $k=1$ by Equation~\ref{eqn:multi-k},  $\max\limits_{t_i[A_j] \in [-1, 1]}  \Var_P\big[t^{*}_i[A_j]\big]$ is about $\frac{d}{3}\cdot \frac{4}{3(\epsilon/2)^2} + \frac{d}{\epsilon/2} - 1$ from Equation~\ref{PM-variance-multi-dimension}. Meanwhile, with $\epsilon/k \leq \epsilon^*$, $\max\limits_{t_i[A_j] \in [-1, 1]}  \Var_H\big[t^{*}_i[A_j]\big]$ is about $\frac{d}{\epsilon^2} + d - 1$ from Equation~\ref{HM-variance-multi-dimension}. For small $\epsilon$ and large $d$, we have $\frac{{C_d}^2}{\epsilon^2}< \frac{d}{3}\cdot \frac{4}{3(\epsilon/2)^2} + \frac{d}{\epsilon/2} - 1<\frac{d}{\epsilon^2} + d - 1$, which help understand (\ref{relation-variance-multi-dimension}).
}

To illustrate Corollary~\ref{cor-PMHMdgeq1}, Fig.~\ref{fig:CompareUnderMultipleDimensions} shows the worst-case variance of PM (resp.\ HM) as a fraction of the worst-case variance of Duchi~et~al.'s solution, for various $d$ and privacy budget $\epsilon$. Observe that for $d = 5, 10, 20, 40$, the wort-case variance of HM is at most 77\% of that of Duchi~et~al.'s solution, and PM's worst-case variance is also smaller than the latter. In our experiments, we demonstrate that both HM and PM outperform Duchi~et~al.'s solution in terms of the empirical accuracy for multidimensional numeric data.

\eat{
We provide an example below. For $\epsilon \leq \epsilon^* \approx 0.61$, given $k=1$ by Equation~\ref{eqn:multi-k}, the ratio of the maximal $\Var_P\big[t^{*}_i[A_j]\big]$ (i.e., $\Var_P[\pm 1]$) over the maximal $\Var_D\big[t^{*}_i[A_j]\big]$ (i.e., $\Var_D[0]$) equals 
\begin{align} 
 & \frac{ d \cdot \frac{3+3e^{\epsilon}-2e^{\epsilon/2}}{3(e^{\epsilon/2}-1)^2} -1 }{\left( \frac{e^{\epsilon} + 1}{e^{\epsilon} - 1} \right)^2 { C_d}^2 } ; \label{ratioDoverP}
\end{align}
the ratio of the maximal $\Var_H\big[t^{*}_i[A_j]\big]$ (i.e., $\Var_H[\pm 1]$) over  the maximal $\Var_D\big[t^{*}_i[A_j]\big]$ (i.e., $\Var_D[0]$) equals 
\begin{align} 
 &  \frac{d + (d-1)\left( \frac{e^{\epsilon} - 1}{e^{\epsilon} + 1} \right)^2}{{ C_d}^2} .\label{ratioDoverH}
\end{align}
For $\epsilon = 0.5$ and $d=19$ or $d=21$ (these $d$ will be used in the experiments of Section~\ref{sec:exp}), the term in~(\ref{ratioDoverP}) is about $0.87$; the term in~(\ref{ratioDoverH}) is about $0.69$.
}

\eat{
First, observe that for any $i \in [1, n]$ and any 
\begin{align*}
\Var[t^*_i[a_j] - t_i[a_j]] &= \Var[t^*_i[a_j]] \\
&= \E\left[(t^*_i[a_j])^2\right] - (\E[t^*_i[a_j])^2 \\
&= \textstyle \frac{1}{d} \left(\frac{e^\epsilon+1}{e^\epsilon-1} \cdot d\right)^2 - \left(t_i[a_j]\right)^2 \; \le  \textstyle \left(\frac{e^\epsilon+1}{e^\epsilon-1}\right)^2 \! \cdot d.
\end{align*}

By Bernstein's inequality,
\begin{align*}
& \Pr\big[|Z[a_j] - X[a_j]| \ge \lambda\big] \\
& {} \le 2 \cdot \exp\left(-\frac{n\lambda^2}{\frac 2 n \sum_{i=1}^n \Var[t^*_i[a_j] - t_i[a_j]] + \frac{2}{3}\lambda \cdot \frac{e^\epsilon + 1}{e^\epsilon-1} \cdot 2d}\right) \\
& {} = 2\cdot \exp \left(-\frac {n \lambda^2}{2d  \cdot \left(O(1/\epsilon^2) + \lambda \cdot O(1/\epsilon) \right)} \right).
\end{align*}
By the union bound, there exists $\lambda = O\left(\frac{\sqrt{d \log (d/\beta)}}{\epsilon \sqrt{n}}\right)$ such that $\max_{j \in [1, d]} |Z[a_j] - X[a_j]| < \lambda$
holds with at least $1-\beta$ probability.
}

\eat{
\begin{lemma} \label{lmm:multinumeric-multi-attribute}
Let $k$ be defined as:
\begin{align} \label{eqn:def-k-attribute}
k & =  
\end{align}
? is minimized when 
\begin{align} 
?
\end{align} 
\end{lemma}

We generalize Algorithm~\ref{alg:multinumeric} to Algorithm~\ref{alg:multinumeric2}. Instead of having non-zero value on only one attribute as in Algorithm~\ref{alg:multinumeric}, Algorithm~\ref{alg:multinumeric2} induces non-zero values on $k$ attributes, where $k$ will be optimized below.

In Algorithm~\ref{alg:multinumeric}, $t^{*}_i[A_j] $ equals $\frac{d}{k} x_{i,j}$ with probability $\frac{k}{d}$ and $0$ with probability $1-\frac{k}{d}$. Hence, the expectation of $t^{*}_i[A_j]$ is
\begin{align} 
 & \textstyle \E\left[t^{*}_i[A_j]\right] = \frac{k}{d} \cdot  \E\left[\frac{d}{k} x_{i,j}\right] =  \E[ x_{i,j}] = t_i[A_j].
\end{align}
The variance of $t^{*}_i[A_j]$ is
\begin{align} 
 & \Var[t^{*}_i\left[A_j]\right] 
  =  \textstyle  \E\big[(t^{*}_i[A_j])^2\big] - \E\big[t^{*}_i[A_j]\big]^2 
\nonumber \\ & =  \textstyle  \frac{k}{d} \E\left[\left(\frac{d}{k} x_{i,j}\right)^2\right]   - \left(t_i[A_j]\right)^2 \nonumber \\ & =  \textstyle \frac{d}{k} \E\left[\left(x_{i,j}\right)^2\right] -  \left(t_i[A_j]\right)^2. \label{eq:multiple:variance:eq0}
\end{align}

In Algorithm~\ref{alg:multinumeric}, if Line~\ref{alg:multinumeric2:line:PM:HM} uses 
PM, Lemma~\ref{lmm:basic-privacy} implies
\begin{align} 
 \textstyle  \E\left[\left(x_{i,j}\right)^2\right]
  & = \textstyle  \Var\left[x_{i,j}\right] + \left(\E\left[x_{i,j}\right]\right)^2 \nonumber \\ & =  \textstyle \frac{{(t_i[A_j])}^2}{e^{\epsilon/(2k)}-1}+\frac{e^{\epsilon/(2k)}+3}{3(e^{\epsilon/(2k)} - 1)^2} + \left(t_i[A_j]\right)^2 .\label{eq:multiple:variance:eq1}
\end{align}
Substituting Equation~\ref{eq:multiple:variance:eq1} into Equation~\ref{eq:multiple:variance:eq0}, we have
\begin{align} 
 & \textstyle  \Var\big[t^{*}_i[A_j]\big]
\nonumber \\ & = \textstyle  \frac{d}{k} \left[\frac{{(t_i[A_j])}^2}{e^{\epsilon/(2k)}-1}+\frac{e^{\epsilon/(2k)}+3}{3(e^{\epsilon/(2k)} - 1)^2} + \left(t_i[A_j]\right)^2 \right] -  \left(t_i[A_j]\right)^2 \nonumber \\ & = \textstyle  \frac{d}{k} \left[  \frac{{(t_i[A_j])}^2}{e^{\epsilon/(2k)}-1}+\frac{e^{\epsilon/(2k)}+3}{3(e^{\epsilon/(2k)} - 1)^2} \right] + \left(\frac{d}{k}-1\right)\left(t_i[A_j]\right)^2 .\label{eq:multiple:variance2}
\end{align}
Given that $t_i[A_j] \in [-1, 1]$, the maximum value of the r.h.s.\ of Equation~\ref{eq:multiple:variance2} is:
\begin{align} 
 \Var\big[t^{*}_i[\pm 1]\big]  & =  \textstyle \frac{d}{k} \left[  \frac{1}{e^{\epsilon/(2k)}-1}+\frac{e^{\epsilon/(2k)}+3}{3(e^{\epsilon/(2k)} - 1)^2} \right] + \frac{d}{k}-1. \label{eq:multiple:variance3}
\end{align}
To minimize the worst-case noise variance given by the r.h.s.\ of Equation~\ref{eq:multiple:variance3}, we can analyze its derivative with respect to $k$. The optimal $k$ is given by {\color{red}
\begin{align} 
k  & = \begin{cases}
        1, & \text{if $\epsilon \leq 2.42$,}  \\[2mm] d,  & \text{if $\epsilon \geq 2.42 d$,} \\[2mm]
    \hspace{-3pt}   \begin{array}{l} \text{the better one of $\lfloor\frac{\epsilon}{2.42}\rfloor$ and $\lceil\frac{\epsilon}{2.42}\rceil$} \\ \text{to minimize the r.h.s.\ of Equation~\ref{eq:multiple:variance3},} \end{array} & \text{if $2.42 < \epsilon < 2.42 d $,}\\
    \end{cases}   \label{eq:optimalk}
\end{align}}
where $\lfloor\frac{\epsilon}{2.42}\rfloor$ (resp. $\lceil\frac{\epsilon}{2.42}\rceil$) denotes  the largest integer at most (resp. the smallest integer at least) $\frac{\epsilon}{2.42}$.

===

In Algorithm~\ref{alg:multinumeric}, if Line~\ref{alg:multinumeric2:line:PM:HM} uses 
HM, Lemma~\ref{lmm:basic-privacy} implies
\begin{align} 
 \textstyle    &\textstyle  \E\left[\left(x_{i,j}\right)^2\right]
 \nonumber \\ & = \textstyle  \Var\left[x_{i,j}\right]  + \left(\E\left[x_{i,j}\right]\right)^2   \nonumber \\ & =  \textstyle   \begin{cases}
\frac{e^{\epsilon/(2k)}+3}{3e^{\epsilon/(2k)}(e^{\epsilon/(2k)}-1)} + \frac{(e^{\epsilon/k}+1)^2}{e^{\epsilon/(2k)}(e^{\epsilon/k}-1)^2} + \left(t_i[A_j]\right)^2  , & \text{for }\epsilon/k > \epsilon^*, \\[2mm] \left(\frac{e^{\epsilon/k}+1}{e^{\epsilon/k}-1}\right)^2 + \left(t_i[A_j]\right)^2, & \text{for } \epsilon/k \leq \epsilon^*,
\end{cases}  \label{eq:multiple:variance:eq1:HM}
\end{align}
where $\epsilon^*$ is defined by Equation~\ref{eqn:numeric-eps-star}.

\begin{align} 
k  & = \begin{cases}
        1, & \text{if $\epsilon \leq 2.16$,}  \\[2mm] d,  & \text{if $\epsilon \geq 2.16 d$,} \\[2mm]
    \hspace{-3pt}   \begin{array}{l} \text{the better one of $\lfloor\frac{\epsilon}{2.16}\rfloor$ and $\lceil\frac{\epsilon}{2.16}\rceil$} \\ \text{to minimize the r.h.s.\ of Equation~\ref{eq:multiple:variance3:HM},} \end{array} & \text{if $2.16 < \epsilon < 2.16 d $,}\\
    \end{cases}   \label{eq:optimalk:HM}
\end{align}
}

\subsection{Handling Categorical Attributes} \label{sec:multi-categorical}

So far our discussion is limited to numeric attributes. Next we extend Algorithm~\ref{alg:multinumeric} to handle data with both numeric and categorical attributes. Recall from Section \ref{sec:prelim} that for each categorical attribute $A$, our objective is to estimate the frequency of each value $v$ in $A$ over all users. 
We note that most existing LDP algorithms (e.g., \cite{RAPPOR2014,BS15,lininghui}) for categorical data are designed for this purpose, albeit limited to a single categorical attribute.

Formally, we assume that we are given an algorithm $f$ that takes an input a privacy budget $\epsilon$ and a one-dimensional tuple $t_i$ with a categorical attribute $A$, and outputs a perturbed tuple $t^*_i$ while ensuring $\epsilon$-LDP. 
\eat{
In addition, for any value $v \in A$, we assume there is an estimator $g$, such that for any random subset $S \subseteq \{1, 2, \ldots n\}$,  $\frac{d}{n}\sum_{i \in S} g(t^*_i, v)$ is an estimator of the frequency of $v$ over all users. 
}
In addition, we assume there is a function $g(x, y)$ that $g(x, y) = 1$ if $x= y$; and 0, otherwise. 
Then for any value $v \in A$, $\frac{1}{n}\sum_{i \in S} g(t^*_i, v)$ is an estimator of the frequency of $v$ over all users, where $S$ is the set of $\{1, 2, \ldots n\}$.
Then, for the general case when $t_i$ contains $d$ (numeric or categorical) attributes $A_1, A_2, \ldots, A_d$, the extended version of Algorithm~\ref{alg:multinumeric} would request each user to perform the following: 
\begin{enumerate}[topsep = 6pt, parsep = 6pt, itemsep = 0pt, leftmargin=22pt]
    \item Sample $k$ values uniformly at random from $\{1, 2, \dots, d\}$, where $k$ is as defined in Equation~\ref{eqn:multi-k};
    \item For each sampled $j$, if $A_j$ is a numerical attribute, then submit a noisy version of $t[A_j]$ computed as in Lines \ref{alg:multinumeric2:line:sample}--\ref{alg:multinumeric2:line:PM:HM} of Algorithm~\ref{alg:multinumeric}; otherwise (i.e., $A_j$ is a categorical attribute), submit $f(t[A_j], \epsilon/k)$, where $f$ can be any existing solution for perturbing a single categorical attribute under $\epsilon$-LDP;

\end{enumerate}
Once the aggregator collects data from all users, she can estimate the mean of each numeric attribute $A$ in the same way as in Algorithm~\ref{alg:multinumeric}.
\eat{
original version:
In addition, for any categorical attribute $A'$ and any value $v$ in the domain of $A$, she can estimate the frequency of $v$ among all users as $\frac{d}{n}\sum_{v^* \in V^*} g(v^*, v)$, where $V^*$ denotes the set of perturbed $A'$ values submitted by users. 
}
In addition, for any categorical attribute $A'$ and any value $v$ in the domain of $A'$, she can estimate the frequency of $v$ among all users as $\frac{d}{kn}\sum_{v^* \in V^*} g(v^*, v)$, where $V^*$ denotes the set of perturbed $A'$ values submitted by users. 
The accuracy of this estimator depends on both $d$ and the accuracy of single-attribute perturbation algorithm used for $A'$. In our experiments, we apply the optimized unary encoding (OUE) protocol of Wang~et~al.~\cite{lininghui} to perturb a single categorical attribute, which is the current state of the art to our knowledge.

\eat{
can be extended to support the case where each user's data record contains not only numerical attributes but also categorical ones.
In Harmony?, we use a simple and elegant solution to handle multiple attributes: for each numerical attribute, the aggregator estimates its mean value by employing Piecewise Mechanism; for each categorical attribute, the aggregator estimates its value frequencies by utilizing the existing methods \cite{RAPPOR2014,BS15,lininghui}. In particular, given $d$ attributes $A_1, A_2, \ldots, A_d$, each user performs the following:
\begin{enumerate}[topsep = 6pt, parsep = 6pt, itemsep = 0pt, leftmargin=22pt]
    \item Draw $j$ uniformly at random from set $\{1, 2, \dots, d\}$;
    \item If $A_j$ is a numerical attribute, then submit a noisy version of $t[A_j]$ computed as in Lines \ref{alg:multinumeric2:line:sample}--\ref{alg:multinumeric2:line:PM:HM} of Algorithm~\ref{alg:multinumeric};
    \item Otherwise (i.e., $A_j$ is a categorical attribute), utilize the existing methods \cite{RAPPOR2014,BS15,lininghui} to get a noisy version of value, and scale the value $d$ times, then submit it.
\end{enumerate}

The above solution satisfies $\epsilon$-LDP, which follows from the fact that (i) each user randomly selects one attribute to submit, and (ii) the algorithm used for submitting the selected attribute is $\epsilon$-differentially private. 
}
\section{Stochastic Gradient Descent under \\Local Differential Privacy} \label{sec:riskmini}


This section investigates building a class of machine learning models under $\epsilon$-LDP that can be expressed as empirical risk minimization, and solved by stochastic gradient descent (SGD). In particular, we focus on three common types of learning tasks: linear regression, logistic regression, and support vector machines (SVM) classification. 



Suppose that each user $u_i$ has a pair $\langle x_i, y_i\rangle$, where $x_i \in [-1, 1]^d$ and $y_i \in [-1, 1]$ (for linear regression) or $y_i \in \{-1, 1\}$ (for logistic regression and SVM classification). Let $\ell(\cdot)$ be a {\em loss function} that maps a $d$-dimensional {\em parameter vector} $\beta$ into a real number, and  is parameterized by $x_i$ and $y_i$. We aim to identify a parameter vector $\beta^*$ such that
$$ \beta^* = \arg\min\limits_{\beta} \left[  \frac{1}{n}\left(\sum\limits_{i=1}^n \ell(\beta; x_i, y_i)\right) + \frac{\lambda}{2} \|\beta\|^2_2\right],$$
where $\lambda > 0$ is a regularization parameter.
We consider three specific loss functions:
\begin{enumerate}[topsep = 2pt, parsep = 2pt, itemsep = 0pt, leftmargin=20pt]
\item Linear regression: $\ell(\beta; x_i, y_i) = (x_i^T \beta - y_i)^2$;

\item Logistic regression: $\ell(\beta; x_i, y_i) = \log\left(1 + e^{-y_ix_i^T\beta}\right)$;

\item SVM (hinge loss): $\ell(\beta; x_i, y_i) = \max\left\{0, 1 -y_i x_i^T\beta\right\}$.
\end{enumerate}
For convenience, we define
$$\ell^\prime(\beta; x_i, y_i) = \ell(\beta; x_i, y_i) + \frac{\lambda}{2} \|\beta\|^2_2.$$

The proposed approach solves $\beta^*$ using SGD, which starts from an initial parameter vector $\beta_0$, and iteratively updates it into $\beta_1, \beta_2, \ldots$ based on the following equation:
$$\beta_{t+1} = \beta_t - \gamma_t \cdot \nabla \ell^\prime(\beta_t; x, y),$$
where $\langle x, y \rangle$ is the data record of a randomly selected user, $\nabla \ell^\prime(\beta_t; x, y)$ is the gradient of $\ell^\prime$ at $\beta_t$, and $\gamma_t$ is called the {\em learning rate} at the $t$-th iteration. The learning rate $\gamma_t$ is commonly set by a function (called the {\em learning schedule}) of the iteration number $t$; a popular learning schedule is $\gamma_t = O(1/\sqrt{t})$. 

In the non-private setting, SGD terminates when the difference between $\beta_{t+1}$ and $\beta_t$ is sufficiently small. Under \mbox{$\epsilon$-LDP}, however, $\nabla \ell^\prime$ is not directly available to the aggregator, and needs to be collected in a private manner. Towards this end, existing studies \cite{HammCCBX15,DuchiJW18} have suggested that the aggregator asks the selected user in each iteration to submit a noisy version of $\nabla \ell^\prime$, by using the Laplace mechanism or Duchi~et~al.'s solution (i.e., Algorithm~\ref{alg:duchimulti}). Our baseline approach is based on this idea, and improves these existing methods by perturbing $\nabla \ell^\prime$ using Algorithm~\ref{alg:multinumeric}. In particular, in each iteration, we involve a group $G$ of users, and ask each of them to submit a noisy version of the gradient using Algorithm \ref{alg:multinumeric}. 
Here, if any entry of $\nabla \ell_i$ is greater than 1 (resp. smaller than $-1$), then the user should clip it to 1 (resp. $-1$) before perturbation, where $\nabla \ell_i$ is the gradient generated by the $i$-th user in group $G$. That is a common technique referred to as ``gradient clipping'' in the deep learning literature. 
After that, we update the parameter vector $\beta_t$ with the mean of the noisy gradients, i.e.,
$$\textstyle \beta_{t+1} = \beta_t - \gamma_t \cdot \frac{1}{|G|}\sum_{i=1}^{|G|} \nabla \ell^*_i,$$
where $\nabla \ell^*_i$ is the noisy gradient submitted by the $i$-th user in group $G$. This helps because the amount of noise in the average gradient is $O\left(\frac{\sqrt{d\log d}}{\epsilon\sqrt{|G|}}\right)$, which could be acceptable if $|G| = \Omega\left(d(\log d)/\epsilon^2\right)$.


Note that in the non-private case, the aggregator often allows each user to participate in multiple iterations (say $m$ iterations) to improve the accuracy of the model. But it does not work in the local differential privacy setting. To explain this, suppose that the $i$-th ($i \in [1, m]$) gradient returned by the user satisfies \mbox{$\epsilon_i$-differential} privacy. By the composition property of differential privacy \cite{McSherryT07}, if we  enforce $\epsilon$-differential privacy for the user's data, we should have $\sum_{i=1}^m \epsilon_i \le \epsilon$. Consider that we set $\epsilon_i = \epsilon/m$. Then, the amount of noise in each gradient becomes $O\left(\frac{m \sqrt{d \log d}}{\epsilon}\right)$; \vspace{1pt} accordingly, the group size becomes $|G| = \Omega\left(m^2 d \log d/\epsilon^2\right)$, which is $m^2$ times larger compared to the case where each user only participates in at most one iteration. It then follows that the total number of iterations in the algorithm is inverse proportional to $1/m$; i.e., setting $m > 1$ only degrades the performance of the algorithm.

\eat{
\subsection{Gradient Averaging} \label{sec:riskmini-improve}

In the baseline approach described in Section \ref{sec:riskmini-methods}, each user applies Algorithm~\ref{alg:multinumeric} to compute $\nabla \ell^\prime(\beta_t; x, y)$. According to our analysis in Section \ref{sec:basic-numeric}, the amount of noise in $\nabla \ell^\prime$ is $O\left(\frac{\sqrt{d \log d}}{\epsilon}\right)$, which is excessively large given that $x \in [-1, 1]^d$???. To address this issue, we apply {\em gradient averaging} to our solution. In particular, in each iteration, we involve a group $G$ of users, and ask each of them to submit a noisy version of the gradient; after that, we update the parameter vector $\beta_t$ with the mean of the noisy gradients, i.e.,
$$\textstyle \beta_{t+1} = \beta_t - \gamma_t \cdot \frac{1}{|G|}\sum_{i=1}^{|G|} \nabla \ell^*_i,$$
where $\nabla \ell^*_i$ is the noisy gradient submitted by the $i$-th user in group $G$. This helps because the amount of noise in the average gradient is $O\left(\frac{\sqrt{d\log d}}{\epsilon\sqrt{|G|}}\right)$, which could be acceptable if $|G| = \Omega\left(d(\log d)/\epsilon^2\right)$.

Gradient averaging helps improve SGD convergence rate in two ways. First, aggregating gradients computed over multiple records leads to a smoothened gradient, which is less likely to diverge significantly from the true gradient of $\ell'$. For this reason, a similar idea has been used in the non-private setting for distributed SGD computation \cite{zinkevich2010parallelized}, in which a parameter server receives gradients from multiple worker nodes, performs gradient averaging over these gradients, and applies the aggregate gradient to update the model parameters. Second and more importantly, gradient averaging reduces the scale of noise injected to satisfy $\epsilon$-LDP, which is critical in our setting.

Note that in the non-private case, the aggregator often allows each user to participate in multiple iterations ($m$ iterations) to improve the accuracy of the model. But it does not work in the local differential privacy setting. To explain this, suppose that the $i$-th ($i \in [1, m]$) gradient returned by the user satisfies $\epsilon_i$-differential privacy. By the composition property of differential privacy \cite{McSherryT07}, if we  enforce $\epsilon$-differential privacy for the user's data, we should have $\sum_{i=1}^m \epsilon_i \le \epsilon$. Consider that we set $\epsilon_i = \epsilon/m$. Then, the amount of noise in each gradient becomes $O\left(\frac{m \sqrt{d \log d}}{\epsilon}\right)$; accordingly, the group size becomes $|G| = \Omega\left(m^2 d \log d/\epsilon^2\right)$, which is $m^2$ times larger compared to the case where each user only participates in at most one iteration. It then follows that the total number of iterations in the algorithm is inverse proportional to $1/m$; i.e., setting $m > 1$ only degrades the performance of the algorithm.
}

\eat{
In the following, we provide an analysis on the accuracy of the proposed method for logistic regression and SVM.
We first introduce a lemma from the reference \cite{Shamir013}.
\begin{lemma}\label{thrm:exist_bound}
Let $\mathcal{B}$ be a multidimensional domain such that $\sup_{\beta, \beta^{'} \in \mathcal{B}} \left \| \beta - \beta^{'} \right \|_2 \leq \ell$. Let $F: \mathcal{B} \rightarrow \mathbb{R}$ be a convex function and $\beta^{*} = \arg\min_{\beta \in \mathcal{B}} F(\beta)$. Consider a gradient descent algorithm where
$$\beta_{t+1} = \Pi_{\mathcal{B}} \left[ \beta_t - \gamma_t M_t(\beta_t)\right],$$
such that (i) $\E[M_{t}(\beta_t)] = \nabla F(\beta_t)$, (ii) $\E[\left \| M_t \right \|_2^2] \leq M^{2}$, and (iii) the learning rate $\gamma_t = c/\sqrt{t}$. Then for any $T > 1$,
$$\E[F(\beta_T) - F(\beta^{*})] \leq \left(\frac{\ell^2}{c} + cM^2\right)\frac{2+\log T}{\sqrt{T}}.$$
\end{lemma}
Based on Lemma~\ref{thrm:exist_bound}, we prove the utility guarantee of our logistic regression and SVM algorithms as follows.

\begin{theorem}\label{def:utility}
Let $\beta$ be the output of our LDP-compliant SGD with gradient averaging, using Algorithm \ref{mech:pubreal} publishing noisy gradient.
Let $\mathcal{L}(\beta; D)$ be $\sum_{i=1}^{n} \ell^\prime(\beta; x_i, y_i)$, where $\ell^\prime(\beta; x_i, y_i)$ denotes the loss function of logistic regression or SVM classification. Let $M$ be $\frac{d^2}{|G|}\left(\frac{e^{\epsilon}+1}{e^{\epsilon}-1}\right)^2 + \frac{d \cdot (|G| -1)}{|G|}$ and $\gamma_t= c/\sqrt{t}$.
We have the following excess risk bound for the output $\beta$.
\begin{equation}\label{equ:bound}
\E[\mathcal{L}(\beta; D) - \mathcal{L}(\beta^{*}; D)] \leq \left(\frac{4d}{c} + cM\right)\frac{2 + \log (n/|G|)}{\sqrt{n/|G|}}.
\end{equation}

\begin{proof}
Let $\nabla \ell^\prime(\beta_t, x_i, y_i)$ ($\nabla_i$) and $\nabla^{*} \ell^\prime(\beta_t, x_i, y_i)$ ($\nabla^{*}_{i}$) be the gradient generated by the $i$th user in the t$th$ batch and its noisy version respectively.
When the mini-batch algorithm processes the $t$th batch, the average gradient $M_t(\beta_t) = \frac{\sum_{i=1}^{|G|}\nabla_i}{|G|}$ is used to update the parameter vector. To guarantee privacy, the average noisy gradient $M^{*}_t(\beta_t) = \frac{\sum_{i=1}^{|G|}\nabla^{*}_i}{|G|}$ is used instead of $M_t(\beta_t)$ in our scenario.
Based on Lemma \ref{lmm:basic-unbias}, we know $\nabla^{*}_i$ is an unbias estimator of $\nabla_i$, i.e., $\E(\nabla^{*}_i) = \nabla_i$.
Then $M^{*}_t(\beta_t)$ is also an unbias estimator, since $\E\left[M^{*}_t(\beta_t)\right]$ can be shown as follows,
$$\E\left(M^{*}_t(\beta_t)\right)= \E\left[\sum\limits_{{i=1}}^{|G|}\nabla^{*}_i/|G|\right] = \sum\limits_{i=1}^{|G|}\E\left(\nabla^{*}_i\right)/|G|  = M_t(\beta_t).$$

In the following, we discuss the upper bound of $\E\left[ \left \| M^{*}_t(\beta_t) \right \|_2^2\right]$.
Let $\nabla^{*}_{ik}$ be the gradient value associated with the $k$th dimension generated from user $i$.
And we have $\E\left[\nabla^{*2}_{ik}\right] = \left(\frac{e^{\epsilon} + 1}{e^{\epsilon} -1}\right)^2\cdot d$ and $\E\left[ \nabla^{*}_{ik}\cdot \nabla^{*}_{jk}\right] = x_{ik} \cdot x_{jk} \leq 1$, where $x_{ik}$ and $x_{jk}$ are the values corresponding to the $k$th dimension in the tuples generated from users $i$ and $j$. Due to the limited space, we omit the derivation of $\E\left[\nabla^{*2}_{ik}\right]$ and $\E\left[ \nabla^{*}_{ik}\cdot \nabla^{*}_{jk}\right]$.
Then the upper bound of $\E\left[ \left \| M^{*}_t(\beta_t) \right \|_2^2\right]$ can be written as follows.

\begin{equation}
\begin{split}
\E\left[ \left \| M^{*}_t(\beta_t) \right \|_2^2\right]
&=\frac{1}{|G|^2}\E\left[\left \|\sum_{i=1}^{|G|}\nabla^{*}_{i}\right \|^2_2\right]\\
&=\frac{1}{|G|^2}\sum_{k=1}^{d}\left\{ \E\left[\left(\sum_{i=1}^{|G|}\nabla^{*}_{ik}\right)^2\right]\right\}\\
&=\frac{1}{|G|^2}\sum_{k=1}^{d}\left\{\sum_{i=1}^{|G|}\E\left[\nabla^{*2}_{ik}\right] + 2\sum_{i=1}^{|G|}\sum_{j= i + 1}^{|G|}\E\left[ \nabla^{*}_{ik}\cdot \nabla^{*}_{jk}\right]\right\}\\
&\leq \frac{d^2}{|G|}\left(\frac{e^{\epsilon}+1}{e^{\epsilon}-1}\right)^2 + \frac{d \cdot (|G| -1)}{|G|} \leq M
\end{split}
\end{equation}

When handling logistic regression and SVM,
we limit the parameter vector space $\mathcal{B}$ as $[-1, 1]^d$ by projecting any $\beta_t$ into this space. Thus, it holds $\sup_{\beta, \beta^{'} \in \mathcal{B}} \left \| \beta - \beta^{'} \right \|_2 \leq 2\sqrt{d}$. And the iteration time is $n/|G|$.
We get the excess risk bound by taking the upper bound of $\E\left[ \left \| M^{*}_t(\beta_t) \right \|_2^2\right]$,  $2\sqrt{d}$ and $n/|G|$ into Lemma~\ref{thrm:exist_bound}.
\end{proof}

\end{theorem}

According to Theorem~\ref{def:utility}, by setting the learning rate to $r_t = \frac{2\sqrt{d}}{\sqrt{Mt}}$, i.e., $c = \frac{2\sqrt{d}}{\sqrt{M}}$,
we obtain the minimum value of upper bound: $2\sqrt{dM}\frac{2 + \log (n/|G|)}{\sqrt{n/|G|}}$,
which can be simplified to  $O\left(\frac{d\cdot\log(n/|G|)}{\epsilon\sqrt{n/|G|}}\right)$ when $|G| > d$. Hence, for logistic regression and SVM classification, we set the learning rate to $\frac{2\sqrt{d}}{\sqrt{Mt}}$.
For linear regression, however, the space $\mathcal{B}$ for gradient values is unbounded, which is incompatible with Theorem~\ref{def:utility}. Hence, we set $r_t = 1/\sqrt{t}$. Regarding the group size $|G|$, an appropriate value would be $|G| = \Omega\left(d \log d/\epsilon^2\right)$, according to the above upper bound and the fact that the amount of noise in the average gradient is $O\left(\frac{\sqrt{d \log d}}{\epsilon \sqrt{|G|}}\right)$.
}

\eat{
\subsection{Gradient Dimensionality Reduction}\label{sec:dr}

As discussed in the previous subsection, the proposed method sets the group size in gradient averaging to $|G| = \Omega\left(d \log d/\epsilon^2\right)$. When $d$ is sizable, $|G|$ becomes large; consequently, when we allow each user to participate in at most one iteration of the algorithm, the maximum number of iterations (i.e., $n/|G|$) is small. In that case, the algorithm may terminate prematurely and return an inferior parameter vector. One may attempt to mitigate this problem by allowing each user to be involved in $m > 1$ iterations, but it would further increase the amount of noise in each gradient returned. To explain this, suppose that the $i$-th ($i \in [1, m]$) gradient returned by the user satisfies $\epsilon_i$-differential privacy. By the composition property of differential privacy \cite{McSherryT07}, if we are to enforce $\epsilon$-differential privacy for the user's data, we should have $\sum_{i=1}^m \epsilon_i \le \epsilon$. Consider that we set $\epsilon_i = \epsilon/m$. Then, the amount of noise in each gradient becomes $O\left(\frac{m \sqrt{d \log d}}{\epsilon}\right)$; accordingly, the group size becomes $|G| = \Omega\left(m^2 d \log d/\epsilon^2\right)$, which is $m^2$ times larger compared to the case where each user only participates in at most one iteration. It then follows that the total number of iterations in the algorithm is inverse proportional to $1/m$, i.e., setting $m > 1$ only degrades the performance of the algorithm.

Instead of increasing $m$, we propose to reduce the dimensionality $d$ of the users' data records, which leads to smaller group size $|G|$ and, thus, more iterations allowed under $\epsilon$-LDP. Note that in the setting of LDP, dimensionality reduction needs to be done on individual records without global information of the whole dataset. This means that many common algorithms such as principal component analysis (PCA) cannot be applied to our problem. While it is theoretically possible to include a round of data collection, e.g., to compute the covariance matrix for PCA, doing so requires either a share of the privacy budget $\epsilon$, leading to reduced privacy budget for the main SGD module. In the following, we discuss two local dimensionality reduction methods: random projection and feature selection, applied to linear regression and logistic regression / SVM, respectively.

\textbf{Random projection for linear regression.} In random projection, the aggregator generates a random $r \times d$ matrix $P$ where $r < d$ and each entry has an equal probability to be assigned $1/d$ or $-1/d$, and shares $P$ with all users. Each user then transforms her tuple $\langle x_i, y_i\rangle$ into a reduced tuple $\langle x^\p_i, y_i\rangle$, where $x^\p_i = P x$. In other words, we project $\{x_i\}$ into a random $r$-dimensional sub-space, and such a projection is known to preserve several important characteristics of the original data \cite{Achlioptas01}. It can be verified that $x^\p_i \in [-1, 1]^r$.

Subsequently, each user uses the reduced tuple $\langle x^\p_i, y_i\rangle$ to participate in our private SGD protocol. Specifically, each noisy gradient $\nabla \ell^*_i$ returned by the user is $r$-dimensional instead of $d$-dimensional. Accordingly, the group size in gradient averaging decreases to $|G| = \Omega\left(r \log r/\epsilon^2\right)$, leading to a larger number of iterations for the same user population and privacy budget $\epsilon$.

Algorithm~\ref{mech:pubiter} shows the pseudo-code of our gradient descent method, in the context of the Samsung diagnostic tool. The aggregator first generates a random $r \times d$ matrix $P$, and maintains a $r$-dimensional parameter vector $\alpha$ (Lines 1-3). (We use $\alpha$ instead of $\beta$ to denote the parameter vector to avoid confusion on its dimensionality.) After that, whenever a user with a tuple $\langle x_i, y_i\rangle$ comes online, she obtains $P$ and the current $\alpha$ from the aggregator (Line 6). Then, the user computes a reduced tuple $\langle x^\p_i, y_i\rangle$, as well as the gradient $\nabla_i = \nabla \l^\p(\alpha; x^\p_i, y_j)$ (Lines 7). If any entry of $\nabla_i$ is larger than $1$ (resp.\ smaller than $-1$), then the user resets the entry to $1$ (resp.\ $-1$) (Lines 8-9). This ensures that $\nabla_i \in [-1, 1]^d$, so that it can be a valid input to Algorithm~\ref{mech:pubreal}. After that, the user computes a noisy gradient $\nabla^*_i$ using Algorithm~\ref{mech:pubreal}, submits it to the aggregator, and then logs off (Line 10).

The aggregator computes the average noisy gradient from every $g$ users (where $g$ is an input parameter), and updates the parameter vector $\alpha$ accordingly (Lines 11-14). When the update to $\alpha$ is sufficiently small or when a sufficiently large number of users have participated, the aggregator terminates the algorithm (Lines 15-16).

\begin{algorithm}[t]
\caption{An Iterative Method for Empirical Risk Minimization (Linear Regression)}\label{mech:pubiter}
\SetKwInOut{Input}{input}
\SetKwInOut{Output}{output}
\Input{positive integer $r$, batch size $g$, space of parameter vector $\mathcal{B}$, privacy parameter $\epsilon$}
\Output{parameter vector $\alpha \in \mathcal{B}$}
    generates a random $r \times d$ matrix $P$ each entry has an equal probability to $1/d$ or $-1/d$\;
    initialize a counter $k = 0$, and learning rate $\gamma$\;
    initialize a $r$-dimensional vector $\nabla = \langle 0, 0, \ldots, 0\rangle$\;
    \While{true}
    {
        \If{a user with a tuple $\langle x_i, y_i\rangle$ comes online}
        {
            send $P$ to the user\;
            the user computes $x^\p_i = P x_i$, and derives $\nabla_i = \nabla \l^\p(\alpha; x^\p_i, y_j)$\;
            \If{$\nabla_i \notin [-1, 1]^r$}
            {
                the user projects $\nabla_i$ onto $[-1, 1]^r$\;
            }
            the user applies Algorithm~\ref{mech:pubreal} on $\nabla_i$, and submits a noisy gradient $\nabla^*_i$\;
            $k = k + 1$, and $\nabla = \nabla + \nabla^*_i$\;
            \If{$k \mod g = 0$}
            {
                $\nabla = \frac{\nabla}{g}$, and $\gamma = 1/\sqrt{k/g}$\;
                $\alpha = \Pi_{\mathcal{B}}\left[\alpha - \gamma \cdot \nabla\right]$\;
            }
            \If{$k$ is sufficiently large or $\alpha$ changes sufficiently small in the last update}
            {
                {\bf break}\;
            }
        }

    }
    \Return $\alpha$\;
\end{algorithm}

\textbf{Attribute selection for logistic regression and SVM classification.} For logistic regression and SVM, the loss function is nonlinear; hence, random project would lead to poor performance since loss functioned applied to the transformed attributes is very different from that applied to the original attributes. Hence, for these analysis tasks we propose an attribute selection technique, as follows. The main idea is that the gradient value of a selected attribute $A_j$ should be larger than the expected amount of noise injected to satisfy $\epsilon$-LDP. Otherwise, i.e., when the noise level exceeds the attribute value on $A_j$, updating model parameters on $A_j$ is unlikely to improve the model since the gradient direction on $A_j$ is highly random. Note, however, that a user cannot simply compare the gradient values computed from her own record with the noise level, since our method applies gradient averaging over a group, meaning that the average gradient attribute values should be compared to the amount of noise.

The proposed solution estimates the gradient value on each dimension as follows. For the first $n_1$ (a system parameter) groups in SGD, our solution does not perform gradient attribute selection, and the aggregator computes the average noisy gradients from all users. After reaching $n_1$ groups, we start to perform gradient attribute selection, based on information collected from the previous groups. Specifically, a dimension $A_j$ is selected if the average noisy gradient value on $A_j$ is no smaller than a threshold $\tau = \frac{\sqrt{d\log d}}{\sqrt{n_1\cdot |G|}\epsilon}$, which is the expected noise level as explained in Section \ref{sec:riskmini-improve}; otherwise, the dimension is discarded and the corresponding gradient value on this dimension is not reported. The rationales behind this design are (i) that since each iteration selects a group of users randomly, user data in later iterations are expected to follow similar distributions as in earlier groups and (ii) as the parameter vector is gradually refined, the scale of gradient values computed by adjacent batches usually do not change significantly. The set of selected attributes is periodically refreshed, by repeating the above attribute selection process with data from more recent groups. Specifically, attribute selection is performed for every $n_2$ batches, where $n_2$ is a system parameter.

Algorithm \ref{mech:erm_svm} shows the gradient descent method with attribute selection technique in detail. Compared with Algorithm \ref{mech:pubiter},
Algorithm \ref{mech:erm_svm} requires three additional parameters $n_1$, $n_2$ and $\tau$ as input. $n_2$ ($n_1$) denotes how many groups are (not) required to do attribute selection in one period,
while $\tau$ is the threshold value. The aggregator maintains a set of selected attributes $C$ which originally includes all attributes (Line 3). Once a user with tuple $\langle x_i, y_i\rangle$ is received, she computes the true $\nabla_i$ and submits a noisy version to aggregator. Different from Algorithm \ref{mech:pubiter}, here the non-zero attributes in the noisy version of $\nabla_i$ come from $C$, instead of all attributes (Lines 6-8).

The aggregator is responsible for updating the parameter vector and selected attributes set $C$ every $n_2$ batches. Specifically, it first computes the average noisy gradient from the $g \cdot n_1$ users in $n_1$ groups and then updates the parameter vector (Lines 12-13). Line 14 shows how to update $C$. Subsequently, for users in $n_2$ groups, it does attribute selection and computes the average noisy gradient per batch, and then updates the parameter vector (Lines 16-18).
When $(n_1 + n_2)$ batches are done, the aggregator re-selects attributes and updates $C$ again (Lines 20).

\begin{algorithm}[t]\label{alg:lgsvm}
\caption{An Iterative Method for Empirical Risk Minimization (Logistic Regression or SVM)}\label{mech:erm_svm}
\SetKwInOut{Input}{input}
\SetKwInOut{Output}{output}
\Input{positive integers $n_1$ and $n_2$, batch size $g$, threshold $\tau$, privacy parameter $\epsilon$, space of parameter vector $\mathcal{B}$}
\Output{parameter vector $\alpha \in \mathcal{B}$}
    initialize counters $k = 0$ and $t = 0$, learning rate $\gamma$\;
    initialize a $d$-dimensional vector $\nabla = \langle 0, 0, \ldots, 0\rangle$\;
    initialize an attribute set $C = \{1, 2, \ldots, d\}$\;
    \While{true}
    {
        \If{a user with a tuple $\langle x_i, y_i\rangle$ comes online}
        {
            send $C$ to the user\;
            the user computes $\nabla_i = \nabla \l^\p(\alpha; x_i, y_i)$\;
            the user applies Algorithm~\ref{mech:pubreal} on $\nabla_i$,
            submits a noisy gradient $\nabla^*_i$\ with non-zero value on only one attribute in $C$\;
            $k = k + 1$, and $\nabla = \nabla + \nabla^*_i$\;
            \If{$k \mod g = 0$}
            {
                \If{$ k \mod (g \cdot n_1 + g \cdot n_2) = n_1$}{
                    $t = t + 1$\;
                    $\nabla = \frac{\nabla }{g \cdot n_1}$, $\gamma = 2\sqrt{d}/\sqrt{Mt}$ and $\alpha = \Pi_{\mathcal{B}}\left[\alpha - \gamma \cdot \nabla\right]$\;
                    update $C = \{m| |\nabla(m)| > \tau\}$ and $\nabla = \langle 0, 0, \ldots, 0\rangle$\;
                }
                \If{$k \mod (g \cdot n_1 + g \cdot n_2) \in (n_1, n_1 + n_2) $}{
                    $t = t + 1$\;
                    $\nabla = \frac{\nabla}{g}$, $\gamma = 2\sqrt{d}/\sqrt{Mt}$ and $\alpha = \Pi_{\mathcal{B}} \left[\alpha - \gamma \cdot \nabla\right]$\;
                    $\nabla = \langle 0, 0, \ldots, 0\rangle$\;
                }
                \If{$k \mod (g \cdot n_1 + g \cdot n_2) = 0$}{
                    update $C = \{1, 2, \ldots, d\}$\;
                }
            }
            Lines 15-16 in Algorithm \ref{mech:pubiter}\;
        }

    }
    \Return $\alpha$\;
\end{algorithm}

}

\section{Experiments} \label{sec:exp}

\begin{figure*}
  \centering
  \footnotesize
  \begin{tabular}{cccc}
  \multicolumn{4}{c}{}\\
    \hspace{-3mm}\includegraphics[width=0.23\textwidth]{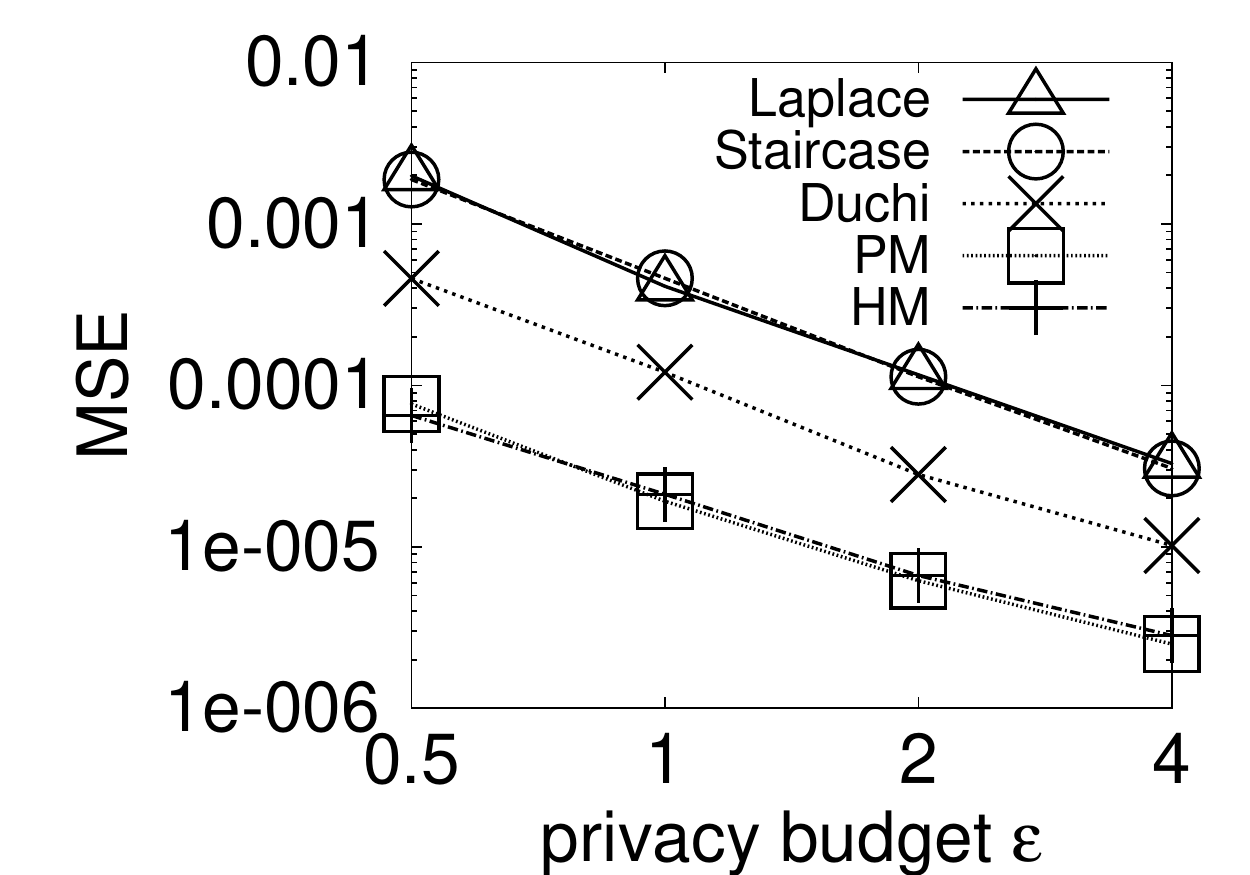} &
    \hspace{-4mm}\includegraphics[width=0.23\textwidth]{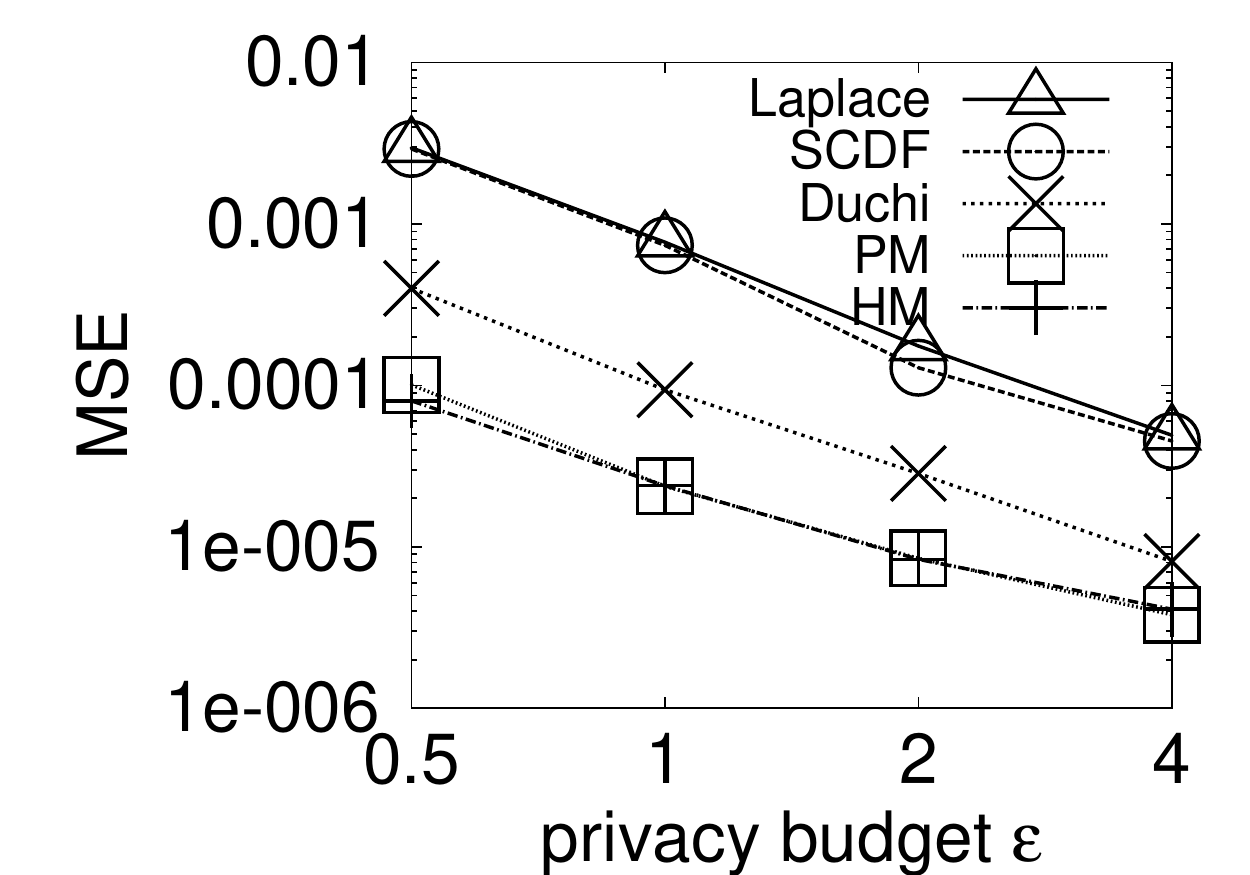} &
    \hspace{-4mm}\includegraphics[width=0.23\textwidth]{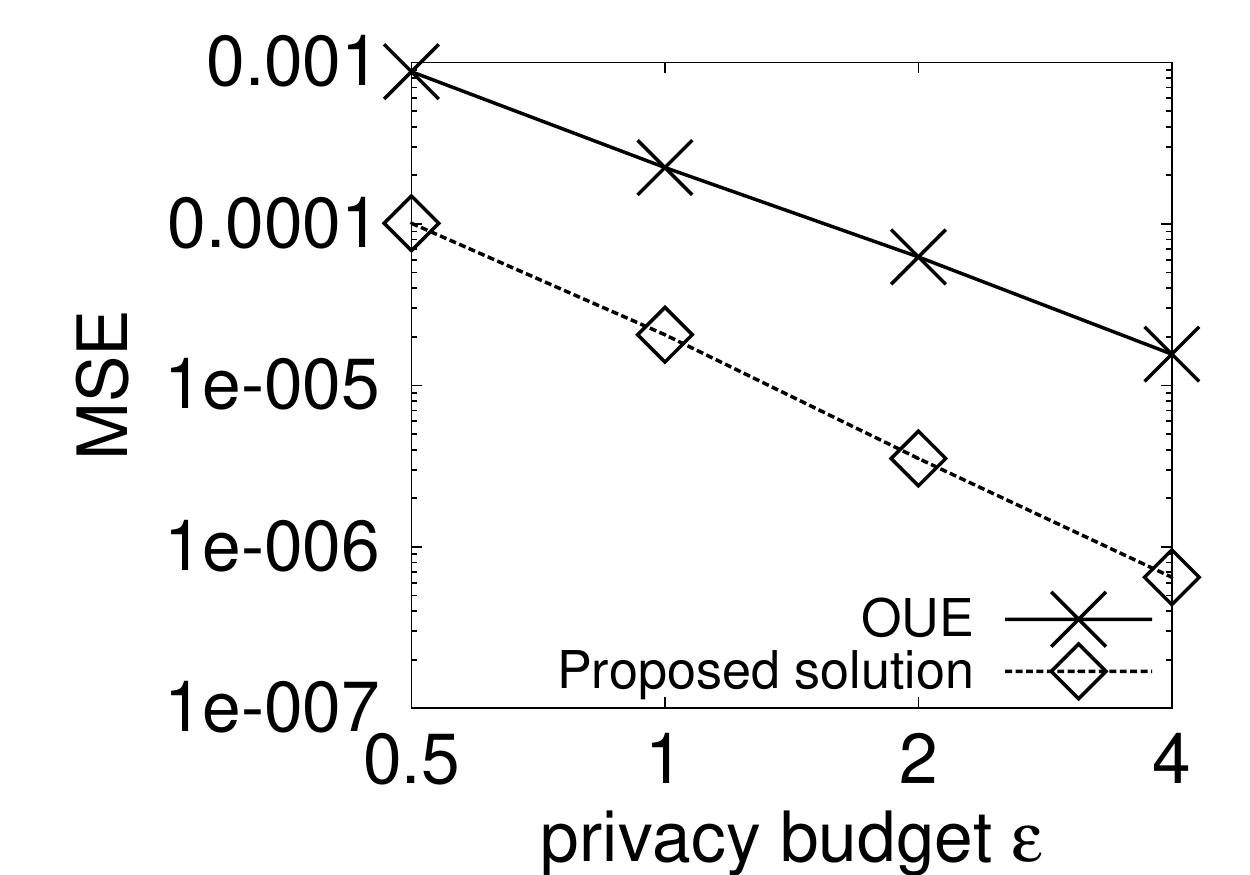} &
    \hspace{-4mm}\includegraphics[width=0.23\textwidth]{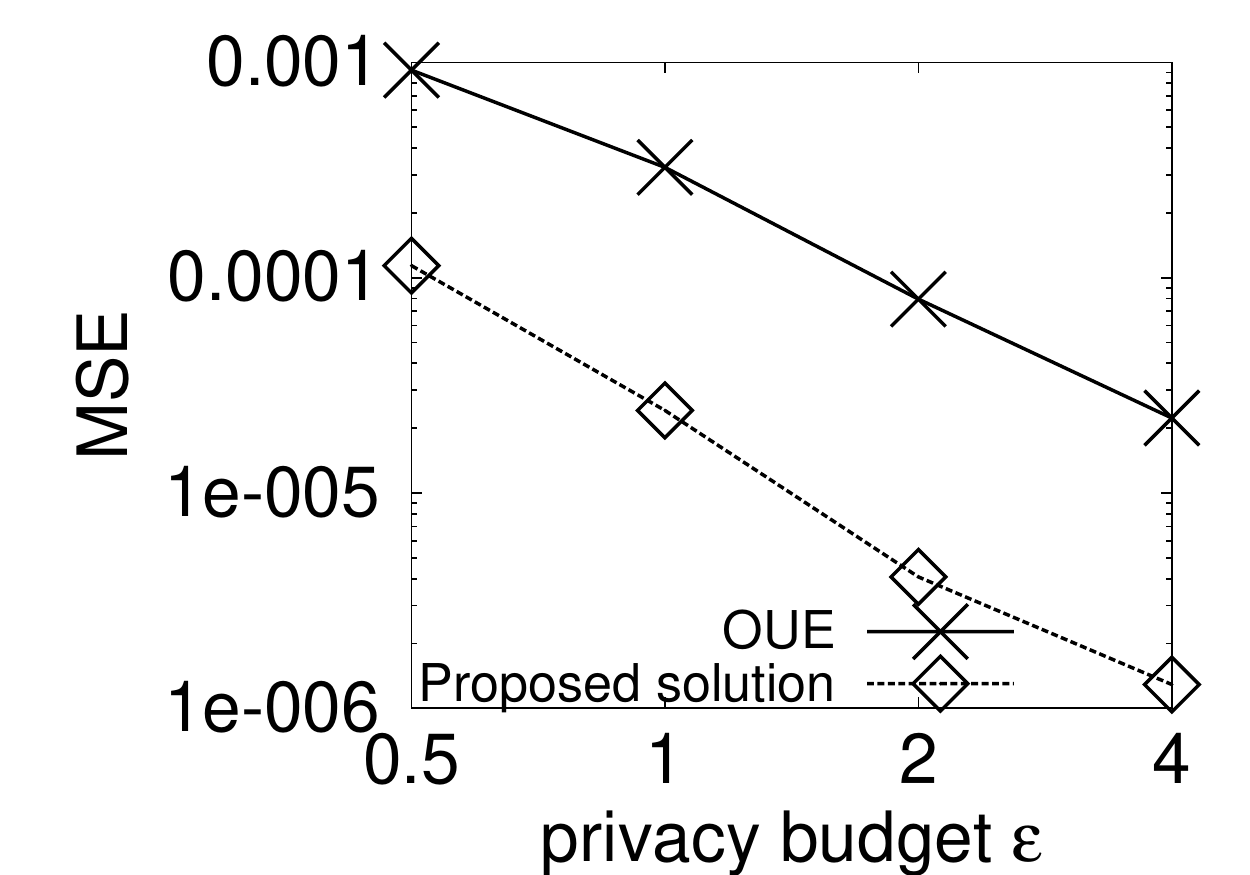} \\
    (a) BR-Numeric & (b) MX-Numeric & (c) BR-Categorical & (d) MX-Categorical
   \end{tabular}\vspace{-5pt}
  \caption{Result accuracy for mean estimation (on numeric attributes) and frequency estimation (on categorical attributes).
  \vspace{-2mm}}
  \label{fig:exp:mean-freq} 
\end{figure*}

\begin{figure*}
  \centering
  \footnotesize
  \begin{tabular}{cccc}
  \multicolumn{4}{c}{\includegraphics[width=0.4\textwidth]{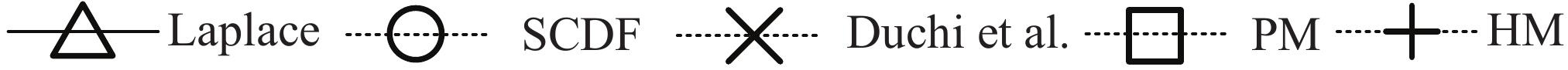}}\\
    \hspace{-3mm}\includegraphics[width=0.23\textwidth]{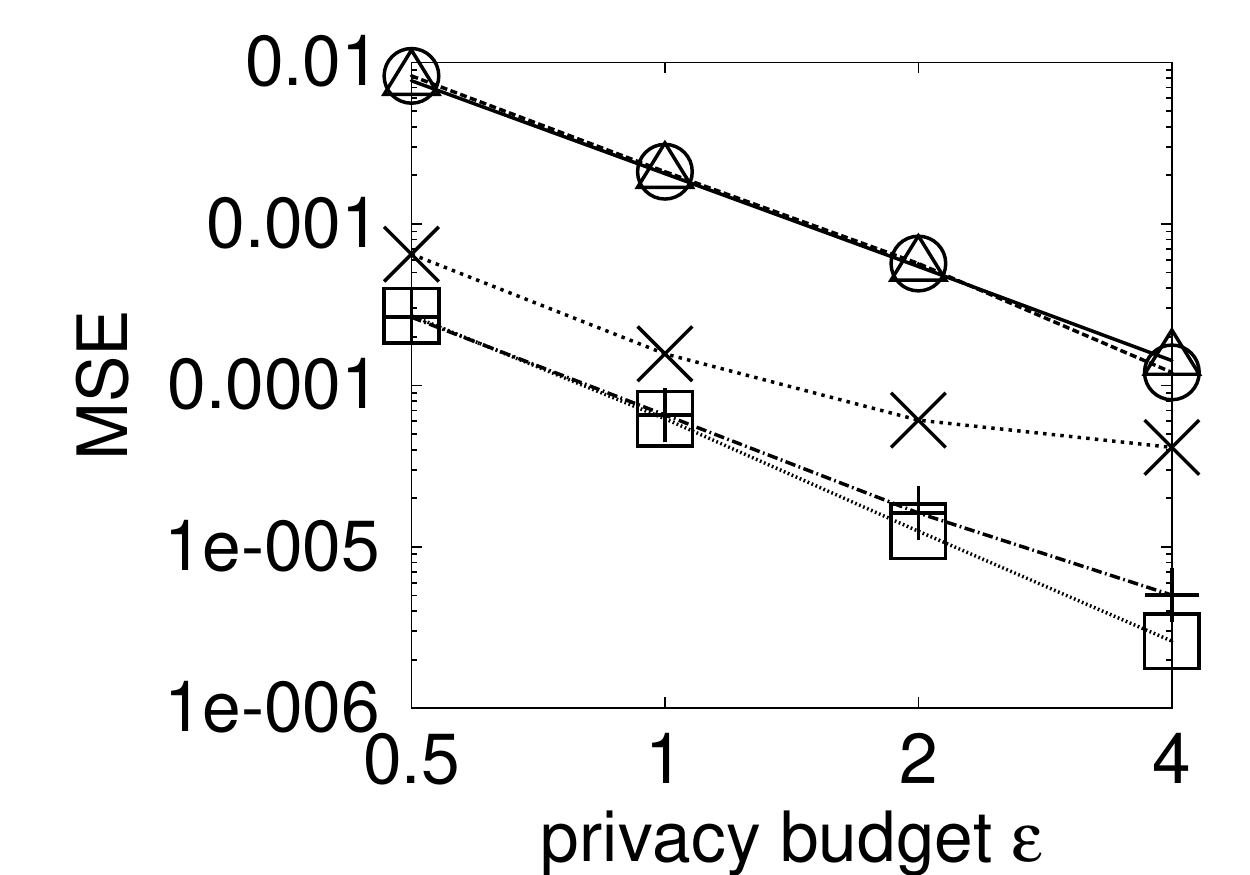} &
    \hspace{-4mm}\includegraphics[width=0.23\textwidth]{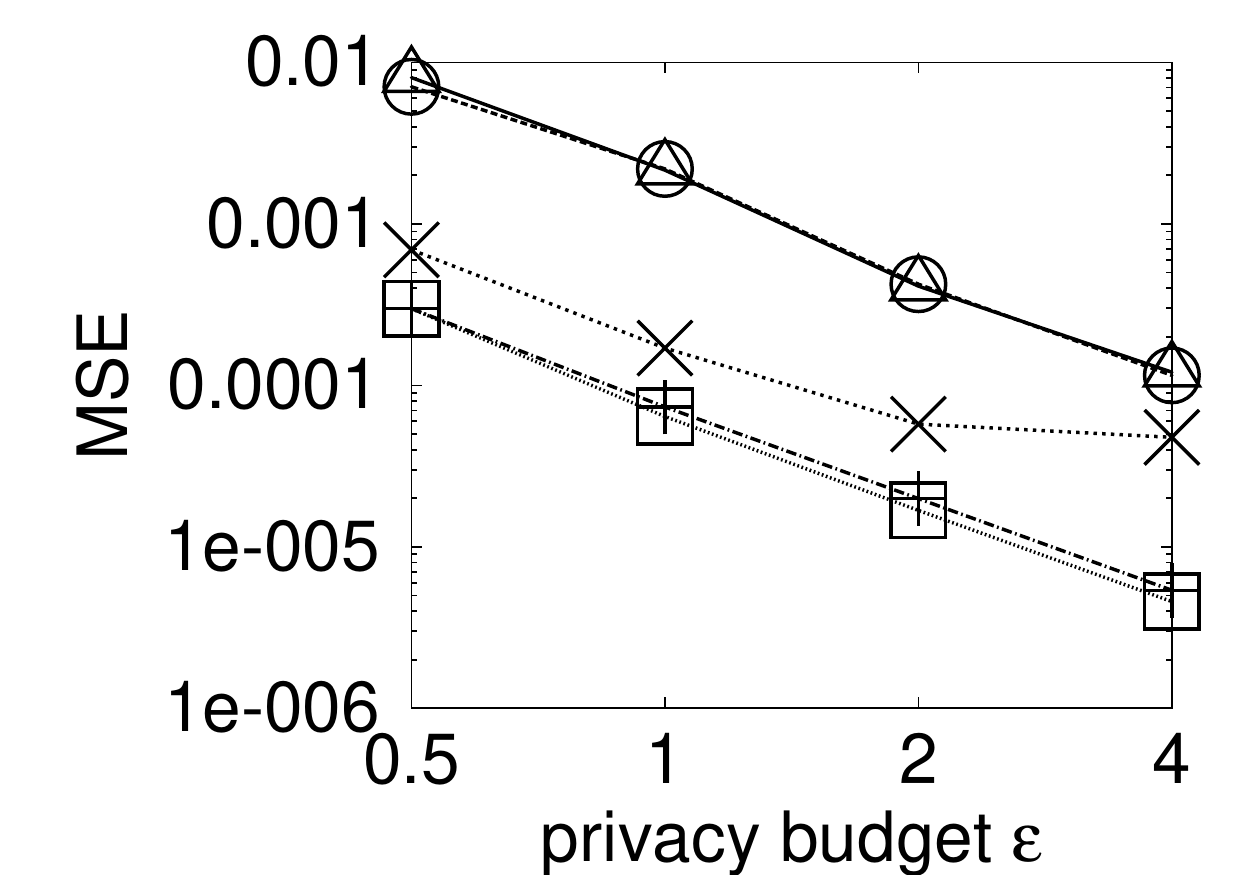} &
    \hspace{-4mm}\includegraphics[width=0.23\textwidth]{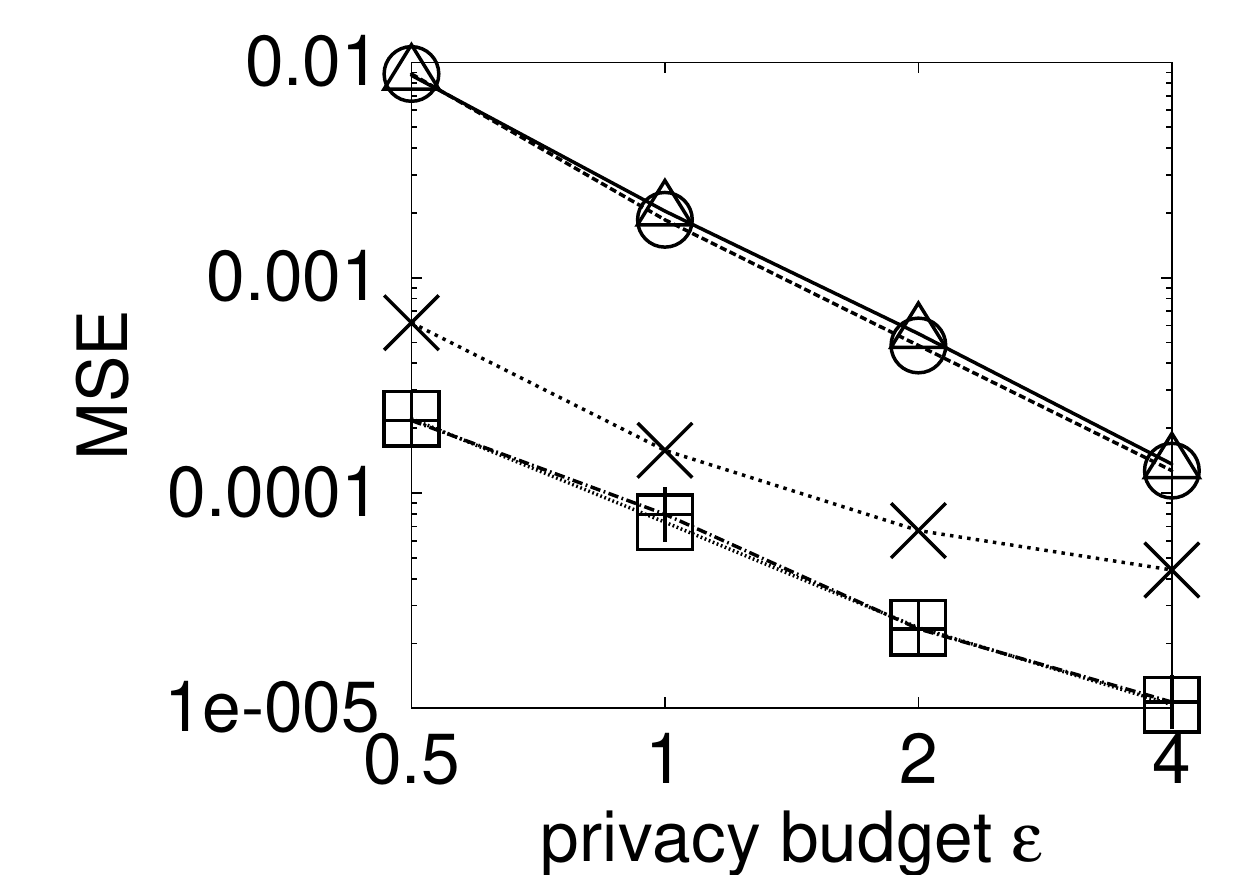} &
    \hspace{-4mm}\includegraphics[width=0.23\textwidth]{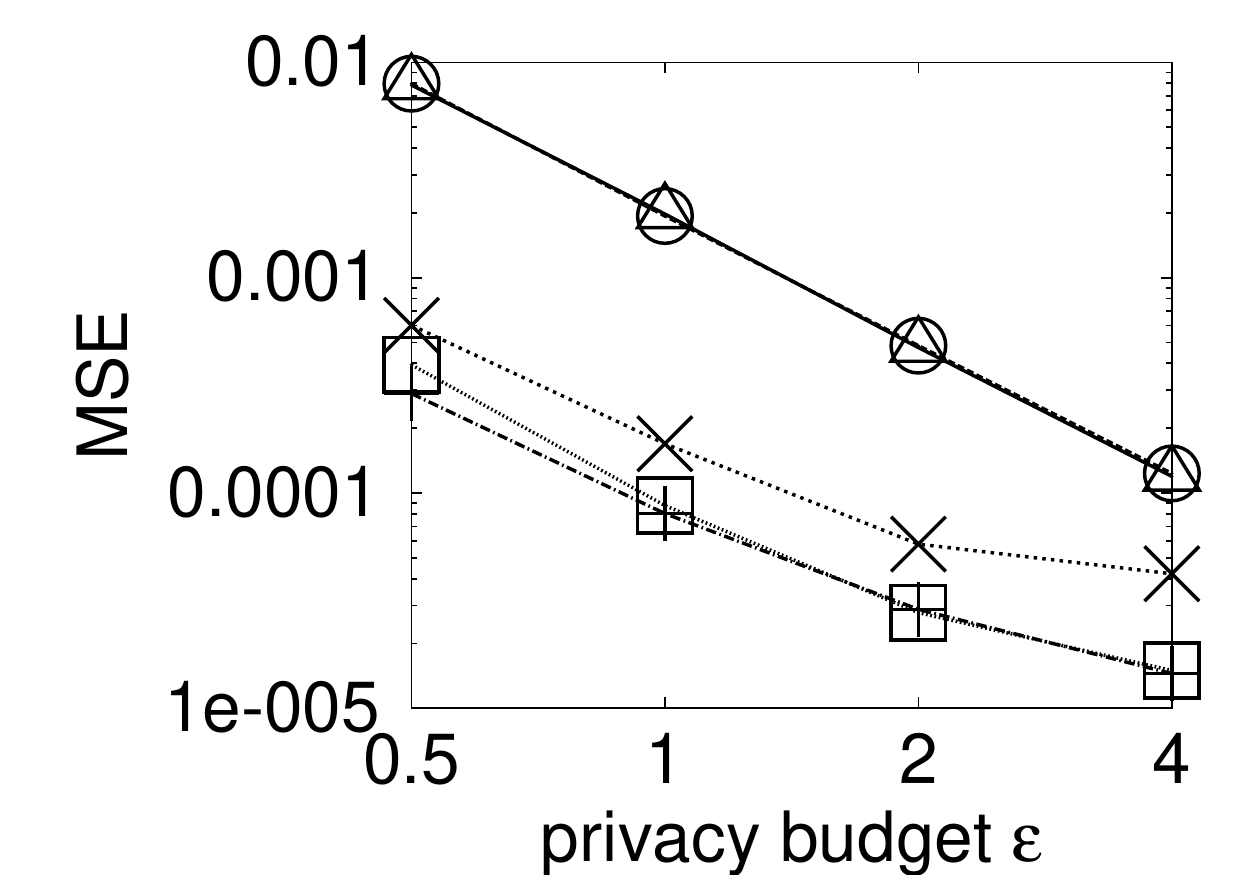} \\
    (a) $\mu = 0$ & (b) $\mu = 1/3$ & (c) $\mu = 2/3$ & (d) $\mu=1$
   \end{tabular}\vspace{-5pt}
  \caption{Result accuracy on synthetic datasets with $16$ dimensions, each of which follows a Gaussian distribution $N(\mu, 1/16)$ truncated to $[-1, 1]$.
  \vspace{-6mm}
  }
  \label{fig:exp:sythetic} 
\end{figure*}

We have implemented the proposed methods and evaluated them using two public datasets extracted from the \textit{Integrated Public Use Microdata Series} \cite{IPUMS}, BR and MX, which contains census records from Brazil and Mexico, respectively. BR contains $4$M tuples and $16$ attributes, among which $6$ are numerical (e.g., age) and $10$ are categorical (e.g., gender); 
MX has $4$M records and $19$ attributes, among which $5$ are numerical and $14$ are categorical. Both datasets contain a numerical attribute ``total income'', which we use as the dependent attribute in linear regression, logistic regression, and SVM (explained further in Section~\ref{sec:exp:erm}). We normalize the domain of each numerical attribute into $[-1, 1]$. In all experiments, we report average results over $100$ runs.

\subsection{Results on Mean Value / Frequency Estimation} \label{sec:exp:mean-freq}

In the first set of experiments, we consider the task of collecting a noisy, multidimensional tuple from each user, in order to estimate the mean of each numerical attribute and the frequency of each categorical value. Since no existing solution can directly support this task, we take the following best-effort approach combining state-of-the-art solutions through the composition property of differential privacy \cite{McSherryT07}. Specifically, let $t$ be a tuple with $d_n$ numeric attributes and $d_c$ categorical attributes. Given total privacy budget $\epsilon$, we allocate $d_n \epsilon/d$ budget to the numeric attributes, and $d_c\epsilon/d$ to the categorical ones, respectively. Then, for the numeric attributes, we estimate the mean value for each of them using either (i) Duchi et al.'s solution (i.e., Algorithm~\ref{alg:duchimulti}), which directly handles multiple numeric attributes, (ii) the Laplace mechanism or (iii) SCDF \cite{Soria-ComasD13}, which is applied to each numeric attribute individually using $\epsilon/d$ budget. The Staircase mechanism leads to similar performance as SCDF, and we omit its results for brevity.
Regarding categorical attributes, since no previous solution addresses the multidimensional case, we apply the optimized unary encoding (OUE) protocol of Wang~et~al.~\cite{lininghui}, the state of the art for frequency estimation on a single categorical attribute, to each attribute independently with $\epsilon/d$ budget. Clearly, by the composition property of differential privacy \cite{McSherryT07}, the above approach satisfies $\epsilon$-LDP.

\begin{figure}
  \centering
  \footnotesize
  \begin{tabular}{cc}
  \multicolumn{2}{c}{\includegraphics[width=0.4\textwidth]{key1.pdf}}\\
    \hspace{-3mm}\includegraphics[width=0.23\textwidth]{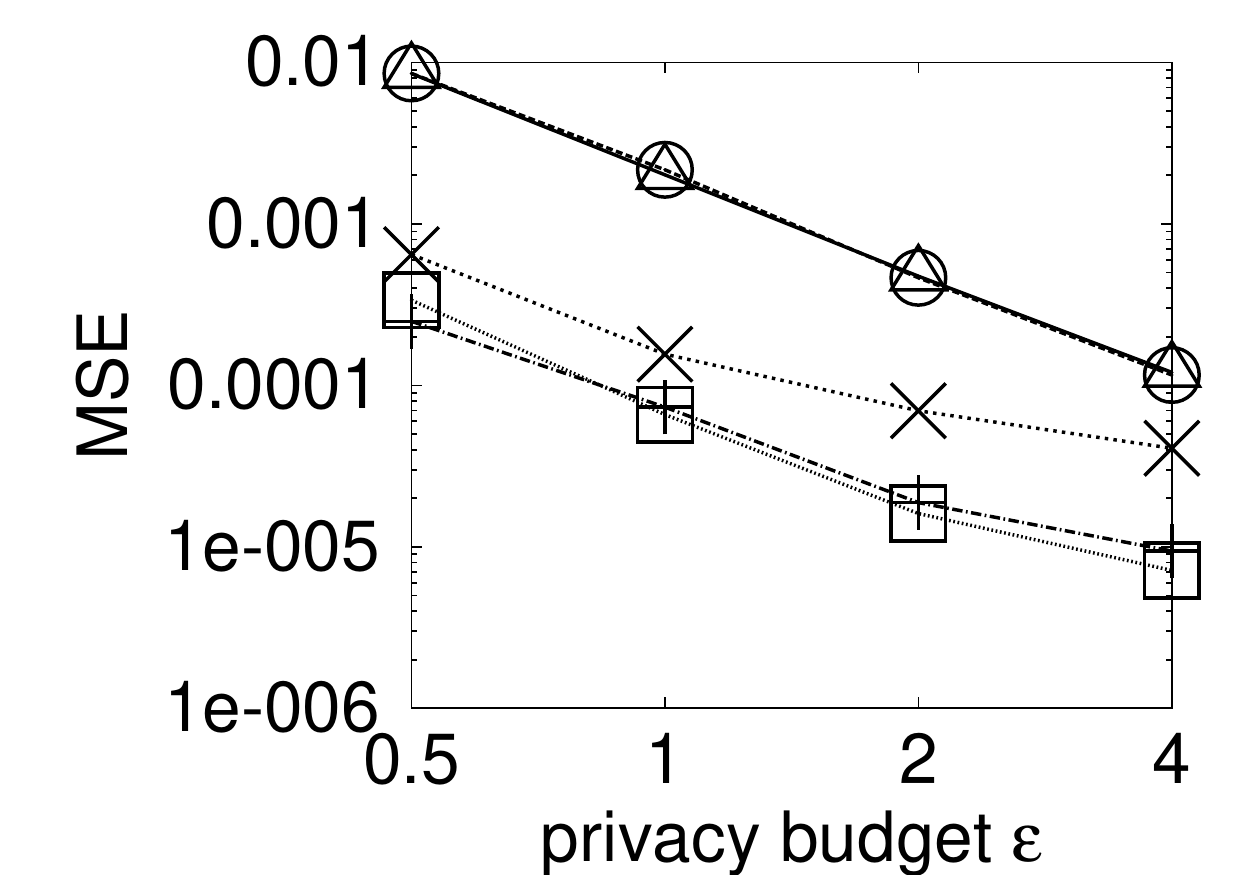} &
    \hspace{-4mm}\includegraphics[width=0.23\textwidth]{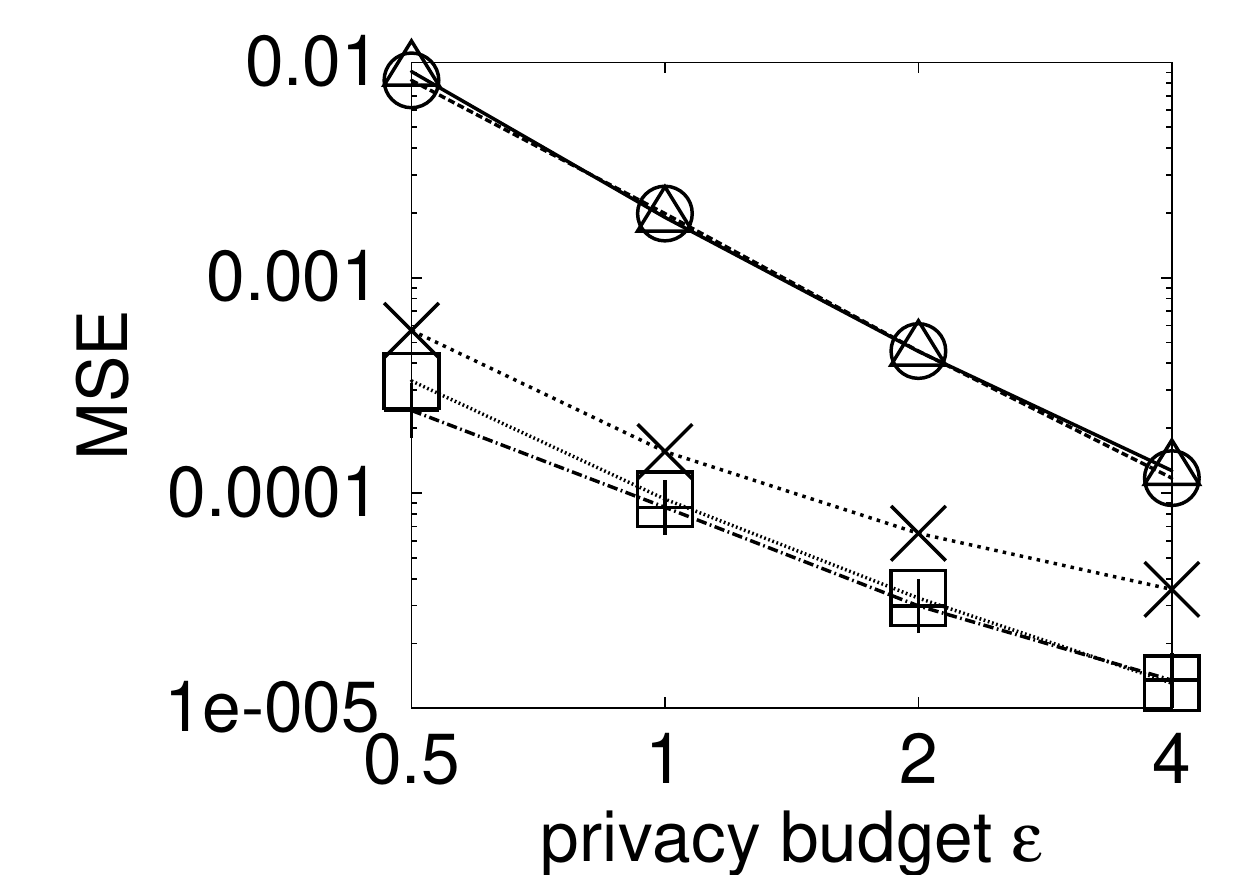} \\
    (a) Uniform distribution & (b) Power law distribution \\
   \end{tabular}\vspace{-2pt}
  \caption{Result accuracy vs. privacy budget on different data distributions. \vspace{-4mm}}
  \label{fig:exp:distribution} 
\end{figure}

We evaluate both the above best-effort approach using existing methods, and the proposed solution in Section~\ref{sec:multi}, on the two real datasets BR and MX. For each method, we measure the mean square error (MSE) in the estimated mean values (for numeric attributes) and value frequencies (for categorical attributes). Fig.~\ref{fig:exp:mean-freq} plots the MSE results as a function of the total privacy budget $\epsilon$. Overall, the proposed solution consistently and significantly outperforms the best-effort approach combining existing methods. One major reason is that the estimation error of the proposed solution is asymptotically optimal, which scales sublinearly to the data dimensionality $d$; in contrast, the best-effort combination of existing approaches involves privacy budget splitting, which is sub-optimal. For instance, on the categorical attributes, applying OUE \cite{lininghui} on each attribute individually leads to $O\left(\frac{d}{\epsilon \sqrt{n}}\right)$ error (where $n$ is the number of users), which grows linearly with data dimensionality $d$. This also explains the consistent performance gap between Duchi~et~al.'s solution \cite{DuchiJW18} and the Laplace mechanism (SCDF mechanism) on numeric attributes. 

Meanwhile, on numeric attributes, the proposed solutions PM and HM outperform Duchi~et~al.'s solution in all settings. This is because (i) although all three methods are asymptotically optimal, Duchi~et~al.'s solution incurs a larger constant than the proposed algorithms and (ii) Duchi~et~al.'s solution cannot handle categorical attributes, and, thus, needs to be combined with OUE through privacy budget splitting, which is sub-optimal. To eliminate the effect of (ii), we run an additional set of experiments with only numeric attributes on several synthetic datasets. Specifically, the first synthetic data contains 16 numeric attributes (i.e., same number of attributes in BR), where each attribute value is generated from a Gaussian distribution with mean value $u$ and standard deviation of $1/4$, but discarding any value that fall out of $[-1, 1]$. 
Fig.~\ref{fig:exp:sythetic} shows the results with varying privacy budget $\epsilon$, and 4 different values for $u$. In all settings, PM and HM outperform Duchi~et~al.'s solution, and the performance gap slightly expands with increasing $\epsilon$, which agrees with our analysis in Section~\ref{sec:multi}. Finally, comparing PM and HM, the difference in their performance is small, and the relative performance of the two can be different in different settings. Note that the main advantage of HM over PM is that on a single numeric attribute, HM is never worse than Duchi et al., whereas PM does not have this guarantee.

We repeat the experiments on two additional synthetic datasets with the same properties as the first synthetic one, except that their attribute values are drawn from different distributions. One follows the uniform distribution where each attribute value is sampled from $[-1, 1]$ uniformly; the other one follows the power law distribution where each attribute value $x$ is sampled from $[-1, 1]$ with probability proportional to $c\cdot (x + 2)^{-10}$. Fig.~\ref{fig:exp:distribution} presents the results, which lead to similar conclusions as the results on real and Gaussian-distributed data.

\begin{figure}
  \centering
  \footnotesize
  \begin{tabular}{cc}
  \multicolumn{2}{c}{}\\
    \hspace{-3mm}\includegraphics[width=0.23\textwidth]{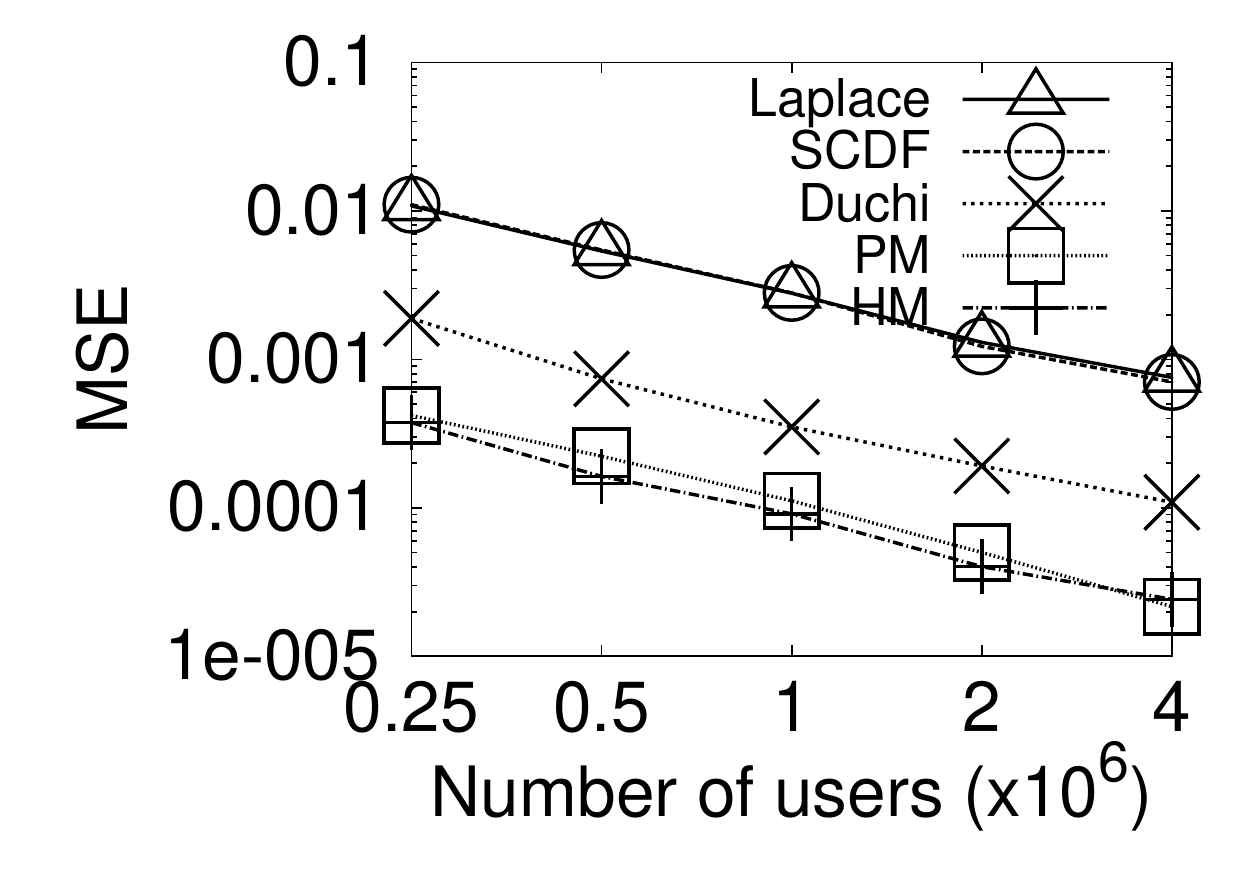} &
    \hspace{-4mm}\includegraphics[width=0.23\textwidth]{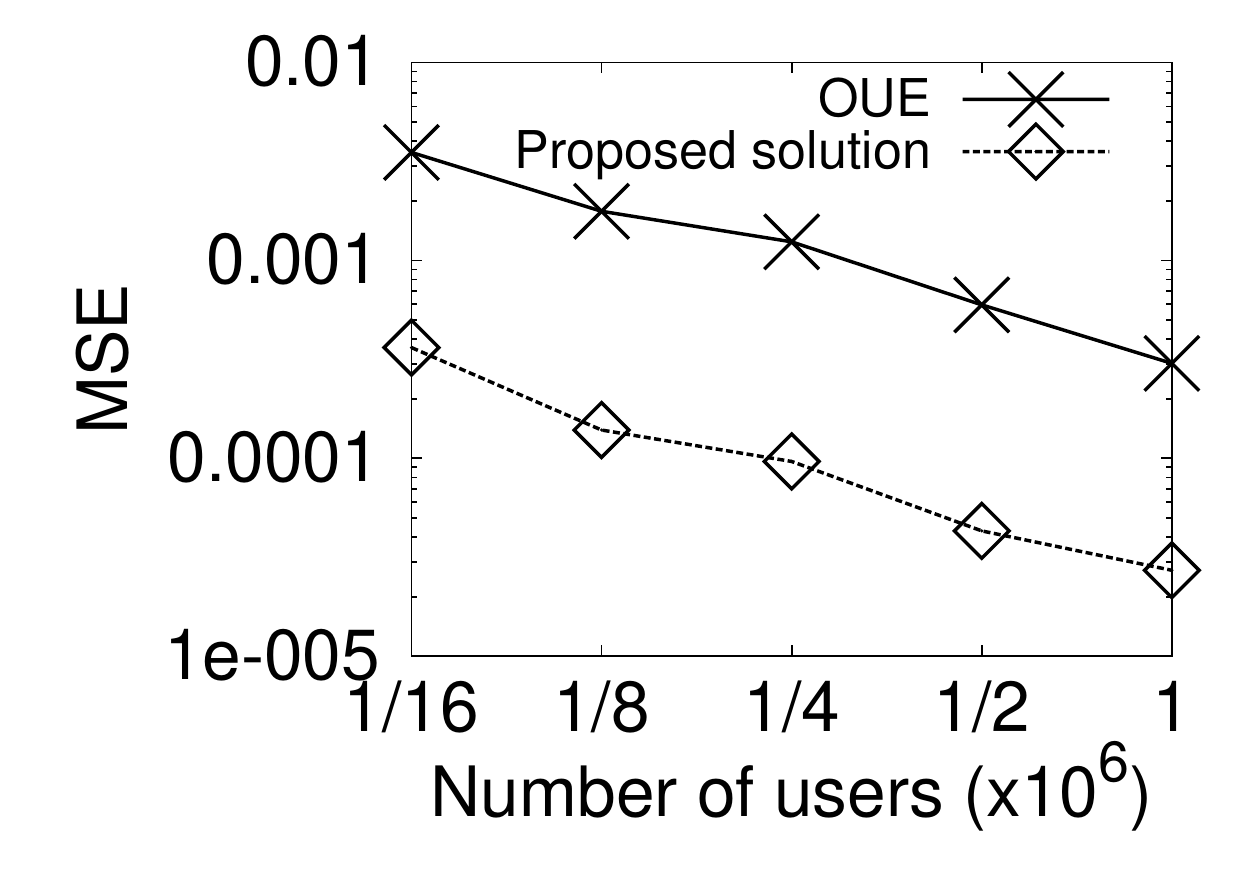} \\
    (a) Numeric & (b) Categorical \\
   \end{tabular}\vspace{-2pt}
  \caption{Result accuracy vs. number of users. \vspace{-4mm}}
  \label{fig:exp:varyn} 
\end{figure}

\begin{figure}
  \centering
  \footnotesize
  \begin{tabular}{cc}
  \multicolumn{2}{c}{}\\
    \hspace{-3mm}\includegraphics[width=0.23\textwidth]{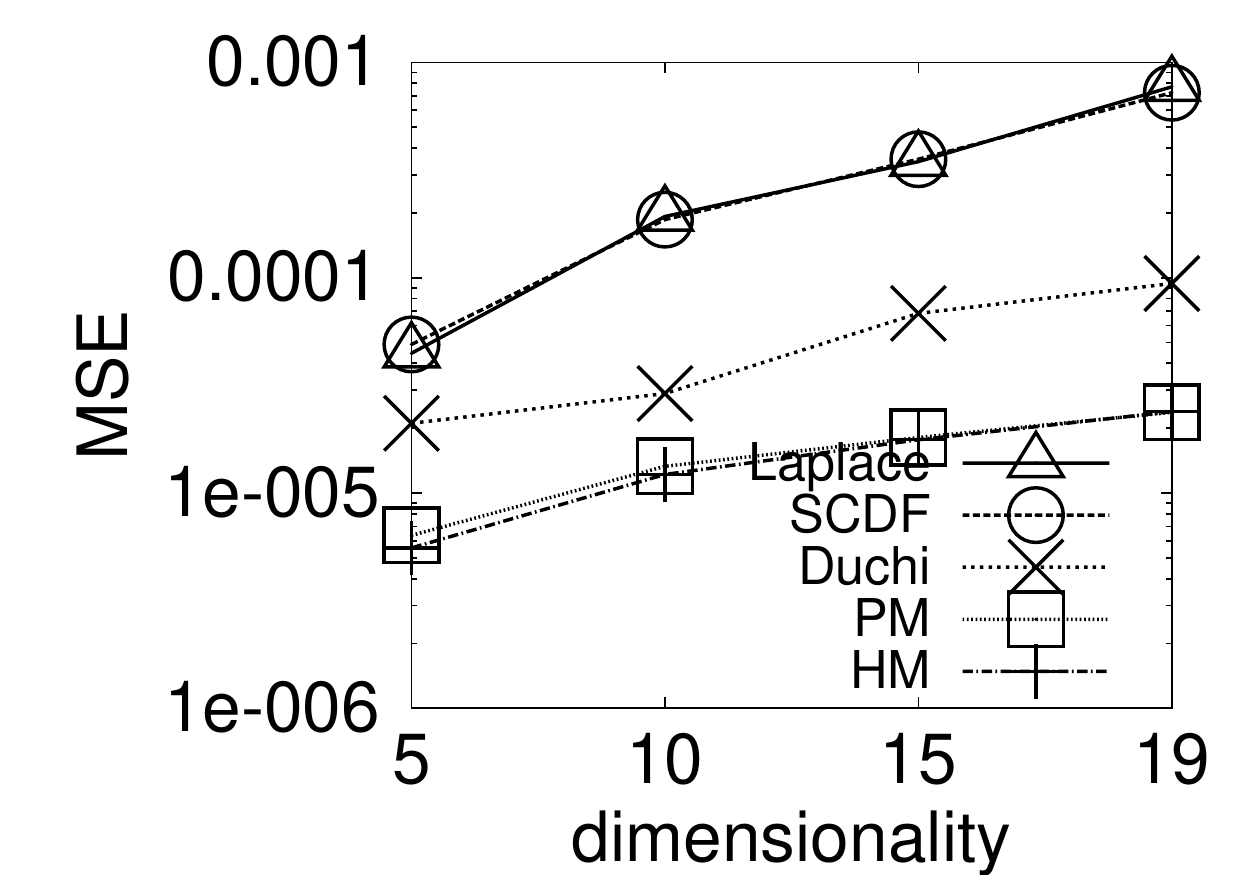} &
    \hspace{-4mm}\includegraphics[width=0.23\textwidth]{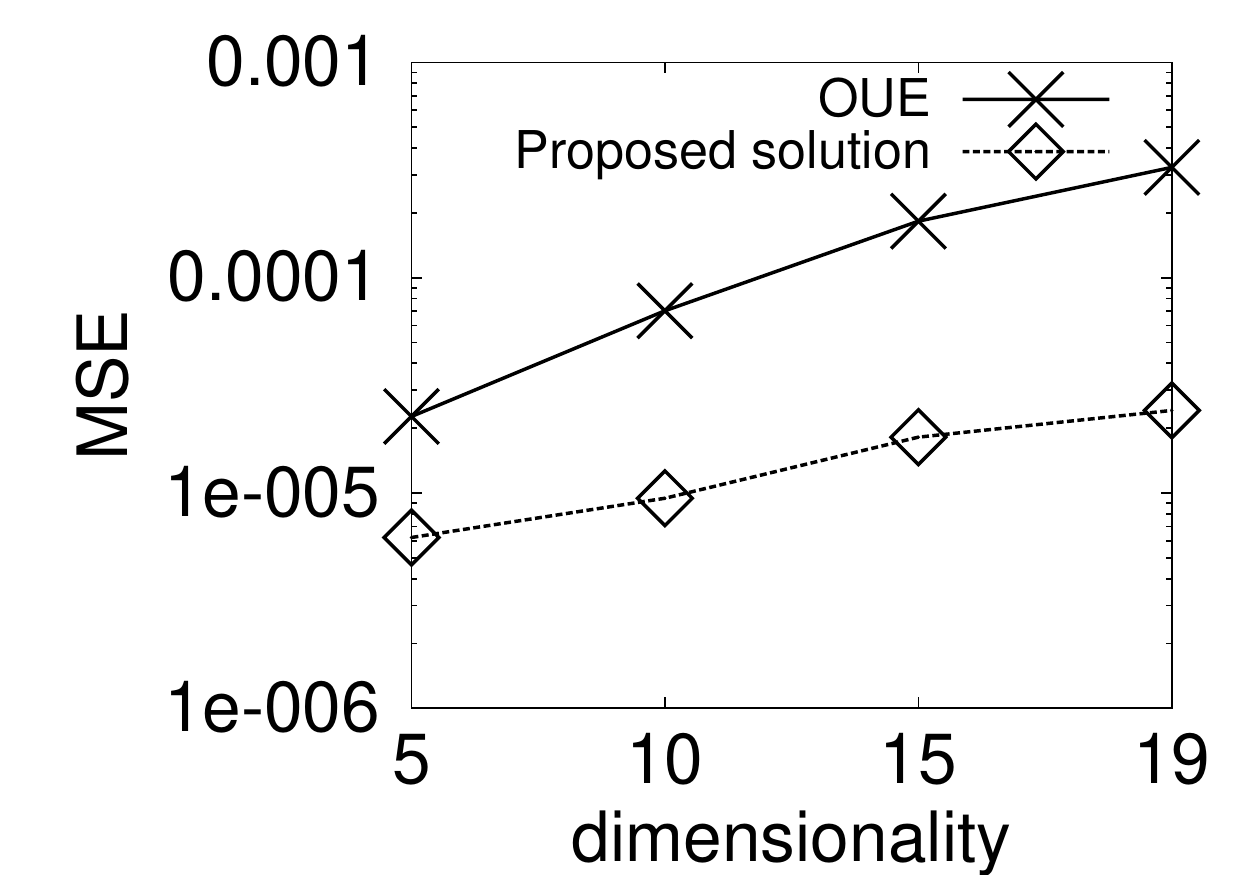} \\
    (a) Numeric & (b) Categorical \\
   \end{tabular}\vspace{-2pt}
  \caption{Result accuracy vs. dimensionality. \vspace{-4mm}}
  \label{fig:exp:varyd} 
\end{figure}

Lastly, Figs.~\ref{fig:exp:varyn} and ~\ref{fig:exp:varyd} show the result accuracy in terms of MSE with varying the number of users $n$ and dimensionality $d$ on the MX dataset. Observe that more users and lower dimensionality both lead to more accurate results, which agrees with the theoretical analysis in Lemma \ref{lmm:multi-accuracy}. Meanwhile, in all settings the proposed solutions consistently outperform their competitors by clear margins. In the next subsection, we omit the results for SCDF, which are comparable to that of the Laplace mechanism.

\subsection{Results on Empirical Risk Minimization}\label{sec:exp:erm}

In the second set of experiments, we evaluate the accuracy performance of the proposed methods for linear regression, logistic regression, and SVM classification on BR and MX. For both datasets, we use the numeric attribute ``total income'' as the dependent variable, and all other attributes as independent variables. Following common practice, we transform each categorical attribute $A_j$ with $k$ values into $k-1$ binary attributes with a domain $\{0, 1\}$, such that (i) the $l$-th ($l < k$) value in $A_j$ is represented by $1$ on the $l$-th binary attribute and $0$ on each of the remaining $k-2$ attributes, and (ii) the $k$-th value in $A_j$ is represented by $0$ on all binary attributes. After this transformation, the dimensionality of BR (resp.\ MX) becomes $90$ (resp.\ $94$). 
For logistic regression and SVM, we also covert ``total income'' into a binary attribute by mapping the values larger than the mean value to 1, and 0 otherwise.

Since each user sends gradients to the aggregator, which are all numeric, the experiment involves the 4 competitors in Section \ref{sec:exp:mean-freq} for numeric data: PM, HM, Duchi~et~al.~\cite{DuchiJW18}, and the Laplace mechanism applied to each attribute independently with equally split privacy budget (i.e., $\epsilon/d$ for each attribute). Additionally, we also include the result in the non-private setting. For all methods, we set the regularization factor $\lambda = 10^{-4}$. On each dataset, we use $10$-fold cross validation 5 times to assess the performance of each method. 

Fig.~\ref{fig:exp:lr} and Fig.~\ref{fig:exp:svm} show the misclassification rate of each method for logistic regression and SVM classification, respectively, with varying values of the privacy budget $\epsilon$. Similar to the results in Section \ref{sec:exp:mean-freq}, the Laplace mechanism leads to significantly higher than the other three solutions, due to the fact that its error rate is sub-optimal. The proposed algorithms PM and HM consistently outperform Duchi~et~al.'s solution with clear margins, since (i) the former two have smaller constant as analyzed in Section \ref{sec:multi}, and (ii) the gradient of each user often consists of elements whose absolute values are small, for which PM and HM are particularly effective, as we mention in Section~\ref{sec:PM}. Further, in some settings such as SVM with $\epsilon \geq 2$ on BR, the accuracy of PM and HM approaches that of the non-private method. Comparing the results with those in Section \ref{sec:exp:mean-freq}, we observe that the misclassification rates for logistic regression and SVM classification do not drop as quickly with increasing privacy budget $\epsilon$ as in the case of MSE for mean values and frequency estimates. This is due to the inherent stochastic nature of SGD: that accuracy in gradients does not have a direct effect on the accuracy of the model. For the same reason, there is no clear trend for the performance gap between PM/HM and Duchi~et~al.'s solution.

Fig.~\ref{fig:exp:linearr} demonstrates the mean squared error (MSE) of the linear regression model generated by each method with varying $\epsilon$. We omit the MSE results for the Laplace mechanism, since they are far higher than the other three methods. The proposed solutions PM and HM once again consistently outperform Duchi~et~al.'s solution. Overall, our experimental results demonstrate the effectiveness of PM and HM for empirical risk minimization under local differential privacy, and their consistent performance advantage over existing approaches.

\begin{figure}
  \centering
  \footnotesize
  \begin{tabular}{cc}
  \multicolumn{2}{c}{\includegraphics[width=0.4\textwidth]{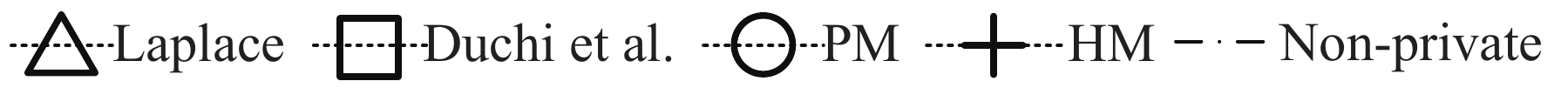}}\\
    \hspace{-3mm}\includegraphics[width=0.23\textwidth]{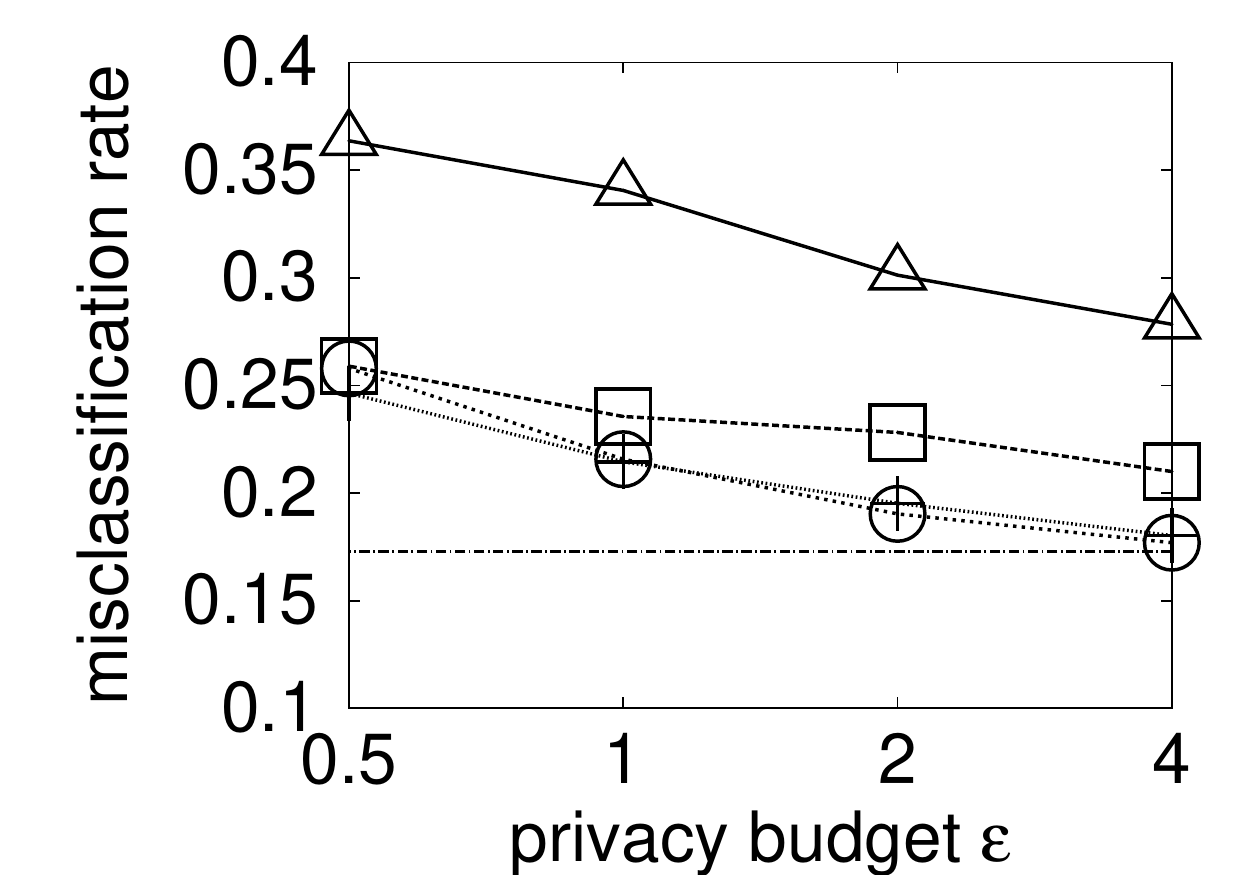} &
    \hspace{-4mm}\includegraphics[width=0.23\textwidth]{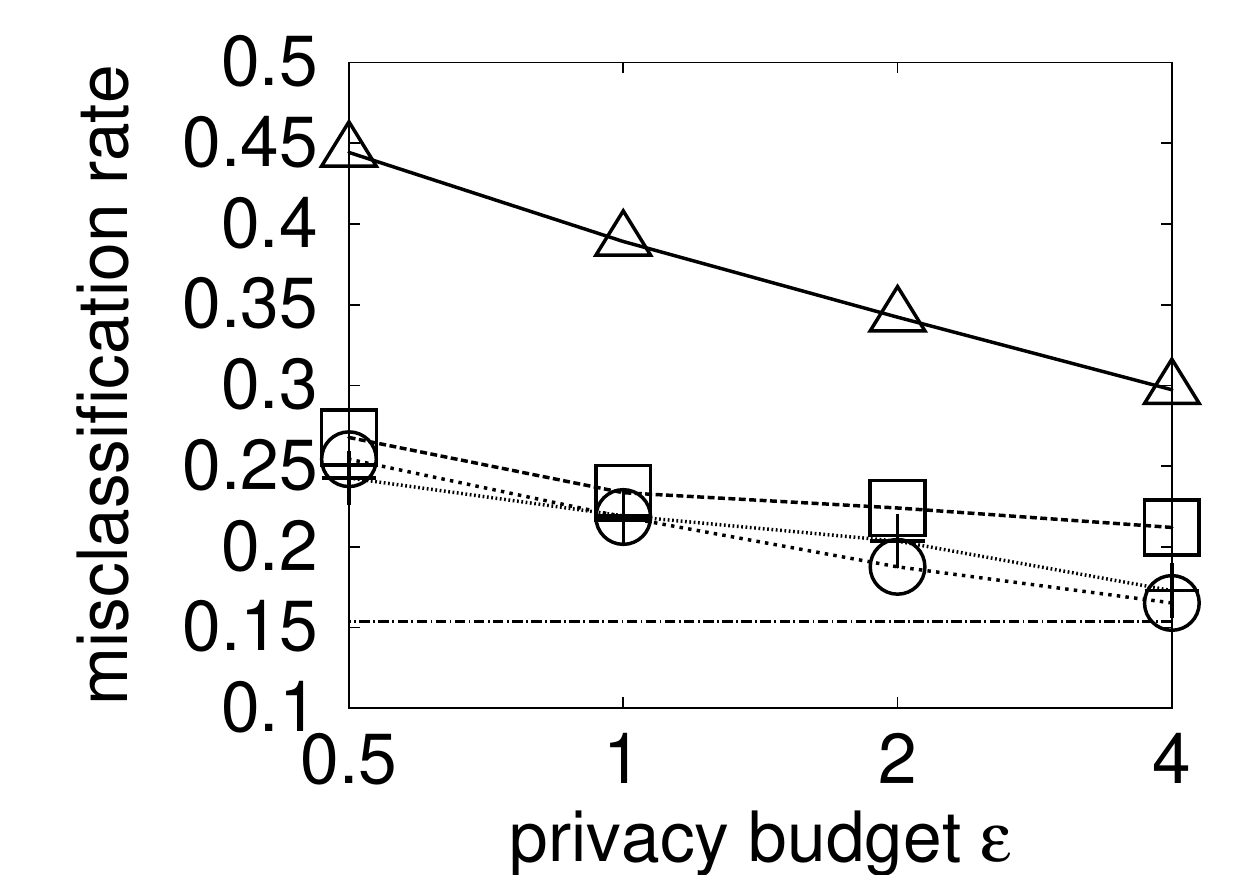} \\
    (a) BR & (b) MX \\

   \end{tabular}\vspace{-2pt}
  \caption{Logistic Regression.  \vspace{-4mm}}
  \label{fig:exp:lr} 
\end{figure}

\begin{figure}
  \centering
  \footnotesize
  \begin{tabular}{cc}
  \multicolumn{2}{c}{\includegraphics[width=0.4\textwidth]{key3.pdf}}\\
    \hspace{-3mm}\includegraphics[width=0.23\textwidth]{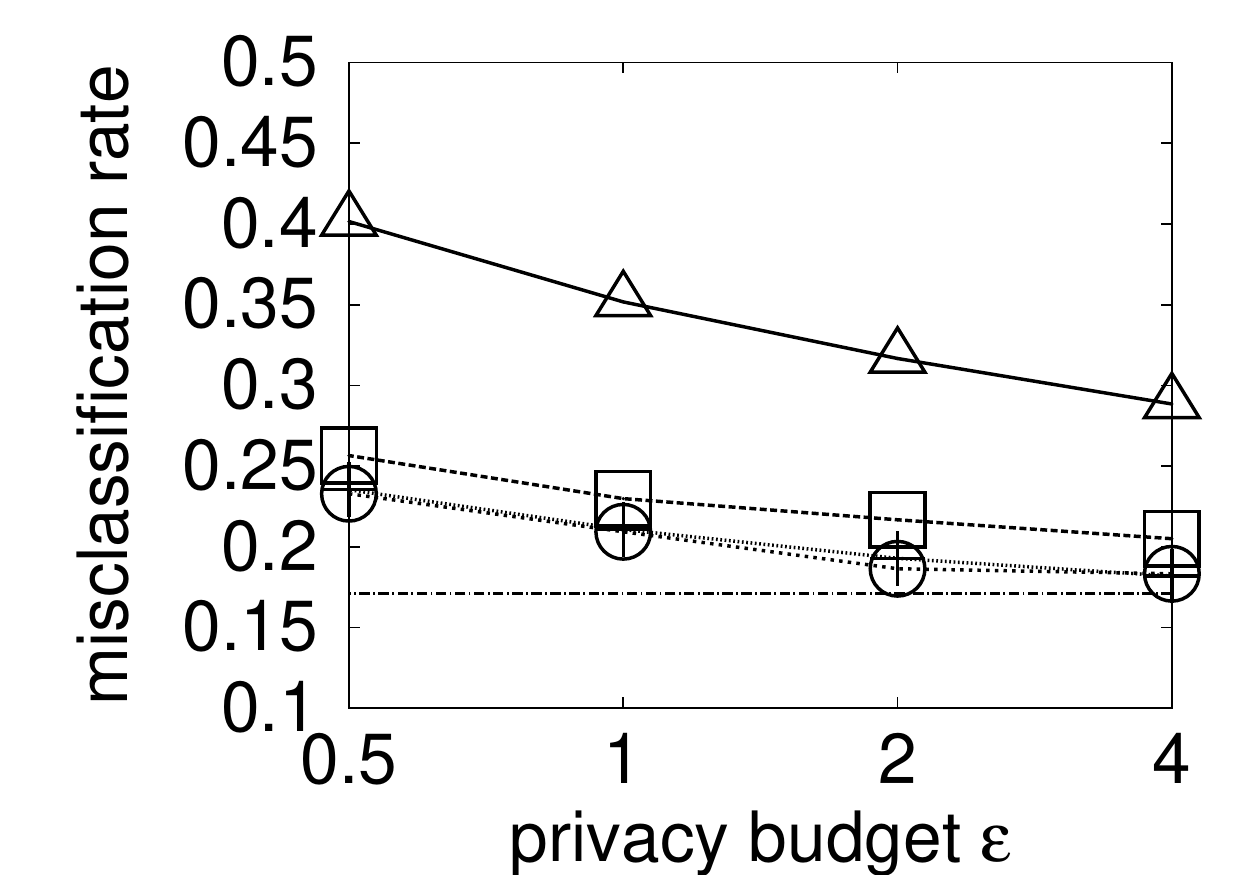} &
    \hspace{-4mm}\includegraphics[width=0.23\textwidth]{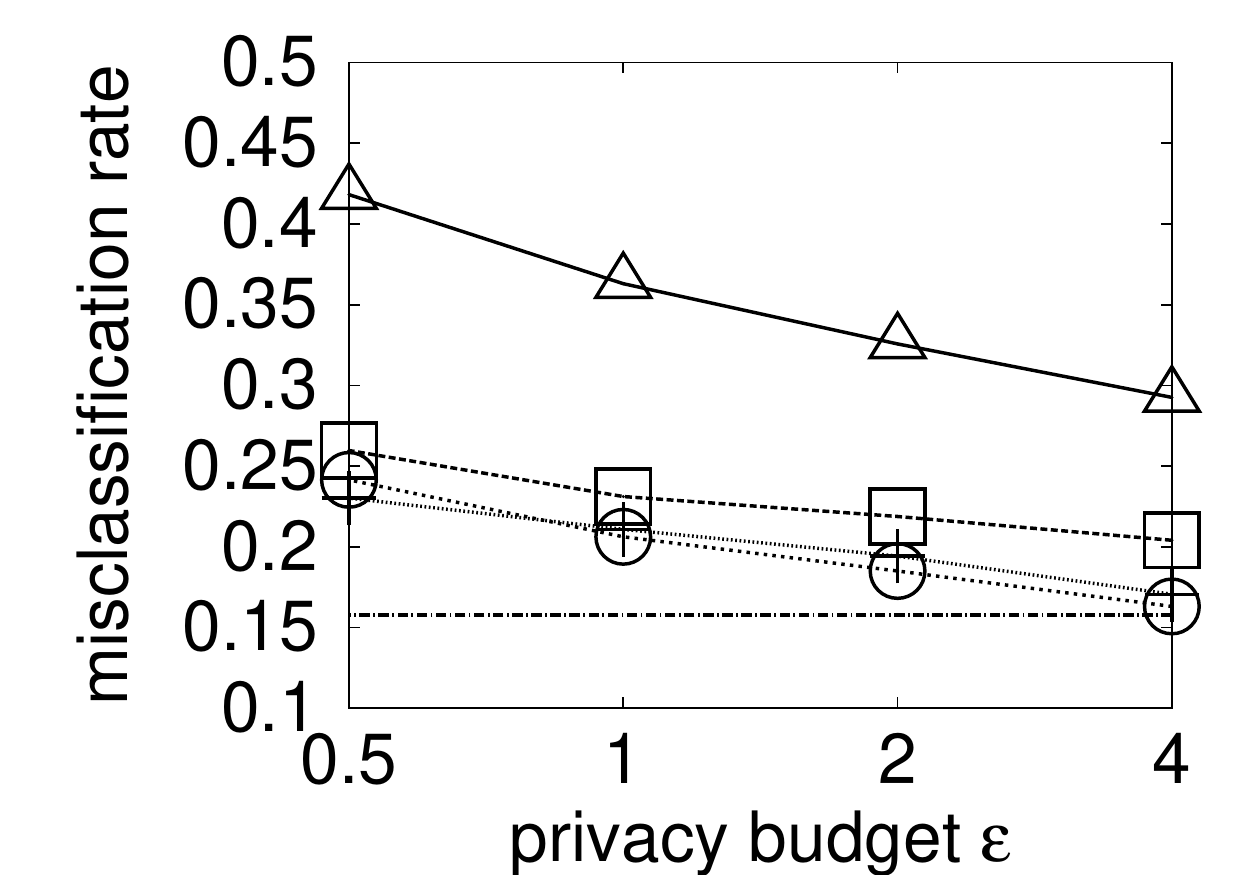} \\
    (a) BR & (b) MX \\
   \end{tabular}\vspace{-2pt}
  \caption{Support Vector Machines
(SVM). \vspace{-4mm}}
  \label{fig:exp:svm} 
\end{figure}

\begin{figure}
  \centering
  \footnotesize
  \begin{tabular}{cc}
  \multicolumn{2}{c}{\includegraphics[width=0.4\textwidth]{key3.pdf}}\\
    \hspace{-3mm}\includegraphics[width=0.23\textwidth]{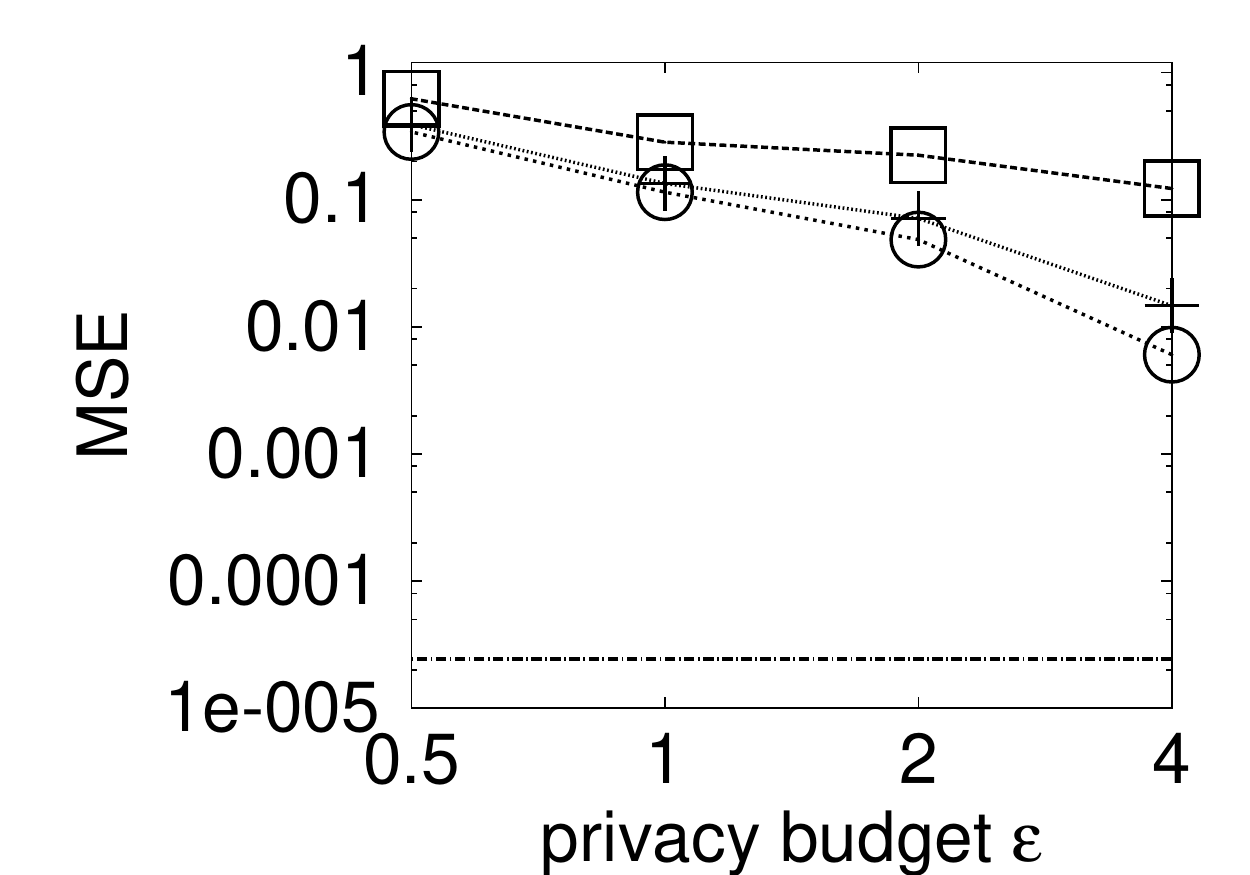} &
    \hspace{-4mm}\includegraphics[width=0.23\textwidth]{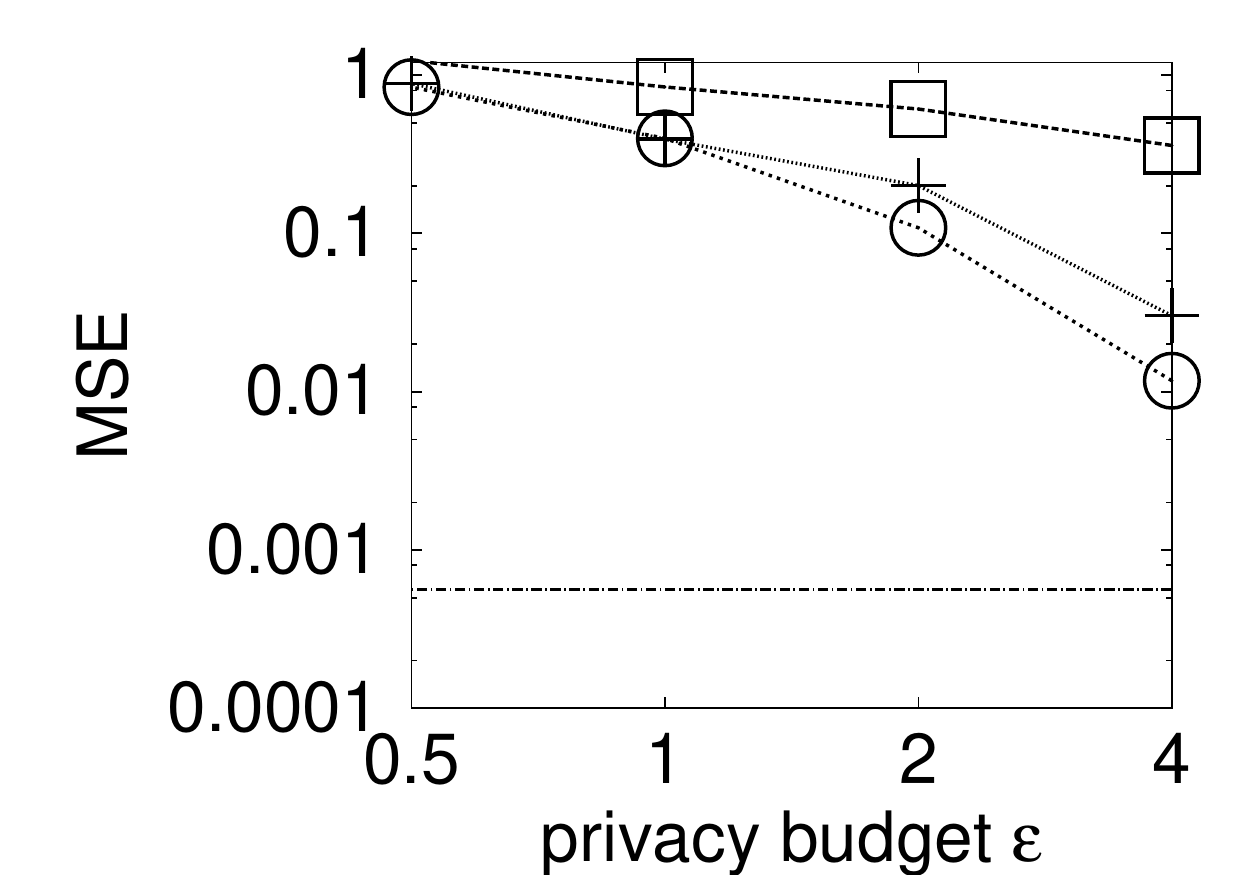} \\
    (a) BR & (b) MX \\
   \end{tabular}\vspace{-2pt}
  \caption{Linear Regression. \vspace{-4mm}}
  \label{fig:exp:linearr} 
\end{figure}
\section{Related Work} \label{sec:related}

Differential privacy \cite{DworkLaplace} is a strong privacy standard that provides semantic, information-theoretic guarantees on individuals' privacy, which has attracted much attention from various fields,
including data management \cite{ChenLQKJ16,cormode2018marginal,krishnan2016privateclean}, machine learning \cite{bassily2017practical},
theory \cite{BS15,duchi2013local,kasiviswanathan2008can},
and systems \cite{bittau2017prochlo}. Earlier models of differential privacy \cite{DworkLaplace,DworkR14,McSherryT07} rely on a trusted data curator, who collects and manages the exact private information of individuals, and releases statistics derived from the data under differential privacy requirements.
Recently, 
much attention has been shifted to the local differential privacy (LDP) model (e.g., \cite{kasiviswanathan2008can, duchi2013local}), which eliminates the data curator and the collection of exact private information.

LDP can be connected to the classical randomized response technique in surveys~\cite{Warner65}. 
Erlingsson~et~al.~\cite{RAPPOR2014} propose the RAPPOR framework, which is based on the randomized response mechanism for publishing a value for binary attributes under LDP. They use this mechanism with a Bloom filter, which intuitively adds another level of protection and increases the difficulty for the adversary to infer private information.
A follow-up paper \cite{fanti2016building} extends RAPPOR to more complex statistics such as joint-distributions and association testing, as well as categorical attributes that contain a large number of potential values, such as a user's home page. 
Wang~et~al.~\cite{lininghui} investigate the same problem, and propose a different method: they transform $k$ possible values into a noisy vector with $k$ elements, and  send the latter to curator. Bassily and Smith~\cite{BS15} propose an asymptotically optimal solution for building succinct histograms over a large categorical domain under LDP. 
Note that all of the above methods focus on a single categorical attribute, and, thus, are orthogonal to our work on multidimensional data including numeric attributes.
Ren~et~al.~\cite{ren2018lopub} investigate the problem of publishing multiple attributes, and employ the idea of $k$-sized vector, similar to \cite{lininghui}. This approach, however, incurs rather high communication costs between the aggregator and the users, since it involves the transmission of multiple $k$-sized vectors.
Duchi~et~al.~\cite{duchi2013local}
propose the minimax framework for LDP based on information theory, prove upper and lower error bounds of LDP-compliant methods, and analyze the trade-off between privacy and accuracy. Besides,
Kairouz~et~al.~\cite{kairouz2014extremal} propose the extremal mechanisms, which are a family of LDP mechanisms for data with discrete inputs, i.e., each input domain $\mathcal{X}$ contains a finite number of possible values. These mechanisms have an output distribution $pdf$ with a key property: for any input $x \in \mathcal{X}$ and any output $y$, $Pr[y \mid x]$ has only two possible values that differ by a factor of $\exp(\epsilon)$. Kairouz et al.\ show that for any given utility measure, there exists an extremal mechanism with optimal utility under this measure, using a linear program with $2^{|\mathcal{X}|}$ variables. It is unclear how to apply extremal mechanisms to continuous input domains with an infinite number of possible values, which is the focus on this paper.


Various data analytics and machine learning problems have been studied under LDP, such as probability distribution estimation~\cite{duchi2013local2,kairouz2016discrete,pastore2016locally,ye2017optimal,murakami2018toward},
heavy hitter discovery~\cite{bassily2017practical,bun2018heavy,QinYYKXR16,wang2017locally}, frequent new term discovery~\cite{NingWang2018}, frequency estimation~\cite{lininghui,BS15},
 frequent itemset mining~\cite{wang2018locally}, marginal release~\cite{cormode2018marginal},  clustering~\cite{NissimStemmer18}, 
 and
hypothesis testing~\cite{gaboardi2017local}.

Finally, a recent work~\cite{avent2017blender}  introduces a hybrid model that involves both centralized and local differential privacy.
Bittau~et~al.~\cite{bittau2017prochlo} evaluate real-world implementations of LDP. Also, LDP has been considered in several applications including the collection of indoor positioning data~\cite{kim2018application}, inference control on mobile
sensing~\cite{liu2017deeprotect}, and the publication of crowdsourced data~\cite{ren2018lopub}.



\section{Conclusion}\label{sec:conclusion}

This work systematically investigates the problem of collecting and analyzing users' personal data under $\epsilon$-local differential privacy, in which the aggregator only collects randomized data from the users, and computes statistics based on such data. The proposed solution is able to collect data records that contain multiple numerical and categorical attributes, and compute accurate statistics from simple ones such as mean and frequency to complex machine learning models such as linear regression, logistic regression and SVM classification. Our solution achieves both optimal asymptotic error bound and high accuracy in practice. 
In the next step, we plan to apply the proposed solution to more complex data analysis tasks such as deep neural networks. 





 \section*{Acknowledgment}

This research was supported by National Natural Science Foundation of China (61672475, 61433008), by Samsung via a GRO grant, by the National Research Foundation, Prime Minister's Office, Singapore under its Strategic Capability Research Centres Funding Initiative, by Qatar National Research Fund (NPRP10-0208-170408), and by Nanyang Technological University Startup Grant (M4082311.020). We thank T.~Nguyen for his contributions to an earlier version of this work.

\balance

\end{document}